\numberwithin{equation}{section}
\newcolumntype{C}{>{$}c<{$}} 
\DeclareSymbolFont{largesymbols}{OMX}{zplm}{m}{n} 
\tikzstyle{wt}=[circle, fill=black, inner sep=2pt, outer sep=0pt, minimum size=5pt]  
\tikzstyle{owt}=[circle, draw=black, fill=white, inner sep=2pt, outer sep=0pt, minimum size=5pt] 
\newcommand{\pd}{\partial}         
\renewcommand{\ge}{\geqslant} 
\renewcommand{\le}{\leqslant} 
\DeclareMathOperator{\ad}{ad}
\DeclareMathOperator{\multiplicity}{mult}
\DeclareMathOperator{\sgn}{sgn}
\DeclareMathOperator{\vspn}{span}
\newcommand{\partitions}[1]{\mathcal{P}_{#1}}     
\newcommand{\specparts}[3]{\mathcal{P}(#1,#2;#3)}
\newcommand{\mult}[2]{\multiplicity_{#1}(#2)}     
\newcommand{\ii}{\sqrt{-1}} 
\newcommand{\qq}{\mathsf{q}}   
\newcommand{\yy}{\mathsf{y}}   
\newcommand{\zz}{\mathsf{z}}   
\newcommand{\wun}{\mathbf{1}}
\DeclarePairedDelimiter{\brac}{\lparen}{\rparen} 
\DeclarePairedDelimiter{\sqbrac}{\lbrack}{\rbrack} 
\DeclarePairedDelimiter{\set}{\lbrace}{\rbrace}
\newcommand{\st}{\mspace{5mu} {:} \mspace{5mu}} 
\DeclarePairedDelimiter{\abs}{\lvert}{\rvert}
\DeclarePairedDelimiter{\ang}{\langle}{\rangle}
\DeclarePairedDelimiter{\normord}{{:}}{{:}}        
\DeclarePairedDelimiterX{\comm}[2]{\lbrack}{\rbrack}{#1 , #2}  
\DeclarePairedDelimiterX{\acomm}[2]{\lbrace}{\rbrace}{#1 , #2} 
\DeclarePairedDelimiterX{\inner}[2]{\langle}{\rangle}{#1 , #2} 
\DeclarePairedDelimiterX{\super}[2]{\lparen}{\rparen}{#1 \delimsize\vert \mathopen{} #2} 
\DeclarePairedDelimiter{\ket}{\lvert}{\rangle}
\DeclarePairedDelimiterX{\braket}[2]{\langle}{\rangle}{#1 \delimsize\vert \mathopen{} #2}
\DeclarePairedDelimiterX{\bracket}[3]{\langle}{\rangle}{#1 \delimsize\vert \mathopen{} #2 \delimsize\vert \mathopen{} #3}
\newcommand{\lra}{\longrightarrow}
\newcommand{\ira}{\hookrightarrow}    
\newcommand{\lira}{\ensuremath{\lhook\joinrel\relbar\joinrel\rightarrow}} 
\newcommand{\dses}[5]{0 \lra #1 \overset{#2}{\lra} #3 \overset{#4}{\lra} #5 \lra 0} 
\newcommand{\fld}[1]{\mathbb{#1}}    
\newcommand{\alg}[1]{\mathfrak{#1}}  
\newcommand{\grp}[1]{\mathsf{#1}}    
\newcommand{\Mod}[1]{\mathcal{#1}}   
\newcommand{\VOA}[1]{\mathsf{#1}}    
\newcommand{\ZZ}{\fld{Z}}
\newcommand{\NN}{\ZZ_{\ge 0}} 
\newcommand{\RR}{\fld{R}}
\newcommand{\CC}{\fld{C}}
\newcommand{\SLG}[2]{\grp{#1}_{#2}}       
\newcommand{\SLA}[2]{\alg{#1}_{#2}}                       
\newcommand{\SLSA}[3]{\alg{#1} \super{#2}{#3}}            
\newcommand{\envalg}[1]{\mathsf{U}\brac{#1}}              
\newcommand{\tideal}[1]{\ang{#1}}
\newcommand{\kil}[2]{\kappa \brac{#1, #2}} 
\newcommand{\finite}[1]{\overline{#1}}
\newcommand{\affine}[1]{#1}
\newcommand{\hfour}{\finite{\alg{h}}_4}    
\newcommand{\ahfour}{\affine{\alg{h}}_4}   
\newcommand{\confwtsymb}{\Delta}                   
\newcommand{\hwconfwt}[1]{\confwtsymb^+_{#1}}      
\newcommand{\lwconfwt}[1]{\confwtsymb^-_{#1}}      
\newcommand{\hlwconfwt}[1]{\confwtsymb^{\pm}_{#1}} 
\newcommand{\rhwconfwt}[2]{\confwtsymb_{#1;#2}}    
\newcommand{\autfont}[1]{\mathrm{#1}}
\newcommand{\fconjaut}{\finite{\autfont{c}}}         
\newcommand{\fscaut}{\finite{\autfont{r}}}           
\newcommand{\fshaut}{\finite{\autfont{s}}}           
\newcommand{\conjaut}{\autfont{c}}                   
\newcommand{\scaut}{\autfont{r}}                     
\newcommand{\shaut}{\autfont{s}}                     
\newcommand{\sfaut}{\sigma}                          
\newcommand{\func}[2]{#1(#2)}                        
\newcommand{\fconjmod}[1]{\func{\fconjaut}{#1}}      
\newcommand{\fscmod}[2]{\func{\fscaut_{#1}}{#2}}     
\newcommand{\fshmod}[2]{\func{\fshaut_{#1}}{#2}}     
\newcommand{\conjmod}[1]{\func{\conjaut}{#1}}        
\newcommand{\shmod}[2]{\func{\shaut'_{#1}}{#2}}      
\newcommand{\sfmod}[2]{\func{\sfaut^{#1}}{#2}}       
\newcommand{\Hfour}{\VOA{H}_4}                          
\newcommand{\fmod}[1]{\finite{\Mod{#1}}}
\newcommand{\amod}[1]{\affine{\Mod{#1}}}
\newcommand{\VerSymb}{V}
\newcommand{\IrrSymb}{L}
\newcommand{\RelSymb}{R}
\newcommand{\MaxSubSymb}{M}
\newcommand{\SubSymb}{N}
\newcommand{\QuoSymb}{E}
\newcommand{\QuoCohSymb}{C}
\newcommand{\RelCohSymb}{D}
\newcommand{\finver}[1]{\fmod{\VerSymb}_{#1}}           
\newcommand{\finirr}[1]{\fmod{\IrrSymb}_{#1}}           
\newcommand{\finrel}[2]{\fmod{\RelSymb}_{#1; #2}}       
\newcommand{\ver}[1]{\amod{\VerSymb}_{#1}}              
\newcommand{\irr}[1]{\amod{\IrrSymb}_{#1}}              
\newcommand{\rel}[2]{\amod{\RelSymb}_{#1; #2}}          
\newcommand{\maxsub}[2]{\amod{\MaxSubSymb}_{#1; #2}}    
\newcommand{\sub}[2]{\amod{\SubSymb}_{#1; #2}}          
\newcommand{\quo}[2]{\amod{\QuoSymb}_{#1; #2}}          
\newcommand{\qcoh}[2]{\amod{\QuoCohSymb}_{#1; #2}}      
\newcommand{\rcoh}[2]{\amod{\RelCohSymb}_{#1; #2}}      
\DeclareMathOperator{\tr}{tr}
\newcommand{\Gr}[1]{\sqbrac[\big]{#1}}                 
\newcommand{\tGr}[1]{\sqbrac{#1}}                      
\newcommand{\traceover}[1]{\tr_{\raisebox{-2pt}{$\scriptstyle #1$}}} 
\DeclareMathOperator{\chmap}{ch}
\DeclareMathOperator{\strfunsymb}{s}
\newcommand{\ch}[1]{\chmap \Gr{#1}}                    
\newcommand{\fch}[3]{\ch{#1} \brac[\big]{#2;#3}}       
\newcommand{\strfun}[2]{\strfunsymb_{#1} \Gr{#2}}      
\newcommand{\fstrfun}[3]{\strfun{#1}{#2} \brac[\big]{#3}}
\newcommand{\tstrfun}[2]{\strfunsymb_{#1} \tGr{#2}}
\newcommand{\jth}[1]{\vartheta_{#1}}                   
\newcommand{\fjth}[2]{\jth{#1} \brac{#2}}              
\newcommand{\cft}{conformal field theory}
\newcommand{\cfts}{conformal field theories}
\newcommand{\lcft}{logarithmic conformal field theory}
\newcommand{\uea}{universal enveloping algebra}
\newcommand{\lw}{lowest-weight}
\newcommand{\lwv}{\lw\ vector}
\newcommand{\lwvs}{\lwv s}
\newcommand{\lwm}{\lw\ module}
\newcommand{\hw}{highest-weight}
\newcommand{\hwv}{\hw\ vector}
\newcommand{\hwvs}{\hwv s}
\newcommand{\hwm}{\hw\ module}
\newcommand{\hwms}{\hwm s}
\newcommand{\rhw}{relaxed highest-weight}
\newcommand{\rhwv}{\rhw\ vector}
\newcommand{\rhwvs}{\rhwv s}
\newcommand{\rhwm}{\rhw\ module}
\newcommand{\rhwms}{\rhwm s}
\newcommand{\sv}{singular vector}
\newcommand{\svs}{\sv s}
\newcommand{\vo}{vertex operator}
\newcommand{\voa}{\vo\ algebra}
\newcommand{\voas}{\voa s}
\newcommand{\ope}{operator product expansion}
\newcommand{\opes}{\ope s}
\newcommand{\fdim}{finite-dimensional}
\newcommand{\km}{Kac--Moody}
\newcommand{\nw}{Nappi--Witten}
\newcommand{\pbw}{Poincar\'{e}--Birkhoff--Witt}
\newcommand{\wzw}{Wess--Zumino--Witten}
\theoremstyle{plain}
\newtheorem{theorem}{Theorem}
\newtheorem{corollary}[theorem]{Corollary}
\newtheorem{lemma}[theorem]{Lemma}
\newtheorem{proposition}[theorem]{Proposition}
\renewcommand\author@andify{%
  \nxandlist {\unskip ,\penalty-1 \space\ignorespaces}%
    {\unskip {} \@@and~}%
    {\unskip \penalty-2 \space \@@and~}%
}
\begin{document}

\title{Representations of the Nappi--Witten vertex operator algebra}

\author[A~Babichenko]{Andrei Babichenko}
\address[Andrei Babichenko]{
Department of Mathematics \\
Weizmann Institute of Science \\
Rehovot, 76100, Israel.
}
\email{babichenkoandrei@gmail.com}

\author[K~Kawasetsu]{Kazuya Kawasetsu}
\address[Kazuya Kawasetsu]{
Priority Organization for Innovation and Excellence\\
Kumamoto University\\
Kumamoto, Japan, 860-8555.
}
\email{kawasetsu@kumamoto-u.ac.jp}

\author[D~Ridout]{David Ridout}
\address[David Ridout]{
School of Mathematics and Statistics \\
University of Melbourne \\
Parkville, Australia, 3010.
}
\email{david.ridout@unimelb.edu.au}

\author[W~Stewart]{William Stewart}
\address[William Stewart]{
Department of Mathematics \\
University of Texas at Austin \\
Austin, USA, 78712.
}
\email{wbstewart@utexas.edu}

\subjclass[2010]{Primary 17B69, 81T40; Secondary 17B10, 17B67}

\begin{abstract}
	The \nw\ model is a \wzw\ model in which the target space is the nonreductive Heisenberg group $H_4$.  We consider the representation theory underlying this \cft.  Specifically, we study the category of weight modules, with \fdim\ weight spaces, over the associated affine \voa\ $\Hfour$.  In particular, we classify the irreducible $\Hfour$-modules in this category and compute their characters.  We moreover observe that this category is nonsemisimple, suggesting that the \nw\ model is a \lcft.
\end{abstract}

\maketitle

\markleft{A~BABICHENKO, K~KAWASETSU, D~RIDOUT AND W~STEWART} 

\onehalfspacing

\section{Introduction}

\wzw\ models are examples of nonlinear sigma models that describe (noncritical) strings propagating on Lie groups or supergroups \cite{WitNon84}.  The enhanced symmetry afforded by Lie-theoretic target spaces makes them attractive studies and consequently the models based on reductive Lie groups are now very well understood.  In particular, \wzw\ models with reductive Lie groups have chiral symmetry algebras that may be identified as affine \voas\ at nonnegative integer levels.  This identification not only algebraically formalises the conformal symmetry of these models, it also means that the representation theory of the corresponding affine \km\ algebras \cite{KacInf90} is available to organise the spectrum.

In contrast, the \wzw\ models with nonreductive Lie groups are relatively poorly understood.  It has been known for quite some time \cite{FigNon94,FigNon96} that the theory is conformal if the corresponding Lie algebra admits a nondegenerate invariant symmetric bilinear form.  However, the representation theory of the corresponding affine \voas\ has remained largely unexplored, probably because the representation theory of nonreductive Lie algebras is already considerably more challenging than the reductive case.  In this paper, we revisit one of the simplest nonreductive examples: the \nw\ model \cite{NapWZW93} corresponding to the Heisenberg group $H_4$.

The \nw\ model was originally introduced to describe strings propagating in a monochromatic plane-wave background, a seemingly simple generalisation of the flat backgrounds familiar from abelian group target spaces, and has since been intensively studied.  For example, this model lends itself to the study of the propagation and scattering of strings in the presence of gravitational waves \cite{KirStr94,DApStr03}.  Moreover, its rich collection of abelian and diagonal cosets have been related to many other interesting backgrounds \cite{DApAbe07,DApDia08}.  Boundary aspects (D-branes) have likewise been explored in detail, see \cite{StaDBr98,FigMor00,DApDBr05} for example.

In a sense, the representation theory of the \nw\ \voa\ $\Hfour$ has been discussed many times in the literature.  In particular, a spectrum has been proposed from which $3$- and $4$-point correlation functions have been calculated, see \cite{DApStr03} for example.  However, this proposal largely follows the paradigm of noncompact \cft\ introduced in \cite{MalStr01} for the $\SLG{SL}{2}(\RR)$ \wzw\ model: start with the unitary representations of the \nw\ Lie algebra $\hfour$ and induce to get representations of its affinisation $\ahfour$.  It therefore identifies the generic features of the spectrum, but may miss the finer structure needed for a complete understanding.

Our primary motivation here is that an alternative paradigm has recently arisen through the study of \wzw\ models on Lie supergroups \cite{SalGL106,CreRel11,QueSup13}, fractional-level \wzw\ models \cite{GabFus01,LesLog04,CreLog13,RidAdm17} and related theories such as the bosonic ghost system \cite{LesLog04,RidBos14,AdaFus19,AllBos20}.  As \cfts, these share similar features to the known noncompact models such as continuous spectra and conformal dimensions that are unbounded from below.  More interestingly, the fine structure of the spectra of these theories shows that they are \emph{logarithmic}, meaning that the spectrum receives contributions from reducible but indecomposable representations with a nondiagonalisable action of the hamiltonian \cite{GurLog93}.  Moreover, these logarithmic representations are necessary for the standard consistency checks of closure under fusion \cite{GabFus01,RidFus10} and modular invariance \cite{CreMod13,RidBos14}.

It would therefore be very interesting to study the fine structure of the spectrum of the \nw\ model.  In this respect, we note that \cite{DApStr03}, see also \cite{DApAbe07}, find strong evidence for logarithmic singularities in the $4$-point correlators, usually regarded as a sure sign of logarithmic representations.  However, they argue that the \nw\ model is not logarithmic and that these singularities arise somehow because of the continuous nature of the spectrum.  On the other hand, there are related coset models that are known to be logarithmic \cite{BakPPW02,SfeStr02}.

A first step towards understanding the fine structure of the \nw\ model was taken in \cite{BaoRep11}, where the \hwms\ of the affinisation $\ahfour$ of $\hfour$ were studied.  The results included a precise determination of when a Verma $\ahfour$-module is irreducible and a partial characterisation of the submodule structure when it is not.  They also conjectured similar results for certain generalised Verma modules, here referred to as \emph{relaxed} Verma modules following \cite{FeiEqu98,RidRel15}.  Our main aim in this paper is to complete these results.  We reprove (in a simpler fashion) their classification result and moreover completely determine the substructures of all Verma modules.  We also prove their conjectures for relaxed Verma modules.  Consequently, we deduce character formulae for all irreducible weight representations, with \fdim\ weight spaces, over the \nw\ \voa\ $\Hfour$.  We also prove that $\Hfour$ admits many indecomposable but reducible representations, though those that we construct are not themselves logarithmic.

Two obvious questions, which we leave for future work, are to ascertain the modular transformations of the irreducible characters and to determine the fusion rules.  Both are necessary checks on the consistency on any proposed spectrum.  To check modularity, we expect to apply the \emph{standard module formalism} of \cite{CreLog13,RidVer14}, though there are some obvious technical considerations to overcome.  Computing fusion rules is expected to be much more difficult, but here there are already some helpful partial results on tensor products of $\hfour$-modules \cite{VilSpe68,MilLie68}.  One main aim here would be to determine, given a proposed spectrum, if logarithmic representations are generated as fusion products of the already known nonlogarithmic representations.

This paper is organised as follows.  We introduce our notation and conventions for the Heisenberg Lie algebra $\hfour$ in \cref{sec:hfour}.  We also discuss its automorphisms and classify its irreducible weight representations.  The latter result is very well known, see \cite{MilLie68} for example, but we outline the easy proof for completeness and because it is convenient for constructing certain, less well known, families of reducible but indecomposable $\hfour$-representations.  This is followed, in \cref{sec:Hfour}, with a summary of our notation and conventions for the affinisation $\ahfour$ and the associated universal affine \voa\ $\Hfour$.  Here, we also discuss automorphisms including those of ``spectral flow'' type.

\cref{sec:hwH4} is devoted to a thorough study of the \hwms\ of $\Hfour$.  In particular, we give an elementary proof of the fact that $\Hfour$ is a simple \voa\ and we rigorously identify the maximal submodule of an arbitrary Verma $\ahfour$-module (\cref{thm:bjp}).  We give a combinatorial proof of the latter result in \cref{app:proof}.  The generalisation to \rhwms\ is the subject of \cref{sec:rhwH4}.  Here, we adapt the methodology developed in \cite{KawRel18} to study \rhwms\ over the admissible-level affine \vo\ (super)algebras associated to $\SLA{sl}{2}$ and $\SLSA{osp}{1}{2}$.  Our main result, \cref{prop:classH-mod}, classifies all the irreducible \rhwms\ of $\Hfour$.  We also describe the structure of certain reducible but indecomposable $\Hfour$-modules in \cref{th:IndecompSES}.

Our last task is to compute the characters of the irreducible \rhw\ $\Hfour$-modules.  In \cref{sec:CharsI}, we give the characters of the Verma and relaxed Verma modules, which are straightforward to derive, and those of the irreducible \hwms, which follow from \cref{thm:bjp}.  \cref{sec:Stringiness} is devoted to the much more subtle computation of the characters of the irreducible \rhwms.  This again uses the methodology developed in \cite{KawRel18} and moreover proves the main conjecture of \cite{BaoRep11}.  We conclude with a short discussion concerning directions for future work.

We remark that the methodology of \cite{KawRel18} is but one of many recent approaches being developed to explore the theory of \rhwms\ for affine \voas\ (and their associated W-algebras), see \cite{AraWei16,AdaRea17,KawRel19,FutPos20,FutSim20,KawRel20,AdaRea20,FehCla20,FutAdm21} for example.  As mentioned above, the main novelty of analysing these modules over $\Hfour$ is, in our opinion, that this example derives from a nonreductive Lie algebra.\footnote{In a sense, $\Hfour$ is the bosonic analogue of the well-studied $\SLSA{gl}{1}{1}$ \vo\ superalgebra \cite{RozQua92,SalGL106,CreRel11,CreWAl11,CreTen20b}.  We find it interesting that the latter has a far more accessible representation theory than the former.}  It would be extremely interesting to see how these other approaches can also accommodate such nonreductive cases.

\section*{Acknowledgements}

We are thankful to Cuipo Jiang and Thomas Quella for interesting discussions related to this research.
KK's research is partially supported by MEXT Japan ``Leading Initiative for Excellent Young Researchers (LEADER)'',
JSPS Kakenhi Grant numbers 19KK0065 and 19J01093 and
Australian Research Council Discovery Project DP160101520.
DR's research is supported by the Australian Research Council Discovery Project DP160101520 and the Australian Research Council Future Fellowship FT200100431.
WS's research is supported by an Australian Government Research Training Program (RTP) Scholarship.

\section{The Lie algebra $\hfour$ and its representations} \label{sec:hfour}

The Lie algebra $\hfour$ is the four-dimensional complex Lie algebra with basis $\set*{E,F,I,J}$ whose nonzero Lie brackets are, modulo antisymmetry, as follows:
\begin{equation}
	\comm{E}{F} = I, \quad \comm{J}{E} = E, \quad \comm{J}{F} = -F.
\end{equation}
As $I$ spans the centre, but the centre has no complementary ideal, $\hfour$ is not reductive.  On the other hand, the Killing form is easily checked to be nonzero, so $\hfour$ is not solvable.  Nevertheless, it admits a two-parameter family of nondegenerate invariant symmetric bilinear forms, given by
\begin{equation} \label{eq:DefKil}
	\kil{E}{F} = \kil{I}{J} = a, \quad \kil{J}{J} = b, \qquad a \in \CC \setminus \set{0},\ b \in \CC,
\end{equation}
with all other entries $0$.  For reasons that will shortly become clear, we may take $a=1$ and $b=0$ in what follows.

This Lie algebra also possesses a natural triangular decomposition:
\begin{equation} \label{eq:FinTriDec}
	\hfour = \hfour^+ \oplus \hfour^0 \oplus \hfour^-; \qquad
	\hfour^+ = \vspn \set{E}, \quad \hfour^0 = \vspn \set{I,J}, \quad \hfour^- = \vspn \set{F}.
\end{equation}
A \hwv{} is then a simultaneous eigenvector of $I$ and $J$ which is annihilated by $E$ and a \lwv{} is the same except that the annihilation is by $F$.  The span of a given \hwv{} is then naturally a $(\hfour^+ \oplus \hfour^0)$-module and inducing to a $\hfour$-module defines the corresponding Verma module.  If the $I$- and $J$-eigenvalues of the \hwv{} are $i$ and $j$, respectively, then we shall denote the induced (\hw{}) Verma module by $\finver{i,j}^+$.

Let $\ket{i,j}$ denote the \hwv{} of $\finver{i,j}^+$.  Then, a basis for $\finver{i,j}^+$ is given by the $F^n \ket{i,j}$, where $n \in \NN$.  If $\finver{i,j}^+$ has a nonzero proper submodule, then one of the $F^n \ket{i,j}$ (with $n>0$) must be a \sv{}.  But, $\comm{E}{F} = I$ is central, hence
\begin{equation}
	EF^n \ket{i,j} = n F^{n-1} I \ket{i,j} = in F^{n-1} \ket{i,j}
\end{equation}
and we conclude that such \svs{} only exist if $i=0$.  Moreover, if $i=0$, then all of the $F^n \ket{i,j}$ with $n>0$ are singular.  It follows that the Verma module $\finver{i,j}^+$ of $\hfour$ is irreducible if and only if $i \ne 0$ and that the maximal submodule of $\finver{0,j}^+$ is isomorphic to $\finver{0,j-1}^+$.  The Verma module $\finver{i,j}^+$ is then irreducible if $i \ne 0$ and, for $i=0$, we instead have the short exact sequence
\begin{equation} \label{es:finVVL}
	\dses{\finver{0,j-1}^+}{}{\finver{0,j}^+}{}{\finirr{0,j}}.
\end{equation}
Note that the $\finirr{0,j}$ are all one-dimensional and are therefore irreducible.  We therefore have a complete classification of irreducible \hwms{} for $\hfour$.
\noindent We record the following consequence of this classification that will be used later.
\begin{lemma} \label{le:IZero}
	If $V$ is a finite-dimensional weight $\hfour$-module, then $I$ acts trivially on $V$.
\end{lemma}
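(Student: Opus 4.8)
The plan is to exploit the fact that $I$ is simultaneously \emph{central} and a \emph{commutator}, namely $I = \comm{E}{F}$, together with the elementary observation that the trace of any commutator on a finite-dimensional space vanishes. First I would invoke the weight hypothesis: since $\hfour^0 = \vspn \set{I,J}$ acts semisimply on $V$, the element $I$ acts diagonalizably, so $V$ decomposes as a direct sum $V = \bigoplus_i V^{(i)}$ of $I$-eigenspaces, where $I$ acts as the scalar $i$ on $V^{(i)}$.

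The key structural observation is that each $V^{(i)}$ is in fact an $\hfour$-submodule. Indeed, because $I$ spans the centre of $\hfour$, the operators $E$, $F$ and $J$ all commute with the action of $I$ and therefore preserve each of its eigenspaces; as $I$ itself obviously preserves $V^{(i)}$, the entire algebra $\hfour$ acts on $V^{(i)}$.

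Next I would restrict attention to a single summand $V^{(i)}$ and compute the trace of $I$ there in two ways. On the one hand, $I$ acts as $i \cdot \id$, so $\tr_{V^{(i)}} I = i \dim V^{(i)}$. On the other hand, $E$ and $F$ restrict to endomorphisms of the finite-dimensional space $V^{(i)}$, whence $\tr_{V^{(i)}} I = \tr_{V^{(i)}} \comm{E}{F} = 0$. Comparing the two expressions gives $i \dim V^{(i)} = 0$, which forces $i = 0$ whenever $V^{(i)} \ne 0$. Thus the only nonzero eigenspace is $V^{(0)}$, on which $I$ acts as zero, and so $I$ acts trivially on all of $V = V^{(0)}$.

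I do not anticipate any serious obstacle here; the argument is essentially the classical fact that a central commutator must act nilpotently — and here, by semisimplicity, trivially — on finite-dimensional representations. The only point requiring care is that the trace-of-a-commutator identity needs $E$ and $F$ to act as operators on a finite-dimensional space, which is precisely why I first pass to the eigenspaces $V^{(i)}$: each is finite-dimensional and, by centrality of $I$, stable under both $E$ and $F$. (Finite-dimensionality of $V$ alone already yields $\tr_V I = 0$, but that merely constrains the sum of the eigenvalues; it is the decomposition into submodules that upgrades this to the vanishing of each individual eigenvalue, and the weight hypothesis that rules out a nonzero nilpotent action.)
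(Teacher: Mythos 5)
Your proof is correct, and it takes a genuinely different route from the paper's main argument. The paper first invokes the classification of irreducible weight $\hfour$-modules established just before the lemma: every composition factor of $V$, being finite-dimensional, must be one of the one-dimensional modules $\finirr{0,j}$, on which $I$ acts as zero; the weight hypothesis then upgrades the resulting nilpotent action of $I$ on $V$ to the zero action. You avoid the classification entirely: the weight hypothesis gives the $I$-eigenspace decomposition up front, centrality of $I$ makes each eigenspace $V^{(i)}$ an $\hfour$-submodule, and the vanishing of $\tr_{V^{(i)}} \comm{E}{F}$ on each finite-dimensional eigenspace forces $i \dim V^{(i)} = 0$, killing every nonzero eigenvalue. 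The paper does record a close cousin of your argument in the remark immediately following its proof --- taking the trace of $I = \comm{E}{F}$ over a finite-dimensional space --- but implements it via Schur's lemma applied to the composition factors rather than via $I$-eigenspaces. Your version is the more self-contained of the three: it needs neither the classification of irreducibles, nor Schur's lemma, nor a Jordan--H\"{o}lder filtration, only the centrality of $I$ and the semisimplicity of its action guaranteed by the weight hypothesis; what it gives up is the reuse of structural results the paper had already paid for.
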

\begin{proof}
	The composition factors of $V$ are obviously finite-dimensional, hence each is isomorphic to one of the $\finirr{0,j}$.  It follows that $I$ acts as zero on each factor.  However, $V$ is weight, so $I$ acts as zero on $V$ as well.
\end{proof}
\noindent Alternatively, $I$ acts as a constant multiple of the identity on each composition factor, by Schur's lemma.  But, taking the trace of $I=\comm{E}{F}$ over this finite-dimensional space shows that this multiple must be $0$.

One can similarly analyse \lw{} Verma modules $\finver{i,j}^-$.  However, their structure follows immediately from the existence of an automorphism $\fconjaut$ of $\hfour$ defined by
\begin{equation} \label{eq:DefFinConj}
	\fconjaut(E) = -F, \quad \fconjaut(I) = -I, \quad \fconjaut(J) = -J, \quad \fconjaut(F) = -E.
\end{equation}
We call the $\fconjaut$ the conjugation automorphism and note that it squares to the identity.  This lifts to an invertible endofunctor on the category of weight modules for $\hfour$ as follows.  Given such a module $\fmod{W}$, let $\func{\fconjaut^*}{\fmod{W}}$ denote the image of $\fmod{W}$ under an arbitrary vector space isomorphism $\fconjaut^*$.  We equip $\func{\fconjaut^*}{\fmod{W}}$ with an $\hfour$-module structure by defining
\begin{equation} \label{eq:DefTwMod}
	A \cdot \fconjaut^*(m) = \fconjaut^*(\fconjaut^{-1}(A) m), \qquad A \in \hfour,\ m \in \fmod{W}.
\end{equation}
In other words, $\fconjaut(A) \fconjaut^*(m) = \fconjaut^*(Am)$.  In what follows, we shall, for brevity, drop the star that distinguishes the automorphism from the associated category autoequivalence.

As $\fconjaut$ defines an invertible functor, conjugation preserves module structure.  In particular, it maps irreducibles to irreducibles.  For example, the conjugate of the irreducible \hw{} module $\finver{i,j}^+$, $i \ne 0$, is the irreducible \lwm{} $\finver{-i,-j}^-$ of \lw{} $(-i,-j)$ and the conjugate of the one-dimensional module $\finirr{0,j}$ is $\finirr{0,-j}$ (which is simultaneously highest- and \lw).  Moreover, invertibility means that these modules exhaust the irreducible \lw{} $\hfour$-modules, up to isomorphism.

There are other nontrivial automorphisms of $\hfour$.  In particular, we have two one-parameter families that we shall refer to as the rescale automorphisms $\fscaut_{\alpha}$, $\alpha \in \CC \setminus \set{0}$, and the shift automorphisms $\fshaut_{\beta}$, $\beta \in \CC$, defined by
\begin{equation} \label{eq:DefFinReShAut}
	\begin{aligned}
		\fscaut_{\alpha}(E) &= \alpha^{-1} E, & \fscaut_{\alpha}(I) &= \alpha^{-2} I, & \fscaut_{\alpha}(J) &= J, & \fscaut_{\alpha}(F) &= \alpha^{-1} F, \\
		\fshaut_{\beta}(E) &= E, & \fshaut_{\beta}(I) &= I, & \fshaut_{\beta}(J) &= J - \beta I, & \fshaut_{\beta}(F) &= F.
	\end{aligned}
\end{equation}
Along with conjugation, these automorphisms satisfy
\begin{equation}
	\fscaut_{\alpha} \fscaut_{\alpha'} = \fscaut_{\alpha \alpha'}, \quad
	\fshaut_{\beta} \fshaut_{\beta'} = \fshaut_{\beta + \beta'}, \quad
	\fscaut_{\alpha} \fconjaut = \fconjaut \fscaut_{\alpha}, \quad
	\fshaut_{\beta} \fconjaut = \fconjaut \fshaut_{\beta}, \quad
	\fscaut_{\alpha} \fshaut_{\beta} = \fshaut_{\alpha^{-2} \beta} \fscaut_{\alpha}.
\end{equation}
Note that the rescale and shift automorphisms do not preserve the choice of nondegenerate invariant bilinear form (unlike conjugation).  Instead, $\fscaut_{\alpha}$ has the effect of replacing the parameter $a$ in \eqref{eq:DefKil} by $\alpha^{-2} a$ and $\fshaut_{\beta}$ similarly replaces $b$ by $b - 2 \beta a$.  As $a \ne 0$, these automorphisms effectively allow us to tune $a$ and $b$ to any values we desire.  This shows that there was no loss of generality in choosing $a=1$ and $b=0$, as we did above.

As with conjugation, the rescale and shift automorphisms lift to invertible, and thus structure-preserving, endofunctors on the category of weight modules.  Since these automorphisms preserve the triangular decomposition \eqref{eq:FinTriDec} (again unlike conjugation), the corresponding endofunctors preserve being \hw{} or \lw{}.  An easy calculation with \hwvs{} now shows that
\begin{equation}
	\fscmod{\alpha}{\finver{i,j}^+} = \finver{\alpha^2 i,j}^+, \quad
	\fshmod{\beta}{\finver{i,j}^+} = \finver{i,j+\beta i}^+, \quad
	\fscmod{\alpha}{\finirr{0,j}} = \finirr{0,j}, \quad
	\fshmod{\beta}{\finirr{0,j}} = \finirr{0,j+\beta i}.
\end{equation}
In particular, this explains why the structure of the Verma modules of $\hfour$ is independent of the eigenvalue of $J$ and only depends on whether the eigenvalue of $I$ is zero or not.

To complete the classification of irreducible weight modules for $\hfour$, we consider modules without highest- or \lwvs{}.\footnote{Our definition of weight module will always require that the dimension of the weight spaces is finite.}  For this, it is convenient to introduce a central element of quadratic degree in the \uea{} $\envalg{\hfour}$:
\begin{equation} \label{eq:DefCas}
	Q = FE+IJ.
\end{equation}
It follows that the eigenvalues of $Q$ on $\finirr{0,j}$ and $\finver{i,j}^+$ are $0$ and $ij$, respectively.  This quadratic Casimir can, of course, be modified by adding an arbitrary polynomial in $I$ without affecting its central nature.

The reason for introducing $Q$ is that any weight space of an irreducible weight module for $\hfour$ defines an irreducible module for the centraliser of $\hfour^0$ in $\envalg{\hfour}$ and this centraliser is easily seen to be just the polynomial algebra $\CC[I, J, Q]$.  Because the centraliser $\CC[I, J, Q]$ is abelian, the weight spaces of any irreducible weight module of $\hfour$ are one-dimensional.  Moreover, given an irreducible $\CC[I, J, Q]$-module $\CC \ket{i,j;h}$, where $\ket{i,j;h}$ is labelled by its $I$-, $J$- and $Q$-eigenvalues (in that order), we can induce to a $\envalg{\hfour}$-module $\finrel{i,j}{h}$ and a basis for the latter is given by $\ket{i,j;h}$ and the $E^n \ket{i,j;h}$ and $F^n \ket{i,j;h}$ with $n \in \ZZ_{>0}$.  The $\finrel{i,j}{h}$ are said to be dense because their weight supports are maximal among those of all indecomposable $\hfour$-modules.

If $\finrel{i,j}{h}$ is reducible, then either one of the $E^n \ket{i,j;h}$ is a \lwv{} or one of the $F^n \ket{i,j;h}$ is a \hwv{}.  But, if $n$ is a positive integer for which $E^n \ket{i,j;h}$ is a \lwv{}, then
\begin{equation}
0 = F E^n \ket{i,j;h} = \brac*{Q-IJ} E^{n-1} \ket{i,j;h} = \brac[\big]{h-i(j+n-1)} E^{n-1} \ket{i,j;h}.
\end{equation}
Thus, $h=i(j+m)$ for some $m \in \NN$.  Similarly, if $n$ is a positive integer for which $F^n \ket{i,j;h}$ is a \hwv{}, then $h=i(j+m)$ for some $m \in \ZZ_{<0}$.  A necessary and sufficient condition for reducibility is thus that $h=i(j+m)$ for some $m \in \ZZ$.

Note that if the $\hfour$-module $\finrel{i,j}{h}$ is irreducible, then we have $\finrel{i,j}{h} \cong \finrel{i,j+n}{h}$, for any $n \in \ZZ$.  We shall therefore denote the irreducible dense modules by $\finrel{i,[j]}{h}$, where $[j] \in \CC / \ZZ$.  It will be convenient for what follows to pick a basis $\set*{\ket{i,j';h} \st j' \in [j]}$ of weight vectors of $\finrel{i,[j]}{h}$.  Here, $i$, $j'$ and $h$ are the eigenvalues of $I$, $J$ and $Q$, respectively.

We have thus arrived at the following classification of irreducible weight modules.
\begin{proposition} \label{prop:finclass}
	The irreducible weight modules of $\hfour$ are classified, up to isomorphism, by the following list of mutually inequivalent modules:
	\begin{itemize}
		\item The $\finirr{0,j}$ with $j \in \CC$.  These are one-dimensional and have both a highest- and \lwv{}.
		\item The $\finver{i,j}^+$ with $i \in \CC \setminus \set{0}$ and $j \in \CC$.  These are infinite-dimensional and have a \hwv{} but no \lwv{}.
		\item The $\finver{i,j}^-$ with $i \in \CC \setminus \set{0}$ and $j \in \CC$.  These are infinite-dimensional and have a \lwv{} but no \hwv{}.
		\item The $\finrel{i,[j]}{h}$ with $i,h \in \CC$, $[j] \in \CC / \ZZ$ and $h \notin i[j]$.  These are infinite-dimensional and have neither a \lwv{} nor a \hwv{}.
	\end{itemize}
\end{proposition}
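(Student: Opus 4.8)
The plan is to verify the two assertions that remain once the irreducibility of each listed module is granted: that every irreducible weight $\hfour$-module is isomorphic to one on the list, and that the listed modules are pairwise nonisomorphic. Let $W$ be an arbitrary irreducible weight $\hfour$-module. Since $I$ is central, each of its eigenspaces is an $\hfour$-submodule, so irreducibility forces $I$ to act as a single scalar $i$ on $W$. Moreover $\ad J$ grades $\envalg{\hfour}$ with $E$ in degree $+1$ and $F$ in degree $-1$, so the $J$-eigenvalues occurring in $W$ all lie in a single coset $[j] \in \CC / \ZZ$. I would then split into three cases according to whether $W$ possesses a \hwv, a \lwv, or neither.

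If $W$ has a \hwv\ of weight $(i,j)$, then by the universal property it is a nonzero quotient of the Verma module $\finver{i,j}^+$. When $i \ne 0$ this Verma module is irreducible, so $W \cong \finver{i,j}^+$; when $i = 0$ the short exact sequence \eqref{es:finVVL} exhibits $\finver{0,j-1}^+$ as the maximal submodule and hence $\finirr{0,j}$ as the unique irreducible quotient, so $W \cong \finirr{0,j}$. If instead $W$ has a \lwv\ but no \hwv, I would apply the conjugation functor: then $\fconjmod{W}$ has a \hwv\ but no \lwv, so the previous case gives $\fconjmod{W} \cong \finver{i',j'}^+$ with $i' \ne 0$, and applying $\fconjaut$ once more yields $W \cong \finver{i,j}^-$ for suitable $i \ne 0$ and $j$.

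The remaining case, where $W$ has neither a highest- nor a \lwv, is the dense case and is where the real work lies. Here I would first identify the centraliser of $\hfour^0$ in $\envalg{\hfour}$ as the polynomial algebra $\CC[I,J,Q]$, using the $\ad J$-grading together with $\comm{E}{F}=I$ to reduce any degree-zero monomial to a polynomial in $I$, $J$ and $FE$. A standard argument then shows that each weight space of $W$ is a simple module over this centraliser: if $U$ is a nonzero submodule of the weight space $W_{(i,j')}$, then $\envalg{\hfour} U = W$ by irreducibility, and extracting the $(i,j')$-weight component returns $\CC[I,J,Q] \cdot U = U$, whence $U = W_{(i,j')}$. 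Since $\CC[I,J,Q]$ is a finitely generated commutative $\CC$-algebra, its simple modules are one-dimensional, so every weight space of $W$ is one-dimensional. Choosing a nonzero $\ket{i,j;h}$ with $Q$-eigenvalue $h$, the absence of highest- and \lwvs\ forces every $E^n \ket{i,j;h}$ and $F^n \ket{i,j;h}$ to be nonzero; these vectors span an $\hfour$-submodule, hence all of $W$, exhibiting $W$ as a cyclic quotient of $\finrel{i,j}{h}$. Irreducibility of $W$ rules out the reducibility condition $h = i(j+m)$, so $h \notin i[j]$, the module $\finrel{i,j}{h}$ is itself irreducible, and the quotient map is an isomorphism, giving $W \cong \finrel{i,[j]}{h}$. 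I expect this case---in particular pinning down the centraliser and deducing the simplicity, hence one-dimensionality, of the weight spaces---to be the main obstacle.

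Finally, to see that the four families are mutually inequivalent, I would observe that the structural type (one-dimensional; \hw\ but not \lw; \lw\ but not \hw; or dense) separates the families, and that within each family the members are distinguished by invariants read off directly from $W$: the eigenvalue $j$ for $\finirr{0,j}$, the eigenvalues $(i,j)$ of the extremal vector for $\finver{i,j}^{\pm}$, and the triple $(i,[j],h)$ of $I$-, $J$-coset- and $Q$-eigenvalues for $\finrel{i,[j]}{h}$. No accidental coincidence can occur, since any isomorphism would have to match both the structural type and these numerical invariants.
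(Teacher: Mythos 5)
Your proposal is correct and follows essentially the same route as the paper: highest-weight modules are handled via the Verma modules $\finver{i,j}^{\pm}$ and the singular-vector computation, the conjugation functor $\fconjaut$ transfers the result to \lw\ modules, and the dense case rests on identifying the centraliser of $\hfour^0$ in $\envalg{\hfour}$ as $\CC[I,J,Q]$, deducing one-dimensional weight spaces, and invoking the reducibility criterion $h = i(j+m)$. The only differences are presentational: you spell out steps the paper leaves implicit (the simplicity of weight spaces as centraliser modules, the scalar action of $I$, and the mutual inequivalence of the four families), which is a reasonable tightening rather than a new idea.
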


It is also easy to analyse the reducible dense modules $\finrel{i,j}{h}$ that arise when $h \in i[j]$.  As above, for $m,n \in \ZZ_{>0}$, we know that $E^m \ket{i,j;h}$ is a \lwv{} if $h=i(j+m-1)$ and $F^n \ket{i,j;h}$ is a \hwv{} if $h=i(j-n)$.  It follows that $\finrel{i,j}{h}$ has both a \hw{} and a \lwv{} if and only if $h=i=0$, in which case it has infinitely many (every positive $m$ and $n$ works).  $\finrel{0,j}{0}$ thus has infinitely many composition factors $\finirr{0,j+\ell}$, one for each $\ell \in \ZZ$.  The indecomposable structure of $\finrel{0,j}{0}$ may be characterised through either of the following nonsplit short exact sequences:
\begin{equation} \label{es:finR}
	\begin{gathered}
		\dses{\finver{0,j-1}^+}{}{\finrel{0,j}{0}}{}{\finver{0,j}^-}, \\
		\dses{\finver{0,j-1}^+ \oplus \finver{0,j+1}^-}{}{\finrel{0,j}{0}}{}{\finirr{0,j}}, \\
		\dses{\finver{0,j+1}^-}{}{\finrel{0,j}{0}}{}{\finver{0,j}^+}.
	\end{gathered}
\end{equation}
On the other hand, $\finrel{0,j}{h} = \finrel{0,[j]}{h}$ is always irreducible for $h\ne0$, in accordance with \cref{prop:finclass}.

When $i\ne0$ and $h \in i[j]$, $\finrel{i,j}{h}$ has either a unique \hwv{} or a unique \lwv{}, corresponding to taking $m=\frac{h}{i}-j+1$ or $n=-\frac{h}{i}+j$, respectively.  We shall denote these reducible but indecomposable dense $\hfour$-modules by $\finrel{i}{h}^+$ and $\finrel{i}{h}^-$, respectively, because their $J_0$-eigenvalues $[j] = [h/i]$ are uniquely determined by $h$ and $i$.  They are characterised by the following nonsplit short exact sequences, valid for $i\ne0$:
\begin{equation} \label{es:finRpm}
	\begin{gathered}
		\dses{\finver{i,h/i}^+}{}{\finrel{i}{h}^+}{}{\finver{i,h/i+1}^-}, \\
		\dses{\finver{i,h/i+1}^-}{}{\finrel{i}{h}^-}{}{\finver{i,h/i}^+}.
	\end{gathered}
\end{equation}
There are of course many other reducible but indecomposable dense $\hfour$-modules, particularly when $i=0$.

It is easy to identify the result of conjugating, rescaling and shifting the dense irreducibles $\finrel{i,[j]}{h}$, $h \notin i[j]$:
\begin{equation}
	\fconjmod{\finrel{i,[j]}{h}} \cong \finrel{-i,[-j]}{h+i}, \quad
	\fscmod{\alpha}{\finrel{i,[j]}{h}} \cong \finrel{\alpha^2 i,[j]}{\alpha^2 h}, \quad
	\fshmod{\beta}{\finrel{i,[j]}{h}} \cong \finrel{i,[j+\beta i]}{h+\beta i^2}.
\end{equation}
The corresponding results for the reducible versions $\finrel{i}{h}^{\pm}$, $i\ne0$, and $\finrel{0,j}{0}$, $j\in\CC$, are
\begin{equation}
	\begin{aligned}
		\fconjmod{\finrel{i}{h}^{\pm}} &\cong \finrel{-i}{h+i}^{\mp}, &
		\fscmod{\alpha}{\finrel{i}{h}^{\pm}} &\cong \finrel{\alpha^2 i}{\alpha^2 h}^{\pm}, &
		\fshmod{\beta}{\finrel{i}{h}^{\pm}} &\cong \finrel{i}{h+\beta i^2}^{\pm}, \\
		\fconjmod{\finrel{0,j}{0}} &\cong \finrel{0,-j}{0}, &
		\fscmod{\alpha}{\finrel{0,j}{0}} &\cong \finrel{0,j}{0}, &
		\fshmod{\beta}{\finrel{0,j}{0}} &\cong \finrel{0,j}{0}.
	\end{aligned}
\end{equation}

\section{The affine algebra $\ahfour$ and the vertex algebra $\Hfour$} \label{sec:Hfour}

Because $\hfour$ has a nondegenerate invariant symmetric bilinear form \eqref{eq:DefKil}, it has a well defined affinisation
\begin{equation}
	\ahfour = \hfour \otimes \CC[t,t^{-1}] \oplus \CC K.
\end{equation}
Writing $A_n$ for $A \otimes t^n$, where $A \in \hfour$ and $n \in \ZZ$, the Lie brackets are given by
\begin{equation} \label{cr:ah4}
	\comm{A_m}{B_n} = \comm{A}{B}_{m+n} + m \kil{A}{B} \delta_{m+n,0} K, \quad \comm{A_m}{K} = 0, \qquad A,B \in \hfour,\ m,n \in \ZZ.
\end{equation}
It follows that $I_0$ is also central in $\ahfour$.  We also have a generalised triangular decomposition given by
\begin{equation} \label{triang}
	\begin{gathered}
		\ahfour = \ahfour^+ \oplus \ahfour^0 \oplus \ahfour^-;\\
		\ahfour^+ = \vspn \set{A_n \st A \in \hfour,\ n \in \ZZ_{>0}}, \quad
		\ahfour^0 = \vspn \set{A_0, K \st A \in \hfour}, \quad
		\ahfour^- = \vspn \set{A_n \st A \in \hfour,\ n \in \ZZ_{<0}}.
	\end{gathered}
\end{equation}
From this, we obtain a parabolic Verma module for $\ahfour$ by taking the one-dimensional representation $\finirr{0,0}$ of $\hfour$, extending it to an $\ahfour^+ \oplus \ahfour^0$-module by letting the $A_n$ with $n>0$ act as $0$ and $K$ act as multiplication by $k \in \CC$, and then inducing to a $\ahfour$-module.  The constant $k$ is called the level.

This parabolic Verma module carries the structure of a vertex algebra given by the standard affine state-field correspondence \cite{LepInt04}.  The generating fields then have decomposition and \opes{} given by
\begin{equation} \label{ope:H4}
A(z) = \sum_{n \in \ZZ} A_n z^{-n-1}, \quad A(z) B(w) \sim \frac{\kil{A}{B} k \, \wun}{(z-w)^2} + \frac{\comm{A}{B}(w)}{z-w}, \qquad A,B \in \hfour,
\end{equation}
where $\wun$ denotes the identity field.  If $k \ne 0$, then this can be extended to a \voa{} $\Hfour$ through a variant of the Sugawara construction \cite{FigNon94}:
\begin{equation} \label{eq:DefT}
T(z) = \frac{\normord{E(z) F(z)}}{k} + \frac{\normord{I(z) J(z)}}{k} - \frac{\pd I(z)}{2k} - \frac{\normord{I(z) I(z)}}{2k^2}.
\end{equation}
This is the unique conformal structure that makes the $A(z)$, $A \in \hfour$, into Virasoro primaries of conformal weight $1$.  With $T(z) = \sum_{n \in \ZZ} L_n z^{-n-2}$, we obtain Virasoro modes in (a completion of) the \uea\ $\envalg{\ahfour} / \tideal{K-k\,\wun}$.  The central charge is $c=4$ and the conformal weight of a \hw\ or \lwv\ with $I_0$- and $J_0$-eigenvalues $i$ and $j$ is
\begin{equation} \label{eq:confwt}
	\hwconfwt{i,j} = \frac{i}{k} \brac*{j+\frac{1}{2}-\frac{i}{2k}} \quad \text{or} \quad
	\lwconfwt{i,j} = \frac{i}{k} \brac*{j-\frac{1}{2}-\frac{i}{2k}},
\end{equation}
respectively.  More generally, the conformal weight of a \rhwv\ (this being a weight vector annihilated by $\ahfour^+$) with $I_0$- and $J_0$-eigenvalues $i$ and $j$ is
\begin{equation} \label{eq:confwt'}
\rhwconfwt{i}{h} = \frac{h}{k} + \frac{i}{k} \brac*{\frac{1}{2}-\frac{i}{2k}},
\end{equation}
where $h$ is the eigenvalue of $Q_0 = F_0 E_0 + I_0 J_0$ on the \rhwv.

The conjugation automorphism $\fconjaut$ of $\hfour$ lifts to an automorphism of $\ahfour$ (and thence to $\Hfour$) as follows:
\begin{equation} \label{eq:DefConj}
	\conjaut(E_n) = F_n, \quad \conjaut(I_n) = -I_n, \quad \conjaut(J_n) = -J_n, \quad \conjaut(F_n) = E_n, \quad \conjaut(K) = K.
\end{equation}
The rescale automorphisms $\fscaut_{\alpha}$, $\alpha \in \CC \setminus \set{0}$, and the shift automorphisms $\fshaut_{\beta}$, $\beta \in \CC$, similarly lift:
\begin{equation} \label{eq:DefResc}
	\begin{aligned}
		\scaut_{\alpha}(E_n) &= \alpha^{-1} E_n, & \scaut_{\alpha}(I_n) &= \alpha^{-2} I_n, & \scaut_{\alpha}(J_n) &= J_n, & \scaut_{\alpha}(F_n) &= \alpha^{-1} F_n, & \scaut_{\alpha}(K) &= K, \\
		\shaut_{\beta}(E_n) &= E_n, & \shaut_{\beta}(I_n) &= I_n, & \shaut_{\beta}(J_n) &= J_n - \beta I_n, & \shaut_{\beta}(F_n) &= F_n, & \shaut_{\beta}(K) &= K.
	\end{aligned}
\end{equation}
As in \cref{sec:hfour}, $\shaut_{\beta}$ allows us to tune the parameter $b$ in the bilinear form \eqref{eq:DefKil} to $0$.  Assuming that $k \ne 0$ (so that $\Hfour$ is a \voa), $\scaut_{\alpha}$ lets us tune $a$ to $k^{-1}$.  Because the bilinear form and the level only appear multiplied together in the defining \opes\ \eqref{cr:ah4}, $b=0$ implies that $ak$ is the only other independent parameter.  We will therefore (without loss of generality) assume that $a=k=1$ in what follows.

Aside from these lifts of $\hfour$-automorphisms, there are genuinely new $\ahfour$-automorphisms.  For example, the central element gives us an opportunity to define new shift automorphisms $\shaut'_{\beta}$, $\beta \in \CC$:
\begin{equation} \label{eq:DefShift}
	\shaut'_{\beta}(E_n) = E_n, \quad \shaut'_{\beta}(I_n) = I_n, \quad \shaut'_{\beta}(J_n) = J_n - \beta \delta_{n,0} K, \quad \shaut'_{\beta}(F_n) = F_n, \quad \shaut'_{\beta}(K) = K.
\end{equation}
Finally, we have the spectral flow automorphisms $\sfaut^{\ell}$, $\ell \in \ZZ$, given by
\begin{equation} \label{eq:DefSF}
	\sfaut^{\ell}(E_n) = E_{n-\ell}, \quad \sfaut^{\ell}(I_n) = I_n - \ell \delta_{n,0} K, \quad \sfaut^{\ell}(J_n) = J_n, \quad \sfaut^{\ell}(F_n) = F_{n+\ell}, \quad \sfaut^{\ell}(K) = K.
\end{equation}

Unlike conjugation, these new automorphisms do not define automorphisms of the \voa{} $\Hfour$ because they do not preserve the Virasoro zero mode:
\begin{equation} \label{eq:AutL0}
	\conjaut(L_0) = L_0, \quad \shaut'_{\beta}(L_0) = L_0 - \beta I_0, \quad \sfaut^{\ell}(L_0) = L_0 - \ell J_0.
\end{equation}
However, they do define automorphisms of the underlying vertex algebra and so induce invertible endofunctors on the category of weight modules for $\Hfour$ as in \eqref{eq:DefTwMod}.\footnote{We define a weight module for $\ahfour$ and $\Hfour$ to be a module that decomposes as a direct sum of its weight spaces, where a weight space is defined to be the intersection, assumed to be finite-dimensional, of a simultaneous eigenspace of $I_0$, $J_0$ and $K$ with a generalised eigenspace of $L_0$.}

\section{Highest-weight $\Hfour$-modules} \label{sec:hwH4}

We may construct a weight $\ahfour$-module by inducing a weight $\hfour$-module, equipped with a trivial action of $\ahfour^+$ and letting $K$ act as the identity.  Because $\Hfour$ is universal, all these induced $\ahfour$-modules are also $\Hfour$-modules.

In particular, inducing the Verma $\hfour$-module $\finver{i,j}^+$ results in the (level-$1$) Verma $\ahfour$-module $\ver{i,j}^+$, with respect to the standard Borel subalgebra $\ahfour^+ \oplus \vspn \set{E_0, I_0, J_0, K}$.  Similarly, inducing $\finver{i,j}^-$ gives the Verma module $\ver{i,j}^-$ corresponding to the conjugate Borel subalgebra $\ahfour^+ \oplus \vspn \set{F_0, I_0, J_0, K}$.  We shall denote the irreducible quotient of $\ver{i,j}^{\pm}$ by $\irr{i,j}^{\pm}$.  The (generating) \hwv\ of $\ver{i,j}^+$ and $\irr{i,j}^+$ will be denoted, allowing for a certain abuse of notation, by $\ket{i,j}$.

The following simple result puts strong constraints on the structure of these \hw\ $\ahfour$-modules.  It follows immediately from the formula \eqref{eq:confwt} for the conformal weight of a \hwv.
\begin{lemma} \label{lem:svs}
	Let $\amod{H}$ be a \hw\ $\ahfour$-module generated by a \hwv\ of $I_0$-eigenvalue $i$ and $J_0$-eigenvalue $j$.  Then, any given \sv\ of $\amod{H}$ has $J_0$-eigenvalue $j+m$ and $L_0$-eigenvalue $\hwconfwt{i,j} + im$, for some $m \in \ZZ$.  Clearly, $i$ and $m$ must have the same sign (when nonzero).
\end{lemma}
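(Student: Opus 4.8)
The plan is to combine three structural features afforded by the generalised triangular decomposition \eqref{triang}: the centrality of $I_0$, the integrality of the $J_0$-grading, and the fact that a \sv\ is itself a \hwv, so that the conformal weight formula \eqref{eq:confwt} applies directly to it.

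Since $\amod{H}$ is generated from $\ket{i,j}$, the \pbw\ theorem together with $\ahfour^+ \ket{i,j} = E_0 \ket{i,j} = 0$ shows that $\amod{H}$ is spanned by monomials $A_{n_1} \cdots A_{n_r} \ket{i,j}$ in which each $A_{n_s}$ is either a zero mode ($F_0$, $I_0$ or $J_0$) or a negative mode. Reading off from \eqref{cr:ah4} the brackets
\begin{equation*}
	\comm{J_0}{E_n} = E_n, \quad \comm{J_0}{F_n} = -F_n, \quad \comm{J_0}{I_n} = \comm{J_0}{J_n} = \comm{J_0}{K} = 0,
\end{equation*}
I see that such a monomial is a $J_0$-eigenvector of eigenvalue $j + m$, where $m$ is the number of $E$-type modes minus the number of $F$-type modes. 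Hence every weight vector of $\amod{H}$, and in particular any \sv, has $J_0$-eigenvalue in $j + \ZZ$; the centrality of $I_0$ simultaneously guarantees that the \sv\ still carries $I_0$-eigenvalue $i$.

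A \sv\ is, by definition, annihilated by $\ahfour^+$ and by $E_0$, hence is a genuine \hwv, now of $I_0$- and $J_0$-eigenvalues $i$ and $j + m$. Formula \eqref{eq:confwt} then computes its $L_0$-eigenvalue as $\hwconfwt{i,j+m}$, and a one-line expansion (recalling $a = k = 1$) gives $\hwconfwt{i,j+m} = \hwconfwt{i,j} + im$, which is the asserted conformal weight. For the sign statement I would bring in the $L_0$-grading: from $\comm{L_0}{A_n} = -n A_n$ (the fields $A(z)$ being Virasoro primaries of weight $1$) and $L_0 \ket{i,j} = \hwconfwt{i,j} \ket{i,j}$, each of the spanning monomials above is an honest $L_0$-eigenvector of eigenvalue $\hwconfwt{i,j} + N$, with $N = -\sum_s n_s \ge 0$ the affine grade, which is non-negative precisely because all the $n_s$ are non-positive. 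A \sv\ is therefore homogeneous of some grade $N \ge 0$, and matching its two expressions for the $L_0$-eigenvalue forces $im = N \ge 0$; thus $i$ and $m$ share the same sign whenever both are nonzero. The only point demanding any care is this final comparison, which rests on the boundedness below of $L_0$ by $\hwconfwt{i,j}$ on the \hwm\ $\amod{H}$; everything else is a direct reading of the brackets \eqref{cr:ah4} and of \eqref{eq:confwt}.
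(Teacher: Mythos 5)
Your proof is correct and takes essentially the same route as the paper, which simply remarks that the lemma ``follows immediately from the formula \eqref{eq:confwt} for the conformal weight of a \hwv'': a \sv\ is itself a \hwv\ with eigenvalues $(i,j+m)$, so \eqref{eq:confwt} gives $\hwconfwt{i,j+m} = \hwconfwt{i,j}+im$, and comparison with the non-negative integer $L_0$-grade forces $im \ge 0$. You have merely spelled out the \pbw/grading details that the paper leaves implicit, which is fine.
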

\noindent One simple consequence is that the conformal weight of any \sv\ of $\ver{0,j}^+$ is $\hwconfwt{0,j} = 0$.  As the same is true for the \hwm\ $\ver{0,j}^+ \bigm/ \ver{0,j-1}^+$, it follows that this quotient is irreducible, identifying it as $\irr{0,j}^+$.  This irreducible is therefore isomorphic to the module $\irr{0,j}$ induced from $\finirr{0,j}$.  The same argument applied to $\ver{0,j}^-$ now establishes that $\irr{0,j}^+ \cong \irr{0,j} \cong \irr{0,j}^-$.  As $\irr{0,0}$ is the vacuum module of the \voa\ $\Hfour$, this proves the following assertion.
\begin{proposition} \label{prop:H4simple}
	The universal \voa\ $\Hfour$ is simple.
\end{proposition}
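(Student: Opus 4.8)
The plan is to use the standard equivalence that a \voa{} is simple exactly when it is irreducible as a module over itself, i.e.\ when its vacuum module has no nonzero proper submodule. The vacuum module of $\Hfour$ is the parabolic Verma module induced from the one-dimensional $\finirr{0,0}$, which is a cyclic \hwm{} generated by the vacuum vector $\ket{0,0}$ whose conformal-weight-zero subspace is precisely the one-dimensional $\finirr{0,0}=\CC\ket{0,0}$ (the zero modes $E_0,F_0,I_0,J_0$ all annihilate $\ket{0,0}$, while only $\ahfour^-$ raises $L_0$). Everything therefore reduces to showing that any nonzero submodule meets this one-dimensional top, and hence contains the cyclic generator $\ket{0,0}$.

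To this end I would take a nonzero submodule $\amod{N}$ and let $d\ge0$ be the least $L_0$-eigenvalue occurring in it. Since $\ahfour^+$ strictly lowers $L_0$, the whole minimal graded piece $\amod{N}_d$ is annihilated by $\ahfour^+$, and $I_0$ acts as zero throughout because it is central and kills $\ket{0,0}$. The one thing requiring care is the production of a genuine \hwv{}, that is, a vector additionally killed by $E_0$: the zero mode $E_0$ preserves $L_0$ and raises the $J_0$-eigenvalue by one, so I would locate a vector of \emph{maximal} $J_0$-eigenvalue inside $\amod{N}_d$, which $E_0$ must then annihilate. Such a maximum exists because every vector of $L_0$-eigenvalue $d$ in the vacuum module satisfies $\abs{j'}\le d$ --- each mode from $\ahfour^-$ contributes at least $1$ to $L_0$ but at most $1$ in absolute value to $J_0$ --- so the $J_0$-eigenvalues appearing in $\amod{N}_d$ are bounded. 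The resulting vector is a \sv{} of conformal weight $d$.

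With a \sv{} in hand, \cref{lem:svs} finishes the argument: applied with $i=0$ it forces every \sv{} to have conformal weight $\hwconfwt{0,0}=0$, so $d=0$. Thus $\amod{N}$ meets $\CC\ket{0,0}$, hence contains $\ket{0,0}$, and so equals the entire module. The vacuum module is therefore irreducible --- equivalently, it coincides with $\irr{0,0}$ --- and $\Hfour$ is simple.

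The step I expect to be the main obstacle is exactly the one flagged above: the conformal-grading bound from \cref{lem:svs} is immediate once one has a \sv{}, but guaranteeing that a nonzero submodule really contains a \hwv{} (rather than merely an $\ahfour^+$-annihilated weight vector) is not automatic. A weight $\hfour$-module on which $I$ acts as zero can be dense, with no \hwv{} whatsoever, so the boundedness of the $J_0$-eigenvalues cannot be skipped; it is precisely the finiteness of the weight spaces together with the inequality $\abs{j'}\le d$ that rules out the dense possibility and pins the \sv{} into the one-dimensional top.
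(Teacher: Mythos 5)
Your proof is correct and takes essentially the same route as the paper: both arguments rest on \cref{lem:svs}, which (with $i=0$) pins every singular vector of the vacuum module --- equivalently of $\ver{0,0}^+ \bigm/ \ver{0,-1}^+$ --- at conformal weight $0$, so any nonzero submodule must reach the one-dimensional top and hence contain the generator. The one step you spell out that the paper leaves implicit is the production of a genuine singular vector inside a nonzero submodule (minimal $L_0$-eigenvalue, then maximal $J_0$-eigenvalue, using the bound $\abs{j'}\le d$), and making that explicit is a worthwhile refinement rather than a different method.
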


Another simple consequence of \cref{lem:svs} is that the Verma module $\ver{i,j}^+$ is irreducible if $i$ is irrational.  Moreover, if $0<i<1$, then the constraint on the $J_0$-and $L_0$-eigenvalues means that the only possible \svs\ are multiples of the generating \hwv.  $\ver{i,j}^+$ is thus irreducible for $0<i<1$.  However, $\ver{1,j}^+$ is reducible as taking $m=1$ in \cref{lem:svs} corresponds to the \sv\ obtained by acting with $E_{-1}$ on the \hwv.  In fact, acting with $E_{-1}^m$ results in a \sv\ for all $m \in \NN$.

While this is easy to check directly, another useful way to understand this is to note that $\ver{0,j}^+$ and $\ver{1,j}^+$ are related by the action of the conjugation, shift and spectral flow functors introduced at the end of \cref{sec:Hfour}.  In particular, straightforward computation gives
\begin{equation} \label{eq:twistedhwms}
	\begin{aligned}
		\conjmod{\irr{0,j}} &\cong \irr{0,-j}, & \shmod{\beta}{\irr{0,j}} &\cong \irr{0,j+\beta}, & \sfmod{\pm1}{\irr{0,j}} &\cong \irr{\pm1,j}^{\pm} & &\text{($j\in\CC$)}, \\
		\conjmod{\irr{i,j}^{\pm}} &\cong \irr{-i,-j}^{\mp}, & \shmod{\beta}{\irr{i,j}^{\pm}} &\cong \irr{i,j+\beta}^{\pm}, & \sfmod{\mp1}{\irr{i,j}^{\pm}} &\cong \irr{i\mp1,j}^{\mp} & &\text{($i\ne0,\ j\in\CC$)}, \\
		\conjmod{\ver{i,j}^{\pm}} &\cong \ver{-i,-j}^{\mp}, & \shmod{\beta}{\ver{i,j}^{\pm}} &\cong \ver{i,j+\beta}^{\pm}, & \sfmod{\mp1}{\ver{i,j}^{\pm}} &\cong \ver{i\mp1,j}^{\mp} & &\text{($i,j\in\CC$)},
	\end{aligned}
\end{equation}
where $\beta \in \CC$ and we recall that $\irr{0,j}^{\pm}$ is isomorphic to $\irr{0,j}$.  Note that applying the spectral flow functor $\sfaut^{\ell}$ for other (nonzero) values of $\ell$ results in modules with conformal weights that are unbounded below.  We emphasise that these modules are nevertheless still weight $\Hfour$-modules.

It also follows from \eqref{eq:twistedhwms} that the structure of $\ver{i,j}^+$ is independent of $j$ and is essentially the same as that of
\begin{equation} \label{eq:sflwm}
	\sfaut \conjmod{\ver{i,j}^+} \cong \ver{1-i,-j}^+.
\end{equation}
To understand Verma module structures, it therefore suffices to restrict to $i\ge\frac{1}{2}$.  As we have noted, Verma modules with $\frac{1}{2} \le i \le 1$ are easily analysed.  The case $i>1$ requires more work.
\begin{theorem} \label{thm:bjp}
	\leavevmode
	\begin{enumerate}
		\item For $i \notin \ZZ$, $\ver{i,j}^+$ is irreducible. \label{it:vermairr}
		\item For $i \in \ZZ$, $\ver{i,j}^+$ has a \sv\ $\chi$, unique up to scalar multiples, with $J_0$-eigenvalue $j+1$ or $j-1$, according as to whether $i>0$ or $i\le0$, respectively, and $L_0$-eigenvalue $\hwconfwt{i,j} + \abs{i}$. \label{it:vermasv}
		\item The maximal submodule of $\ver{i,j}^+$ is generated by $\chi$. \label{it:vermamax}
	\end{enumerate}
\end{theorem}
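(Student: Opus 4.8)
The plan is to determine all \svs\ of $\ver{i,j}^+$ from the eigenvalue constraints of \cref{lem:svs} and the relations \eqref{cr:ah4}, and then to promote a single \sv\ to a description of the whole maximal submodule. Using \eqref{eq:sflwm} together with the conjugation equivalence \eqref{eq:DefConj}, the structure of $\ver{i,j}^+$ is unchanged under $i \mapsto 1-i$, so I may assume $i \ge \tfrac12$; then \cref{lem:svs} places any proper \sv\ at $J_0$-eigenvalue $j+m$ and $L_0$-eigenvalue $\hwconfwt{i,j}+im$ with $m \ge 1$, and, as the $L_0$-grading lies in $\hwconfwt{i,j}+\NN$, forces $im \in \ZZ_{>0}$. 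The range $\tfrac12 \le i \le 1$ having been dealt with already, the remaining content is $i>1$. Driving everything is the contraction
\begin{equation*}
	\comm{F_n}{E_{-n}} \ket{i,j} = \brac*{n K - I_0} \ket{i,j} = (n-i) \ket{i,j},
\end{equation*}
whose vanishing at the resonance $n=i$ is exactly what permits a \sv\ to appear; by contrast the $I,J$ contractions $\comm{J_n}{I_{-n}} = n K$ never vanish.

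For the first two parts I would read a \sv\ of shift $m$ as a nonzero homomorphism $\ver{i,j+m}^+ \to \ver{i,j}^+$, the conformal weights matching since $\hwconfwt{i,j+m} = \hwconfwt{i,j}+im$. Annihilation by the raising part $\ahfour^+ \oplus \CC E_0$ is a finite linear condition on each (finite-dimensional) weight space: $\ahfour^+$ is generated by $E_1,F_1,I_1,J_1$ and the $J_n$ with $n \ge 2$ (no bracket in $\hfour$ produces $J$), and on a vector of degree $d$ only finitely many modes act nontrivially. Expanding an unknown weight vector over its \pbw\ basis and commuting the raising modes to the right via \eqref{cr:ah4}, every matrix entry of the resulting homogeneous system is a polynomial in the contraction factors $(n-i)$. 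One then shows that, at every shift $m$, the system is nonsingular for $i \notin \ZZ$ (no factor $(n-i)$ vanishes), giving irreducibility, while at shift $m=1$ it acquires a one-dimensional kernel for $i \in \ZZ$, giving the unique \sv\ $\chi$ with the stated weights. Carrying out this expansion uniformly in $i>1$, that is, exhibiting $\chi$ explicitly as a combination over the shift-$1$, degree-$i$ weight space and verifying the annihilation identities, is the combinatorial core of the argument and the step I expect to be the main obstacle; spectral flow and conjugation supply only the reflection $i\mapsto 1-i$ and so cannot reduce $i \ge 2$ to the elementary range, forcing a direct computation, which I would place in \cref{app:proof}.

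Granting $\chi$, I record that $\langle\chi\rangle \cong \ver{i,j+1}^+$. Since $\chi$ is singular, $\langle\chi\rangle = \envalg{\ahfour}\chi = \envalg{\ahfour^- \oplus \CC F_0}\,\chi$; and as $\envalg{\ahfour^- \oplus \CC F_0}$ is a domain (by the \pbw\ theorem) and $\ver{i,j}^+$ is free of rank one over it, the assignment $u \mapsto u\chi$ is injective, so the submodule generated by $\chi$ is itself a Verma module of the asserted highest weight.

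It remains to prove the final claim, that the maximal submodule $M \supseteq \langle\chi\rangle$ is in fact $\langle\chi\rangle$, equivalently that $\ver{i,j}^+/\langle\chi\rangle$ is irreducible. The delicate point is that a \sv\ of the quotient need not lift to a \sv\ of $\ver{i,j}^+$, so the first two parts do not close the argument; I would instead show directly, by the same weight-space combinatorics of \cref{app:proof}, that any vector killed by $\ahfour^+$ and $E_0$ modulo $\langle\chi\rangle$ already lies in $\langle\chi\rangle$. Equivalently, using $\langle\chi\rangle \cong \ver{i,j+1}^+$ one has $\ch{\ver{i,j}^+/\langle\chi\rangle} = \ch{\ver{i,j}^+} - \ch{\ver{i,j+1}^+}$, and it suffices to match this with the character of the simple quotient $\irr{i,j}^+$; since $\langle\chi\rangle$ again has $I_0$-eigenvalue $i$, it carries its own \sv\ at the next shift, and iterating identifies the full chain of singular vectors inside $\langle\chi\rangle$.
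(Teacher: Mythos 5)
Your outline follows the same broad strategy as the paper's proof (reduction to $i \ge \tfrac{1}{2}$ via \eqref{eq:sflwm}, \pbw\ expansion of a putative \sv, analysis of the linear constraints imposed by the raising modes, and the observation that $\langle\chi\rangle \cong \ver{i,j+1}^+$ because $\envalg{\ahfour^- \oplus \CC F_0}$ is a domain). But as written it is a plan rather than a proof: the entire combinatorial core is deferred. For parts (i) and (ii) you assert that the homogeneous system is ``nonsingular for $i \notin \ZZ$'' and ``acquires a one-dimensional kernel'' at shift $m=1$ for $i \in \ZZ$, but you concede that exhibiting $\chi$ and verifying the annihilation identities is ``the main obstacle'', and you never carry it out. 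This is precisely where the work lies: the paper does it by successively eliminating the $J$-modes, the $F$-modes and the $E_{-n}$-modes with $n>i$ from $U$ (using $I_n \chi = 0$, $E_n \chi = 0$ and $F_n \chi = 0$, respectively), reducing the remaining constraints to a recursion \eqref{eq:tidy} on coefficients indexed by partitions, and then --- for \emph{existence}, which your sketch does not even isolate as a separate issue --- checking that this recursion is consistent (the solution is independent of the order in which parts are removed) and that the $F_1$-constraint is already implied by the $J_n$-constraints. Without this, parts (i) and (ii) are unproven; note that for rational non-integral $i$ the eigenvalue constraints of \cref{lem:svs} alone do not exclude \svs, so some such computation is unavoidable.

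The more serious flaw is in part (iii). You correctly flag that a \sv\ of $\ver{i,j}^+ / \langle\chi\rangle$ need not lift to a \sv\ of $\ver{i,j}^+$, but your two proposed remedies both fail as stated. The ``direct'' route is again deferred to an unspecified computation, whereas the paper's Step 4 requires a genuinely new device: given a subsingular $\psi = U\ket{i,j}$, one strips $J$-, $F$- and high $E$-modes from $U$ \emph{modulo} the submodule $M$ (for instance, writing $U = J_{-n}^k V + W$ and applying $I_n^k$ to conclude $V\ket{i,j} \in M$), and then strips the $E_{-i}$-modes by subtracting multiples $V \chi^k$ of the already-constructed \svs. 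The character route is circular: you propose to match $\ch{\ver{i,j}^+} - \ch{\ver{i,j+1}^+}$ against the character of $\irr{i,j}^+$, but for $i \in \ZZ \setminus \set{0}$ that character is not known independently --- in the paper it is \cref{cor:hwch}, a \emph{consequence} of \cref{thm:bjp}(iii). Likewise, identifying ``the full chain of singular vectors inside $\langle\chi\rangle$'' says nothing about subsingular vectors, which is exactly the delicate point you yourself identified. So part (iii) is not established by your proposal.
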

\noindent The (somewhat lengthy) proof of this \lcnamecref{thm:bjp} is deferred to \cref{app:proof}.

The first assertion of \cref{thm:bjp} was proven in \cite[Thm.~3.3]{BaoRep11} using a combinatorial argument split into many cases.  Our proof is likewise combinatorial but shorter, avoiding the need for cases.  \cite[Thm.~3.3]{BaoRep11} also gives necessary conditions for the existence of the \sv\ that we call $\chi$ above.  Our proof demonstrates that a proper subset of these conditions are sufficient and moreover establishes uniqueness.  The third assertion of \cref{thm:bjp} seems to be new.

\section{Relaxed \hw\ $\Hfour$-modules} \label{sec:rhwH4}

We now turn our attention to the \rhwms\ of $\ahfour$ and $\Hfour$.  We first define a \rhwv\ to be a weight vector for $\ahfour$ that is annihilated by $\ahfour^+$.  A \rhw\ $\ahfour$-module is then one that is generated by a single \rhwv.  Every \hw\ $\ahfour$-module is, of course, a \rhwm, but there are many more.  For example, the $\ahfour$-module $\rel{i,[j]}{h}$, with $i\ne0$ and $h \notin i[j]$, induced from the irreducible $\hfour$-module $\finrel{i,[j]}{h}$, is not highest-weight with respect to any Borel subalgebra.  However, it is a \rhw\ $\ahfour$-module.  Note that the minimal conformal weight of $\rel{i,[j]}{h}$ is $\rhwconfwt{i}{h}$, by \eqref{eq:confwt'}.  Vectors of this minimal conformal weight will be referred to as ground states and will be denoted by $\ket{i,j';\rhwconfwt{i}{h}}$, where $i$ and $j'$ are the $I_0$- and $J_0$-eigenvalues, respectively.

Denote by $\sub{i,[j]}{h}$ the sum of all the submodules of $\rel{i,[j]}{h}$ that have zero intersection with the space of ground states.  The quotient
\begin{equation} \label{eq:defE}
	\quo{i,[j]}{h} = \rel{i,[j]}{h}\bigm/\sub{i,[j]}{h}
\end{equation}
is then irreducible because $\sub{i,[j]}{h}$ coincides with the maximal submodule $\maxsub{i,[j]}{h}$ of $\rel{i,[j]}{h}$. This follows from the fact that $\rel{i,[j]}{h}$ is induced from an irreducible $\hfour$-module.  Note that as \rhwms\ are generated by a single weight vector, they have unique maximal submodules.

In case $i=0$ and $h\ne0$, inducing the irreducible $\finrel{0,[j]}{h}$ results in a \rhw\ $\ahfour$-module that we denote by $\rel{0,[j]}{h}$.  These induced modules all turn out to be irreducible, a fact that we prove in \cref{app:i=0proof}.  This completes the list of irreducible \rhwms.
\begin{theorem} \label{prop:classH-mod}
	Every irreducible \rhw\ $\Hfour$-module is isomorphic to one, and only one, of the following modules:
	\begin{itemize}
		\item $\irr{0,j}$, with $j\in\CC$.
		\item $\irr{i,j}^{\pm}$, with $i\in\CC\setminus\set{0}$ and $j\in\CC$.
		\item $\quo{i,[j]}{h}$, with $i\in\CC\setminus\set{0}$, $[j]\in\CC/\ZZ$, $h\in\CC$ and $h/i \notin [j]$.
		\item $\rel{0,[j]}{h}$, with $[j]\in\CC/\ZZ$ and $h\in\CC\setminus\set{0}$.
	\end{itemize}
\end{theorem}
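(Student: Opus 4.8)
The plan is to reduce the classification to \cref{prop:finclass} by exploiting the fact that the zero modes $E_0,F_0,I_0,J_0$ of $\ahfour$ act on the minimal-conformal-weight space of any relaxed module, turning it into a weight $\hfour$-module. Since $\Hfour$ is universal and simple (\cref{prop:H4simple}), an irreducible relaxed $\Hfour$-module may be regarded as an irreducible relaxed weight $\ahfour$-module of level $1$, so I would work with the latter. For the four families in the statement, irreducibility is already in hand: $\irr{0,j}$ and $\irr{i,j}^{\pm}$ are by definition the irreducible quotients of the Verma modules of \cref{sec:hwH4}, $\quo{i,[j]}{h}$ is irreducible by the discussion around \eqref{eq:defE}, and $\rel{0,[j]}{h}$ is irreducible by \cref{app:i=0proof}. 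What remains is to prove completeness and mutual inequivalence.

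For completeness, let $\amod{M}$ be an arbitrary irreducible relaxed $\ahfour$-module. Since $\ahfour^+$ strictly lowers, $\ahfour^-$ strictly raises and $\ahfour^0$ preserves conformal weight, and since $\amod{M}$ is generated by a single \rhwv{}, its conformal weights lie in $\Delta_0+\NN$ for a unique minimal $\Delta_0$. The weight space $T=\amod{M}_{\Delta_0}$ is then nonzero, is annihilated by $\ahfour^+$, and is preserved by the zero modes, so it is a weight $\hfour$-module on which $K$ acts as $1$. I claim $T$ is irreducible: if $0\ne T'\subsetneq T$ were an $\hfour$-submodule, then the $\ahfour$-submodule it generates would equal $\envalg{\ahfour^-}T'$ (because $\ahfour^+T'=0$ and $\ahfour^0T'\subseteq T'$), and grading by conformal weight gives $\brac{\envalg{\ahfour^-}T'}\cap T=T'\subsetneq T$, producing a proper nonzero submodule of $\amod{M}$ and contradicting irreducibility. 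Hence, by \cref{prop:finclass}, $T$ is one of $\finirr{0,j}$, $\finver{i,j}^{\pm}$ or $\finrel{i,[j]}{h}$.

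Now let $\widehat{T}$ be the $\ahfour$-module induced from $T$ (a parabolic Verma, Verma $\ver{i,j}^{\pm}$, or relaxed Verma $\rel{i,[j]}{h}$ module, according to the type of $T$); its top space is $\cong T$ by the \pbw{} theorem. Because submodules of a weight module are graded by conformal weight and $T$ is irreducible, every proper submodule of $\widehat{T}$ meets $T$ trivially, hence is supported strictly above $\Delta_0$; the sum of all proper submodules is therefore again proper, so $\widehat{T}$ has a unique maximal submodule and a unique irreducible quotient $L(T)$, whose top is $\cong T$. The inclusion $T\hookrightarrow\amod{M}$ induces a surjection $\widehat{T}\twoheadrightarrow\amod{M}$ that is the identity on $T$; its kernel meets $T$ trivially and so lies in the maximal submodule, whence $\amod{M}$ surjects onto $L(T)$. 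As both are irreducible, $\amod{M}\cong L(T)$. Reading off $L(T)$ for each type of top space yields $L(\finirr{0,j})\cong\irr{0,j}$, $L(\finver{i,j}^{\pm})\cong\irr{i,j}^{\pm}$, $L(\finrel{i,[j]}{h})\cong\quo{i,[j]}{h}$ for $i\ne0$, and $L(\finrel{0,[j]}{h})\cong\rel{0,[j]}{h}$ for $h\ne0$ --- exactly the four listed families. Mutual inequivalence is then immediate because $T$ is recovered intrinsically as the minimal-conformal-weight space of $\amod{M}$, so any isomorphism $\amod{M}\cong\amod{M}'$ induces an isomorphism of the underlying irreducible $\hfour$-modules, which by \cref{prop:finclass} forces all labels to agree.

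The reduction itself is clean once the conformal-weight grading is available, and it needs nothing beyond \cref{prop:finclass}; the real content is concentrated in the structural inputs guaranteeing that the induced modules behave as the list asserts. For the dense modules with $i\ne0$ this is the identification of $\sub{i,[j]}{h}$ with the maximal submodule around \eqref{eq:defE}, and for the \hw{} families it is definitional together with \cref{thm:bjp}. I expect the main obstacle to be the $i=0$ dense case, where the list names the \emph{induced} module $\rel{0,[j]}{h}$ itself: here one must show that $\widehat{T}=\rel{0,[j]}{h}$ has no maximal submodule at all, i.e.\ that it is already irreducible. Unlike the other cases, this cannot be detected by locating a \sv{}, and it is precisely the fact deferred to \cref{app:i=0proof}.
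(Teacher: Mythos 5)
Your proof is correct and takes essentially the same route as the paper: the paper also obtains the classification by inducing the irreducible weight $\hfour$-modules of \cref{prop:finclass} and passing to unique irreducible quotients, with the irreducibility of $\quo{i,[j]}{h}$ (via $\sub{i,[j]}{h}=\maxsub{i,[j]}{h}$) and of $\rel{0,[j]}{h}$ (\cref{app:i=0proof}) as the key structural inputs. The only difference is that you spell out explicitly the top-space/induction argument for completeness and mutual inequivalence, which the paper leaves implicit (``this completes the list'').
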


It is straightforward to compute the effect of applying the conjugation and shift functors to each of the $\quo{i,[j]}{h}$ and $\rel{0,[j]}{h}$.  The results are
\begin{equation}
	\begin{aligned}
		\conjmod{\quo{i,[j]}{h}} &\cong \quo{-i,[-j]}{h+i}, & \conjmod{\rel{0,[j]}{h}} &\cong \rel{0,[-j]}{h}, \\
		\shmod{\beta}{\quo{i,[j]}{h}} &\cong \quo{i,[j+\beta]}{h+\beta i}, & \shmod{\beta}{\rel{0,[j]}{h}} &\cong \rel{0,[j+\beta]}{h}.
	\end{aligned}
\end{equation}
On the other hand, applying the spectral flow functor $\sfaut^{\ell}$, with $\ell \ne 0$, to $\quo{i,[j]}{h}$ or $\rel{0,[j]}{h}$ results in an irreducible weight $\Hfour$-module with conformal weights that are unbounded below.

We may also induce reducible $\hfour$-modules, noting that the result will again be reducible.  For example, the induction $\rel{0,j}{0}$ of $\finrel{0,j}{0}$ is reducible.  In this case, the fact that the conformal weight of a \rhwv\ of $\rel{0,j}{0}$ is $0$ shows that every nonzero submodule of $\rel{0,j}{0}$ intersects the space of ground states nontrivially.  In other words, $\sub{0,j}{0} = 0$ for all $j\in\CC$.  However, the maximal submodule $\maxsub{0,j}{0}$ is nonzero --- actually it is isomorphic to $\ver{0,j-1}^+ \oplus \ver{0,j+1}^-$, by \eqref{es:finR} and the exactness of induction.

Consider next the \rhwms\ $\rel{i}{h}^\pm$ that are induced from the reducible dense $\hfour$-modules $\finrel{i}{h}^\pm$. Here, we assume that $i\ne 0$ and recall that the $J_0$-eigenvalues lie in $[h/i]$.  Denote by $\sub{i}{h}^\pm$ the sum of all submodules of $\rel{i}{h}^\pm$ that have zero intersection with the space of ground states and define
\begin{equation} \label{eq:defEpm}
	\quo{i}{h}^\pm = \rel{i}{h}^\pm\bigm/\sub{i}{h}^\pm.
\end{equation}
These quotients are reducible because the maximal submodule $\maxsub{i}{h}^{\pm}$ obviously has nonzero intersection with the space of ground states. However, they are indecomposable and we shall show that they have precisely two composition factors, one \hw\ and one \lw.

First, we again identify the conjugates and shifts of these reducible $\Hfour$-modules:
\begin{equation}
	\begin{aligned}
		\conjmod{\quo{i}{h}^{\pm}} &\cong \quo{-i}{h+i}^{\mp}, &
		\conjmod{\rel{i}{h}^{\pm}} &\cong \rel{-i}{h+i}^{\mp}, &
		\conjmod{\rel{0,j}{0}} &\cong \rel{0,-j}{0}, & \\
		\shmod{\beta}{\quo{i}{h}^{\pm}} &\cong \quo{i}{h+\beta i}^{\pm}, &
		\shmod{\beta}{\rel{i}{h}^{\pm}} &\cong \rel{i}{h+\beta i}^{\pm}, &
		\shmod{\beta}{\rel{0,j}{0}} &\cong \rel{0,j+\beta}{0}.
	\end{aligned}
\end{equation}
As with the irreducible \rhwms, nontrivial spectral flow results in $\Hfour$-modules whose conformal weights are unbounded below.

Next, note that induction being exact means that \eqref{es:finRpm} gives nonsplit exact sequences
\begin{equation} \label{es:Rpm}
	\begin{aligned}
		0 &\lra \ver{i,h/i}^+ \lra \rel{i}{h}^+ \lra \ver{i,h/i+1}^- \lra 0, \\
		0 &\lra \ver{i,h/i+1}^- \lra \rel{i}{h}^- \lra \ver{i,h/i}^+ \lra 0,
	\end{aligned}
\end{equation}
for $i\ne0$.  To deduce analogous sequences for the $\quo{i}{h}^{\pm}$, we need some preparatory lemmas.
\begin{lemma} \label{le:Esubquo}
	If $i\ne0$ and $h \in \CC$, then:
	\begin{enumerate}
		\item The unique irreducible quotient of $\quo{i}{h}^+$ is isomorphic to $\irr{i,h/i+1}^-$.
		\item $\quo{i}{h}^+$ has an irreducible submodule isomorphic to $\irr{i,h/i}^+$.
	\end{enumerate}
\end{lemma}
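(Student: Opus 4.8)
The plan is to deduce both statements from the nonsplit exact sequence \eqref{es:Rpm} for $\rel{i}{h}^+$, combined with the observation that $\sub{i}{h}^+$ is supported strictly above the minimal conformal weight. First I would record two preliminary facts. The inducing module $\finrel{i}{h}^+$ is a cyclic $\hfour$-module: it is generated by the weight vector lifting the lowest-weight vector of the quotient $\finver{i,h/i+1}^-$ in \eqref{es:finRpm}, since nonsplitness forces the action of $F$ on this vector to reach the top of the submodule $\finver{i,h/i}^+$. Hence $\rel{i}{h}^+$ is generated by a single \rhwv, so it and its quotient $\quo{i}{h}^+$ are \rhwms\ and therefore have unique maximal submodules and unique irreducible quotients. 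Second, a submodule of $\rel{i}{h}^+$ has zero intersection with the ground states if and only if it is supported in conformal weights strictly greater than $\rhwconfwt{i}{h}$; since this condition is preserved under sums of subspaces, $\sub{i}{h}^+$ is itself disjoint from the ground states and, being proper, is contained in the maximal submodule $\maxsub{i}{h}^+$.

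For (i), I would compose the surjection $\rel{i}{h}^+ \sra \ver{i,h/i+1}^-$ coming from \eqref{es:Rpm} with the projection of this \lw\ Verma module onto its irreducible quotient $\irr{i,h/i+1}^-$. This exhibits $\rel{i}{h}^+$ as mapping onto an irreducible module, so the kernel is a maximal submodule and, by uniqueness, equals $\maxsub{i}{h}^+$. Since $\sub{i}{h}^+ \subseteq \maxsub{i}{h}^+$, the correspondence theorem identifies the unique irreducible quotient of $\quo{i}{h}^+ = \rel{i}{h}^+ / \sub{i}{h}^+$ with $\rel{i}{h}^+ / \maxsub{i}{h}^+ \cong \irr{i,h/i+1}^-$.

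For (ii), I would follow the image of the submodule $\ver{i,h/i}^+ \subseteq \rel{i}{h}^+$ from \eqref{es:Rpm} under the quotient map, namely $\ver{i,h/i}^+ / (\ver{i,h/i}^+ \cap \sub{i}{h}^+)$. By \cref{thm:bjp}, the maximal submodule $\langle \chi \rangle$ of the \hw\ Verma module $\ver{i,h/i}^+$ is either zero (when $i \notin \ZZ$) or generated by a singular vector of conformal weight $\hwconfwt{i,h/i} + \abs{i}$; as $i \neq 0$ this weight strictly exceeds $\rhwconfwt{i}{h}$, so in either case $\langle \chi \rangle$ is supported above the ground states and hence $\langle \chi \rangle \subseteq \sub{i}{h}^+$. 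Meanwhile the generating \hwv\ $\ket{i,h/i}$ is a ground state and so does not lie in $\sub{i}{h}^+$; thus $\ver{i,h/i}^+ \cap \sub{i}{h}^+$ is a proper submodule of $\ver{i,h/i}^+$ containing $\langle \chi \rangle$, and as $\langle \chi \rangle$ is the unique maximal submodule the two coincide. Therefore the image is $\ver{i,h/i}^+ / \langle \chi \rangle \cong \irr{i,h/i}^+$, an irreducible submodule of $\quo{i}{h}^+$.

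The one place requiring genuine care, rather than routine diagram chasing, is the conformal-weight bookkeeping underlying both parts: establishing that disjointness from the ground states is exactly support in conformal weights above $\rhwconfwt{i}{h}$ (so that the a priori large sum $\sub{i}{h}^+$ stays ground-state-disjoint), and checking with \eqref{eq:confwt} and \eqref{eq:confwt'} that $\hwconfwt{i,h/i} = \rhwconfwt{i}{h}$ while the singular vector $\chi$ sits strictly higher precisely because $i \neq 0$. Once these weight comparisons are in hand, both assertions follow from the uniqueness of maximal submodules for \rhwms.
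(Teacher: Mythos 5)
Your proof is correct, and its overall architecture is the same as the paper's: both parts are deduced from the exact sequences \eqref{es:Rpm}, the uniqueness of maximal submodules of \rhwms, and the isomorphism theorems, with part (i) essentially identical to the paper's argument. The one genuine divergence is in part (ii), at the key step showing that the maximal submodule $\amod{M}$ of $\ver{i,h/i}^+$ is contained in $\sub{i}{h}^+$. You invoke \cref{thm:bjp} to identify $\amod{M}$ as the submodule generated by a singular vector of conformal weight $\rhwconfwt{i}{h}+\abs{i}$ (or as zero when $i\notin\ZZ$) and then argue by conformal-weight bookkeeping; the paper instead observes, more cheaply, that $\amod{M}$ meets the ground-state space of $\ver{i,h/i}^+$ trivially because that space is the $\hfour$-module $\finver{i,h/i}^+$, which is irreducible precisely because $i\ne0$, so the intersection, being a proper $\hfour$-submodule, must vanish. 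Both routes are valid; yours leans on the full strength of the appendix structure theorem and splits into the cases $i\in\ZZ$ and $i\notin\ZZ$, whereas the paper's observation is self-contained, uniform in $i$, and independent of \cref{thm:bjp}. Two points in your favour: you justify explicitly that $\rel{i}{h}^+$ is cyclic (via nonsplitness of \eqref{es:finRpm}), and you verify that $\sub{i}{h}^+$, a priori a large sum of submodules, is itself disjoint from the ground states because ground-state-disjointness is equivalent to support in conformal weights strictly above $\rhwconfwt{i}{h}$, a condition stable under sums --- a fact the paper uses only implicitly.
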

\begin{proof}
	As $\ver{i,h/i+1}^-$ is a quotient of $\rel{i}{h}^+$, by \eqref{es:Rpm}, so is $\irr{i,h/i+1}^-$. Since \rhwms\ have unique maximal submodules, it follows that $\irr{i,h/i+1}^-$ is the unique simple quotient of $\rel{i}{h}^+$. To establish this for $\quo{i}{h}^+$, note that $\maxsub{i}{h}^+/\sub{i}{h}^+$ is a submodule of $\quo{i}{h}^+$ and that
	\begin{equation}
		\frac{\quo{i}{h}^+}{\maxsub{i}{h}^+/\sub{i}{h}^+} \cong \frac{\rel{i}{h}^+/\sub{i}{h}^+}{\maxsub{i}{h}^+/\sub{i}{h}^+} \cong \frac{\rel{i}{h}^+}{\maxsub{i}{h}^+} \cong \irr{i,h/i+1}^-.
	\end{equation}

	We also know that $\ver{i,h/i}^+$ is a submodule of $\rel{i}{h}^+$, by \eqref{es:Rpm}. Since the maximal submodule $\amod{M}$ of $\ver{i,h/i}^+$ has zero intersection with its space of ground states, this space being isomorphic to the irreducible $\hfour$-module $\finver{i,h/i}^+$, we have $\amod{M} = \ver{i,h/i}^+ \cap \sub{i}{h}^+$. Thus,
	\begin{equation}
		\irr{i,h/i}^+ \cong \frac{\ver{i,h/i}^+}{\amod{M}} = \frac{\ver{i,h/i}^+}{\ver{i,h/i}^+ \cap \sub{i}{h}^+} \cong \frac{\ver{i,h/i}^+ + \sub{i}{h}^+}{\sub{i}{h}^+} \lira \frac{\rel{i}{h}^+}{\sub{i}{h}^+} \cong \quo{i}{h}^+ . \qedhere
	\end{equation}
\end{proof}


Given a weight $\ahfour$-module $\amod{W}$, we shall denote by $\amod{W}(i,j;\Delta)$ its weight space of $I_0$-eigenvalue $i$, $J_0$-eigenvalue $j$ and $L_0$-eigenvalue $\Delta$.
\begin{lemma} \label{le:IndDim}
	Given $i\ne0$ and $h \in \CC$, we have
	\begin{align}
		\sub{i}{h}^+(i,\tfrac{h}{i}+m;\rhwconfwt{i}{h} + n) = \maxsub{i}{h}^+(i,\tfrac{h}{i}+m;\rhwconfwt{i}{h} + n),
	\end{align}
	for all $m > n \ge 0$.
\end{lemma}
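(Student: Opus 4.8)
The plan is to leverage the obvious inclusion $\sub{i}{h}^+ \subseteq \maxsub{i}{h}^+$, which holds because $\sub{i}{h}^+$, being a sum of submodules each meeting the ground states trivially, is itself a proper submodule meeting the ground states trivially, whereas $\maxsub{i}{h}^+$ is the unique maximal submodule. This already yields one inclusion of the weight spaces in question, so only the reverse needs proof. First I would invoke \cref{le:Esubquo}, whose proof identifies the unique irreducible quotient of $\rel{i}{h}^+$ and thereby gives $\rel{i}{h}^+/\maxsub{i}{h}^+ \cong \irr{i,h/i+1}^-$. As this is a \lwm\ all of whose ground states carry $J_0$-eigenvalue at least $h/i+1$, the ground states of $\rel{i}{h}^+$ of $J_0$-eigenvalue exceeding $h/i$ map isomorphically onto them and hence lie outside $\maxsub{i}{h}^+$. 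So $\maxsub{i}{h}^+$ contains no ground state of $J_0$-eigenvalue greater than $h/i$.

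The core of the argument is then a level-counting estimate. Grade $\rel{i}{h}^+$ by level $L_0-\rhwconfwt{i}{h} \in \NN$, so that the ground states are precisely the level-$0$ vectors and no vector has negative level. Fix $v \in \maxsub{i}{h}^+(i,\tfrac{h}{i}+m;\rhwconfwt{i}{h}+n)$ with $m > n \ge 0$; the claim is that the submodule $\envalg{\ahfour}v$ meets the ground states trivially. Using the \pbw\ factorisation $\envalg{\ahfour} = \envalg{\ahfour^-}\,\envalg{\ahfour^0}\,\envalg{\ahfour^+}$, any level-$0$ vector of $\envalg{\ahfour}v$ is a combination of vectors $u_0 u_+ v$ with $u_0 \in \envalg{\ahfour^0}$ and $u_+ \in \envalg{\ahfour^+}$: a nontrivial factor from $\envalg{\ahfour^-}$ would raise the level, and $u_+$ must lower the level of $v$ by exactly $n$, since a larger drop would land in negative level and so vanish. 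Writing $u_+$ as a monomial in the positive modes $A_p$ ($p \ge 1$), its total level-degree equals $n$, while its $J_0$-degree is bounded in absolute value by the number of its factors and hence by $n$, as each factor contributes at least $1$ to the level. Consequently each $u_+ v$ is a ground state of $\rel{i}{h}^+$, lying in $\maxsub{i}{h}^+$, of $J_0$-eigenvalue at least $(\tfrac{h}{i}+m)-n \ge \tfrac{h}{i}+1$. By the first paragraph no such vector is nonzero, so $u_+ v = 0$ and $\envalg{\ahfour}v$ indeed contains no ground state.

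It follows that $\envalg{\ahfour}v$ is one of the submodules whose sum defines $\sub{i}{h}^+$, so $v \in \sub{i}{h}^+$; this yields $\maxsub{i}{h}^+(i,\tfrac{h}{i}+m;\rhwconfwt{i}{h}+n) \subseteq \sub{i}{h}^+(i,\tfrac{h}{i}+m;\rhwconfwt{i}{h}+n)$, and together with the reverse inclusion the asserted equality. I expect the only subtle point to be the \pbw\ bookkeeping: one must track the level and the $J_0$-eigenvalue simultaneously to guarantee that any ground state created inside $\envalg{\ahfour}v$ necessarily has $J_0$-eigenvalue exceeding $h/i$, which is exactly what $\maxsub{i}{h}^+$ forbids. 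The hypothesis $m > n$ is precisely the inequality that forces $(\tfrac{h}{i}+m)-n \ge \tfrac{h}{i}+1$ and hence makes the estimate work.
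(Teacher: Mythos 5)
Your proof is correct and follows essentially the same route as the paper's: both reduce to showing that the submodule generated by $v$ contains no ground states, via the same \pbw\ level-counting argument forcing any ground state of $\envalg{\ahfour}v$ to arise from $u_+ v$ with $u_+$ of level degree $n$, hence of $J_0$-eigenvalue at least $h/i+m-n \ge h/i+1$. The only cosmetic difference is that you exclude such ground states from $\maxsub{i}{h}^+$ by appealing to \cref{le:Esubquo}, whereas the paper notes directly that ground states with $j > h/i$ generate all of $\rel{i}{h}^+$ and so cannot lie in any proper submodule.
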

\begin{proof}
	We have $\sub{i}{h}^+ \subseteq \maxsub{i}{h}^+$, by definition, so assume that $v \in \maxsub{i}{h}^+(i,\frac{h}{i}+m;\rhwconfwt{i}{h} + n)$ for some $m>n\ge0$. Because each ground state $\ket{i,j;\rhwconfwt{i}{h}}$ with $j > h/i$ generates $\rel{i}{h}^+$, the submodule $\amod{W}_{v}$ generated by $v$ must not contain any ground states with $j > h/i$. Assume that $\amod{W}_{v}$ contains a ground state with $j \le h/i$. By applying \pbw\ basis elements (with mode indices increasing to the right), this implies that $\ket{i,\frac{h}{i}+m-n;\rhwconfwt{i}{h}}$ is also an element of $\amod{W}_{v}$. Since $m-n >0$, this is a contradiction and so $\amod{W}_{v}$ must have zero intersection with the space of ground states. Hence $v \in \sub{i}{h}^+$.
\end{proof}

Combing these lemmas, we obtain the following analogues of \eqref{es:Rpm} for the $\quo{i}{h}^{\pm}$.

\begin{theorem} \label{th:IndecompSES}
	If $i\ne0$ and $h \in \CC$, then we have the following nonsplit short exact sequences:
	\begin{equation} \label{es:Epm}
		\begin{aligned}
			0 &\lra \irr{i,h/i}^+ \lra \quo{i}{h}^+ \lra \irr{i,h/i+1}^- \lra 0, \\
			0 &\lra \irr{i,h/i+1}^- \lra \quo{i}{h}^- \lra \irr{i,h/i}^+ \lra 0.
		\end{aligned}
	\end{equation}
\end{theorem}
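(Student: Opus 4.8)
The plan is to establish each of the two short exact sequences in \eqref{es:Epm} by combining the structural information already assembled in \cref{le:Esubquo,le:IndDim}. I will focus on the first sequence for $\quo{i}{h}^+$; the second, for $\quo{i}{h}^-$, follows by applying the conjugation functor, since $\conjmod{\quo{i}{h}^{+}} \cong \quo{-i}{h+i}^{-}$ and conjugation sends short exact sequences to short exact sequences while exchanging \hw\ and \lw\ factors.

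For the first sequence, \cref{le:Esubquo} already gives both of the desired ``endpoints'': part (i) identifies $\irr{i,h/i+1}^-$ as the unique irreducible quotient of $\quo{i}{h}^+$, and part (ii) exhibits $\irr{i,h/i}^+$ as an irreducible submodule. So the map structure is in place; what remains is to verify that these are the \emph{only} two composition factors, i.e.\ that the kernel of the surjection $\quo{i}{h}^+ \sra \irr{i,h/i+1}^-$ is exactly $\irr{i,h/i}^+$ and not something larger. First I would set $\amod{K} = \ker\brac[\big]{\quo{i}{h}^+ \sra \irr{i,h/i+1}^-}$ and observe, using the submodule from \cref{le:Esubquo}(ii), that $\irr{i,h/i}^+ \subseteq \amod{K}$; the task is to prove equality. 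To do this I would compare graded dimensions weight space by weight space, which is precisely what \cref{le:IndDim} is designed to control: it shows $\sub{i}{h}^+$ and $\maxsub{i}{h}^+$ agree on all weight spaces $\rhwconfwt{i}{h}+n$ with $J_0$-eigenvalue $\frac{h}{i}+m$ for $m>n\ge0$, which is exactly the ``\lw\ half'' of the ground-state support that should be killed in passing to the quotient. Using the exactness of induction applied to \eqref{es:finRpm}, together with the character of $\ver{i,h/i}^+$ and the known character of the irreducible $\irr{i,h/i}^+$ (obtainable from \cref{thm:bjp}), I would match dimensions in each weight space and conclude $\amod{K} = \irr{i,h/i}^+$.

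The exactness at the left and right ends then gives the sequence; its nonsplitness is the remaining point. I would argue this by contradiction: a splitting would make $\quo{i}{h}^+ \cong \irr{i,h/i}^+ \oplus \irr{i,h/i+1}^-$ decomposable, contradicting that $\quo{i}{h}^+$ is a quotient of the \rhwm\ $\rel{i}{h}^+$, which is generated by a single \rhwv\ and hence indecomposable (a single-generator weight module cannot decompose as a direct sum of two nonzero submodules, since the generator would have to lie in one summand and then generate only that summand). Indeed the discussion preceding \cref{le:Esubquo} already asserts indecomposability of $\quo{i}{h}^+$, so I would simply invoke that.

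The step I expect to be the main obstacle is pinning down that the kernel $\amod{K}$ is \emph{exactly} $\irr{i,h/i}^+$ rather than an extension with extra factors lurking in higher weight spaces. The danger is that $\maxsub{i}{h}^+/\sub{i}{h}^+$ might a priori contain composition factors beyond the single \hw\ one; ruling this out is where \cref{le:IndDim} must be used in its full strength, and I would need to check carefully that the weight-space identity there, valid for $m>n\ge0$, accounts for \emph{all} weight spaces where the two submodules could differ (in particular handling the boundary cases $m \le n$ and $m \le 0$ separately by the ground-state-generation argument from the proof of \cref{le:IndDim}). Once the graded dimension count closes, the rest is formal.
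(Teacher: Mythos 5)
Your frame matches the paper's: both endpoints come from \cref{le:Esubquo}, the second sequence follows from the first by conjugation, and nonsplitness reduces to indecomposability (the paper itself argues nonsplitness differently, by restricting to ground states, where splitting would contradict the nonsplitness of \eqref{es:finRpm}). The genuine gap is in your central step, identifying the kernel $\amod{K} = \maxsub{i}{h}^+ / \sub{i}{h}^+$ with $\irr{i,h/i}^+$ by ``matching dimensions in each weight space''. This cannot be carried out with the data available at this point: a dimension count needs the character of $\quo{i}{h}^+$ (equivalently of $\maxsub{i}{h}^+$ or $\sub{i}{h}^+$), and in the paper that character (\cref{cor:hwch}) is a \emph{consequence} of this very theorem, not an input. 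What you do have from \cref{le:IndDim} is only that $\amod{K}$ has vanishing weight spaces in the region $(j,\Delta) = (\tfrac{h}{i}+m, \rhwconfwt{i}{h}+n)$ with $m > n \ge 0$; but $\irr{i,h/i}^+$ also vanishes there (raising the $J_0$-eigenvalue of a \hw\ module by $m>0$ costs at least $m$ in conformal weight), so this gives no information in the complementary region, which is precisely where $\amod{K}$ and $\irr{i,h/i}^+$ must be compared. Nor can the ``ground-state-generation argument'' be extended to the cases $m \le n$ or $m \le 0$ as you propose: on those weight spaces $\maxsub{i}{h}^+$ and $\sub{i}{h}^+$ genuinely differ (the vectors of $\ver{i,h/i}^+ \subseteq \maxsub{i}{h}^+$ are witnesses), so the conclusion of \cref{le:IndDim} is simply false there.

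The idea you are missing is the paper's zero-mode argument, which replaces any character count. Take a weight vector $v \in \maxsub{i}{h}^+$. By \cref{le:IndDim}, $E_0^r v \in \sub{i}{h}^+$ for $r$ large, since applying $E_0$ raises $m$ while fixing $n$, eventually landing in the region where the two submodules agree. By the \pbw\ theorem, $F_0^s v \in \ver{i,h/i}^+$ for $s$ large. Hence the image of $v$ in $\maxsub{i}{h}^+ \bigm/ (\ver{i,h/i}^+ + \sub{i}{h}^+)$ is annihilated by high powers of both $E_0$ and $F_0$ and so, the weight spaces being finite-dimensional, generates a finite-dimensional weight $\hfour$-module under the zero modes. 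By \cref{le:IZero}, $I_0$ acts as zero on any such module; since $I_0$ acts as $i \ne 0$ here, the image of $v$ must vanish. Thus $\maxsub{i}{h}^+ = \ver{i,h/i}^+ + \sub{i}{h}^+$, which is exactly the statement that your kernel $\amod{K}$ equals $\irr{i,h/i}^+$. (A smaller point: your justification of indecomposability --- that the generator of a cyclic weight module ``would have to lie in one summand'' --- is false for general cyclic modules; it holds here only because the generating \rhwv\ spans a one-dimensional weight space and weight spaces are compatible with direct sum decompositions.)
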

\begin{proof}
	From the proof of \cref{le:Esubquo}, we have $\irr{i,h/i}^+ \ira \quo{i}{h}^+$ and $\irr{i,h/i}^+ \cong  \ver{i,h/i}^+ \big/ (\ver{i,h/i}^+ \cap \sub{i}{h}^+)$. It follows that
	\begin{align}
		\frac{\quo{i}{h}^+}{\irr{i,h/i}^+} \cong \quo{i}{h}^+\bigg/\frac{\ver{i,h/i}^+}{\ver{i,h/i}^+ \cap \sub{i}{h}^+} \cong \frac{\rel{i}{h}^+}{\sub{i}{h}^+} \bigg/ \frac{\ver{i,h/i}^+ + \sub{i}{h}^+}{\sub{i}{h}^+} \cong \frac{\rel{i}{h}^+}{\ver{i,h/i}^+ + \sub{i}{h}^+} .
	\end{align}
	From \cref{le:Esubquo} we know that $\irr{i,h/i+1}^-$ is the unique irreducible quotient of $\rel{i}{h}^+$, thus the first short exact sequence will be proved if we can show that $\ver{i,h/i}^+ + \sub{i}{h}^+ = \maxsub{i}{h}^+$.

	It is clear that $\ver{i,h/i}^+ + \sub{i}{h}^+ \subseteq \maxsub{i}{h}^+$, so assume that $v \in \maxsub{i}{h}^+$. Without loss of generality, we may assume that $v$ is a weight vector so suppose that $v \in \maxsub{i}{h}^+(i,j;\hwconfwt{i,j} + n)$ for some $j \in [h/i]$ and some $n\ge0$. By \cref{le:IndDim}, there exists $r\in \NN$ such that $E_0^r$ maps $v$ into $\sub{i}{h}^+$. Furthermore, by the \pbw\ theorem, there exists $s \in \NN$ such that $F_0^s$ maps $v$ into $\ver{i,h/i}^+$. Since each weight space is finite-dimensional, it follows that the image of $v$ in $\maxsub{i}{h}^+\bigm/(\ver{i,h/i}^+ + \sub{i}{h}^+)$ generates a finite-dimensional $\hfour$-module. However, by \cref{le:IZero}, $I_0$ must act as zero on this finite-dimensional $\hfour$-module. Since $i\ne0$, this is only possible if the image of $v$ is $0$. Hence, $\ver{i,h/i}^+ + \sub{i}{h}^+ = \maxsub{i}{h}^+$ as required.

	To get the second short exact sequence, we apply conjugation to the first.  Note that both are necessarily nonsplit because the corresponding short exact sequences \eqref{es:finRpm} of $\hfour$-modules are.
\end{proof}
\noindent Comparing the short exact sequences \eqref{es:Rpm} and \eqref{es:Epm} when $i \notin \ZZ$, \cref{thm:bjp} yields the following assertion.
\begin{corollary} \label{cor:E=R}
	For all $i \notin \ZZ$, we have $\quo{i}{h}^{\pm} \cong \rel{i}{h}^{\pm}$.
\end{corollary}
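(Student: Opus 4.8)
The plan is to read the result straight off the two families of short exact sequences, exploiting the fact that the Verma modules appearing there degenerate to irreducibles as soon as $i \notin \ZZ$. First I would invoke the first assertion of \cref{thm:bjp}: for $i \notin \ZZ$ the Verma module $\ver{i,j}^+$ is irreducible, so $\ver{i,j}^+ \cong \irr{i,j}^+$; applying the conjugation functor together with \eqref{eq:twistedhwms} (and noting that $i \mapsto -i$ preserves $\CC \setminus \ZZ$) yields the lowest-weight analogue $\ver{i,j}^- \cong \irr{i,j}^-$ for $i \notin \ZZ$ as well. Substituting these identifications into the nonsplit sequence \eqref{es:Rpm} for $\rel{i}{h}^+$ turns it into precisely the sequence \eqref{es:Epm} for $\quo{i}{h}^+$: both exhibit the module as a nonsplit extension of $\irr{i,h/i+1}^-$ by $\irr{i,h/i}^+$.

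The only point still needing justification is that agreement of the sub- and quotient factors forces the canonical surjection $\rel{i}{h}^+ \twoheadrightarrow \quo{i}{h}^+$ to be an isomorphism, equivalently that $\sub{i}{h}^+ = 0$. I would argue this from the submodule lattice. For $i \notin \ZZ$ the module $\rel{i}{h}^+$ has exactly two composition factors, the simple modules $\irr{i,h/i}^+$ and $\irr{i,h/i+1}^-$, and the extension \eqref{es:Rpm} is nonsplit; hence its only proper nonzero submodule is the highest-weight Verma $\ver{i,h/i}^+ \cong \irr{i,h/i}^+$ (any submodule surjecting onto $\irr{i,h/i+1}^-$ would either contain $\ver{i,h/i}^+$ and exhaust the module, or split the sequence). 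This submodule has nonzero ground-state space, namely $\finver{i,h/i}^+$, so it meets the space of ground states nontrivially. Consequently the only submodule with zero ground-state intersection is the trivial one, whence $\sub{i}{h}^+ = 0$ and $\quo{i}{h}^+ = \rel{i}{h}^+$.

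For the minus case I would simply apply conjugation to the isomorphism just obtained, using $\conjmod{\rel{i}{h}^+} \cong \rel{-i}{h+i}^-$ and $\conjmod{\quo{i}{h}^+} \cong \quo{-i}{h+i}^-$, and let $(i,h)$ range over $(\CC\setminus\ZZ)\times\CC$; alternatively one repeats the identical lattice argument on the second rows of \eqref{es:Rpm} and \eqref{es:Epm}. I do not anticipate a genuine obstacle: the whole content is the collapse $\ver{i,j}^\pm \cong \irr{i,j}^\pm$ for $i \notin \ZZ$ furnished by \cref{thm:bjp}. The only step meriting care is the vanishing of $\sub{i}{h}^+$, and this is immediate from the two-composition-factor structure once nonsplitness — guaranteed by \cref{th:IndecompSES}, ultimately by \eqref{es:finRpm} — is in hand.
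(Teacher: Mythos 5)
Your proposal is correct and takes essentially the same route as the paper: its one-sentence proof likewise reads the corollary off by comparing \eqref{es:Rpm} with \eqref{es:Epm} once \cref{thm:bjp}\ref{it:vermairr} identifies the Verma modules $\ver{i,h/i}^+$ and $\ver{i,h/i+1}^-$ with their irreducible quotients. Your explicit check that $\sub{i}{h}^+ = 0$, via the submodule lattice of a nonsplit extension of two irreducibles and the fact that $\ver{i,h/i}^+$ meets the ground states, is precisely the detail the paper leaves implicit, and your conjugation argument for the minus case is also sound.
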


\section{Characters I} \label{sec:CharsI}

We define the character of an $\Hfour$-module $\amod{W}$ by
\begin{equation}\label{char0}
	\fch{\amod{W}}{\yy,\zz}{\qq} = \traceover{\amod{W}} \yy^{I_0} \zz^{J_0} \qq^{L_0 - c/24} = \sum_{i,j,\Delta \in \CC} \dim \amod{W}(i,j;\Delta) \, \yy^i \zz^j \qq^{\Delta-1/6}.
\end{equation}
Even though the eigenvalue of $I_0$ is fixed on a given indecomposable weight module, the freedom afforded by introducing $\yy$ is expected to be essential to determine modular transformations (which we leave for future work).  As we shall see, the characters \eqref{char0} are most usefully interpreted as formal power series in $\zz$ with coefficients that are holomorphic functions of $\qq$ in the punctured disc $0 < \abs{\qq} < 1$ multiplied by $\yy^i$.

First, note that the effect of the conjugation, shift and spectral flow functors on the character \eqref{char0} is easily determined from \eqref{eq:DefConj}, \eqref{eq:DefShift} and \eqref{eq:DefSF}.
\begin{lemma} \label{lem:twistch}
	We have the following character identities:
	\begin{equation}
		\begin{aligned}
			\fch{\conjmod{\amod{W}}}{\yy,\zz}{\qq} &= \fch{\amod{W}}{\yy^{-1},\zz^{-1}}{\qq}, \\
			\fch{\shmod{\beta}{\amod{W}}}{\yy,\zz}{\qq} &= \zz^{\beta} \fch{\amod{W}}{\yy \qq^{\beta},\zz}{\qq}, \\
			\fch{\sfmod{\ell}{\amod{W}}}{\yy,\zz}{\qq} &= \yy^{\ell} \fch{\amod{W}}{\yy,\zz \qq^{\ell}}{\qq},
		\end{aligned}
	\end{equation}
\end{lemma}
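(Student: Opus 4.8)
The plan is to handle all three identities at once, treating conjugation $\conjaut$, the shift $\shaut'_\beta$ and spectral flow $\sfaut^\ell$ as a single automorphism $\sigma$ of the underlying vertex algebra and computing the character of the twisted module $\sigma^*(\amod{W})$ directly from its defining relation \eqref{eq:DefTwMod}. Recall that, as a vector space, $\sigma^*(\amod{W})$ equals $\amod{W}$, with action $A \cdot \sigma^*(m) = \sigma^*(\sigma^{-1}(A)m)$. Read as an operator identity, this says that the vector-space isomorphism $\sigma^*$ intertwines the action of any zero mode $A_0$ on $\sigma^*(\amod{W})$ with the action of $\sigma^{-1}(A_0)$ on $\amod{W}$, and the same holds for $L_0$ once we use the images recorded in \eqref{eq:AutL0}.

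First I would note that $I_0$, $J_0$ and $L_0$ mutually commute on any weight module: $I_0$ is central while $J_0$ and $L_0$ both carry zero conformal weight, so the ordered product $\yy^{I_0}\zz^{J_0}\qq^{L_0-c/24}$ defining the character \eqref{char0} is unambiguous and restricts to each (finite-dimensional) weight space. Since the trace is invariant under conjugation by the isomorphism $\sigma^*$, the observation above yields the master formula
\begin{equation}
	\fch{\sigma^*(\amod{W})}{\yy,\zz}{\qq} = \traceover{\amod{W}} \yy^{\sigma^{-1}(I_0)} \zz^{\sigma^{-1}(J_0)} \qq^{\sigma^{-1}(L_0) - c/24},
\end{equation}
valid for each of the three functors. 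Everything then reduces to substituting the explicit inverse automorphisms and regrouping exponentials.

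For conjugation, $\conjaut^{-1} = \conjaut$ sends $I_0 \mapsto -I_0$ and $J_0 \mapsto -J_0$ while fixing $L_0$, giving the first identity at once. For the shift, $(\shaut'_\beta)^{-1} = \shaut'_{-\beta}$ fixes $I_0$ and, on setting $K=1$, sends $J_0 \mapsto J_0 + \beta$ and $L_0 \mapsto L_0 + \beta I_0$; collecting terms as $\zz^\beta (\yy\qq^\beta)^{I_0}\zz^{J_0}\qq^{L_0-c/24}$ produces the substitution $\yy \mapsto \yy\qq^\beta$ together with the prefactor $\zz^\beta$. For spectral flow, $(\sfaut^\ell)^{-1} = \sfaut^{-\ell}$ fixes $J_0$, sends $I_0 \mapsto I_0 + \ell$ and $L_0 \mapsto L_0 + \ell J_0$, and regrouping as $\yy^\ell \yy^{I_0}(\zz\qq^\ell)^{J_0}\qq^{L_0-c/24}$ gives $\zz \mapsto \zz\qq^\ell$ with prefactor $\yy^\ell$. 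There is no real obstacle here; the only points requiring care are that it is $\sigma^{-1}$, not $\sigma$, that enters, and that the nontrivial action of the shift and flow on $L_0$ in \eqref{eq:AutL0}—rather than on the zero modes alone—is exactly what converts a shift of $I_0$ into the $\qq$-dependent rescaling of $\yy$ and a flow of $J_0$ into that of $\zz$. Since these are identities of formal power series, well-definedness rests only on the finiteness of the weight spaces, so no convergence question arises.
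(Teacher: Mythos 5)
Your proposal is correct and follows essentially the same route the paper intends: the paper leaves the computation implicit (stating that the identities are "easily determined" from \eqref{eq:DefConj}, \eqref{eq:DefShift} and \eqref{eq:DefSF}), and your argument simply writes out that direct computation, correctly using the inverse automorphisms via \eqref{eq:DefTwMod} together with \eqref{eq:AutL0} to regroup $\zz^{J_0+\beta}\qq^{L_0+\beta I_0}$ and $\yy^{I_0+\ell}\qq^{L_0+\ell J_0}$ into the stated substitutions and prefactors. The signs all check out against the paper's conventions (e.g.\ your formulae reproduce $\shmod{\beta}{\irr{0,j}} \cong \irr{0,j+\beta}$ and $\sfmod{\mp1}{\ver{i,j}^{\pm}} \cong \ver{i\mp1,j}^{\mp}$ at the level of characters).
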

\noindent The characters of the Verma modules, along with their parabolic and relaxed versions, are also easily computed.
\begin{proposition} \label{prop:vermach}
	We have the following character formulae:
	\begin{equation}
		\begin{aligned}
			\fch{\irr{0,j}}{\yy,\zz}{\qq} &= \frac{\yy^i \zz^j \qq^{-1/6}}{\prod_{n=1}^{\infty} (1-\zz^{-1}\qq^n) (1-\qq^n)^2 (1-\zz\qq^n)}
			= \frac{\yy^i \zz^{j+1/2} (1-\zz^{-1})}{\ii \eta(\qq) \fjth{1}{\zz;\qq}}, \\
			\fch{\ver{i,j}^{\pm}}{\yy,\zz}{\qq} &= \frac{\yy^i \zz^j \qq^{\hlwconfwt{i,j} - 1/6}}{\prod_{n=1}^{\infty} (1-\zz^{\mp1}\qq^{n-1}) (1-\qq^n)^2 (1-\zz^{\pm1}\qq^n)}
			= \frac{\yy^i \zz^{j \pm 1/2} \qq^{\hlwconfwt{i,j}}}{\ii \eta(\qq) \fjth{1}{\zz^{\pm1};\qq}}, \\
			\fch{\rel{i,j}{h}}{\yy,\zz}{\qq} &= \frac{\yy^i \zz^j \qq^{\rhwconfwt{i}{h}-1/6}}{\prod_{n=1}^{\infty} (1-\qq^n)^4} \delta(\zz)
			= \frac{\yy^i \zz^j \qq^{\rhwconfwt{i}{h}}}{\eta(\qq)^4} \delta(\zz). \\
		\end{aligned}
	\end{equation}
	Here, $\dfrac{1}{1-x}$ stands for $\displaystyle\sum_{m=0}^{\infty} x^m$ and $\delta(x)$ stands for $\displaystyle\sum_{m=-\infty}^{\infty} x^m$.
\end{proposition}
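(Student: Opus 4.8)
The plan is to read each character off a \pbw\ basis of the relevant induced module, keeping track of the three gradings by $I_0$, $J_0$ and $L_0$. For $\ver{i,j}^+$, which is induced from $\finver{i,j}^+$, such a basis is obtained by acting on $\ket{i,j}$ with arbitrary monomials in the lowering generators $E_{-n},F_{-n},I_{-n},J_{-n}$ (for $n\ge1$) together with the zero mode $F_0$. Since $I_0$ is central it acts as the scalar $i$ throughout, producing the overall power of $\yy$; the relations $\comm{J_0}{E_{-n}}=E_{-n}$, $\comm{J_0}{F_{-n}}=-F_{-n}$ and $\comm{J_0}{F_0}=-F_0$ show that each $E_{-n}$ raises the $J_0$-eigenvalue by $1$ whereas each $F_{-n}$ and $F_0$ lowers it by $1$; and, as every generating field is a weight-one Virasoro primary, $A_{-n}$ raises $L_0$ by $n$ while $F_0$ preserves it. Summing the resulting geometric series over the occupation numbers---the $F_0$-tower contributing $\frac{1}{1-\zz^{-1}}$, each $E_{-n}$ contributing $\frac{1}{1-\zz\qq^n}$, each $F_{-n}$ contributing $\frac{1}{1-\zz^{-1}\qq^n}$, and each $I_{-n}$ and $J_{-n}$ contributing $\frac{1}{1-\qq^n}$---reproduces the stated product, with the prefactor $\qq^{\hwconfwt{i,j}-1/6}$ coming from \eqref{eq:confwt} and $c=4$.

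The formula for $\ver{i,j}^-$ then needs no new computation: applying the conjugation identity of \cref{lem:twistch} to $\conjmod{\ver{i,j}^+}\cong\ver{-i,-j}^-$ performs $\zz\leftrightarrow\zz^{-1}$ and $\yy\leftrightarrow\yy^{-1}$, trading the $F_0$-tower for an $E_0$-tower and so yielding the $\pm$-form uniformly (alternatively one repeats the count directly with $\finver{i,j}^-$). For $\irr{0,j}$ I would apply the exact induction functor to the $\hfour$-sequence \eqref{es:finVVL}, obtaining $\dses{\ver{0,j-1}^+}{}{\ver{0,j}^+}{}{\irr{0,j}}$ (the quotient being the induction of $\finirr{0,j}$, identified with $\irr{0,j}$ just before \cref{prop:H4simple}). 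Because $\hwconfwt{0,j}=\hwconfwt{0,j-1}=0$, the two Verma characters differ only by the factor $\zz^{-1}$, so subtracting gives $\fch{\irr{0,j}}{\yy,\zz}{\qq}=(1-\zz^{-1})\fch{\ver{0,j}^+}{\yy,\zz}{\qq}$, and the factor $1-\zz^{-1}$ cancels exactly the $F_0$-tower denominator, leaving the claimed product.

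The relaxed module $\rel{i,j}{h}$ introduces the one genuinely delicate point. It is induced from $\finrel{i,j}{h}$, whose ground states $\ket{i,j';h}$ now span an infinite-dimensional space with $J_0$-eigenvalues $j'$ ranging over all of $j+\ZZ$, each at the minimal conformal weight $\rhwconfwt{i}{h}$ by \eqref{eq:confwt'}; their total contribution is therefore $\yy^i\zz^j\qq^{\rhwconfwt{i}{h}}\delta(\zz)$. Tensoring with $\envalg{\ahfour^-}$ multiplies this by $\prod_{n\ge1}(1-\zz\qq^n)^{-1}(1-\zz^{-1}\qq^n)^{-1}(1-\qq^n)^{-2}$, and the reduction to $\prod_{n\ge1}(1-\qq^n)^{-4}$ rests on the formal-distribution identity $\zz^m\delta(\zz)=\delta(\zz)$ for $m\in\ZZ$. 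Since each coefficient of a fixed power of $\qq$ in the $E$- and $F$-mode product is a genuine finite Laurent polynomial in $\zz$, multiplying by $\delta(\zz)$ evaluates that coefficient at $\zz=1$, collapsing $\delta(\zz)\prod_{n\ge1}(1-\zz\qq^n)^{-1}(1-\zz^{-1}\qq^n)^{-1}$ to $\delta(\zz)\prod_{n\ge1}(1-\qq^n)^{-2}$. I expect this $\delta(\zz)$-collapse---together with the need to interpret the character as a formal power series in $\qq$ whose coefficients are multiplied into the distribution $\delta(\zz)$---to be the main thing requiring care.

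Finally, each second equality is a routine rewriting via the standard product expansions $\eta(\qq)=\qq^{1/24}\prod_{n\ge1}(1-\qq^n)$ and $\fjth{1}{\zz;\qq}=-\ii\qq^{1/8}(\zz^{1/2}-\zz^{-1/2})\prod_{n\ge1}(1-\qq^n)(1-\zz\qq^n)(1-\zz^{-1}\qq^n)$: one absorbs the $F_0$-tower factor through $\zz^{1/2}-\zz^{-1/2}=\zz^{1/2}(1-\zz^{-1})$ and checks that the $\qq$-exponents match via $\tfrac1{24}+\tfrac18=\tfrac16$. No obstacle arises here beyond bookkeeping.
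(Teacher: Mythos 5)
Your proof is correct and is exactly the routine argument the paper has in mind: the paper states this proposition without proof (prefacing it only with ``easily computed''), and the intended computation is precisely your \pbw-basis count over the inducing $\hfour$-module --- zero-mode tower plus negative-mode factors for the Verma modules, telescoping of the induced sequence (or, equivalently, direct induction of the one-dimensional $\finirr{0,j}$) for $\irr{0,j}$, and the ground-state sum for the relaxed case. Your identification of the collapse $\zz^m\delta(\zz)=\delta(\zz)$, applied coefficient-wise in $\qq$ where each coefficient is a genuine Laurent polynomial in $\zz$, as the one delicate point is accurate, as is the $\eta$--$\vartheta_1$ bookkeeping with $\tfrac{1}{24}+\tfrac{1}{8}=\tfrac{1}{6}$.
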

\noindent These formulae hold irrespective of whether the module is irreducible or not.  In particular, this gives the characters of the $\irr{i,j}^{\pm}$ with $i \notin \ZZ$ (and $i=0$), by \cref{thm:bjp}\ref{it:vermairr}, and the $\quo{i}{h}^{\pm}$ with $i \notin \ZZ$, by \cref{cor:E=R}.

The characters of the remaining $\irr{i,j}^{\pm}$ now follow as consequences of \cref{thm:bjp}\ref{it:vermasv} and \ref{it:vermamax}, whilst those of the remaining $\quo{i}{h}^{\pm}$ follow from \cref{th:IndecompSES,cor:E=R}.
\begin{corollary} \label{cor:hwch}
	We also have the following character formulae, valid for $i \in \ZZ \setminus \set{0}$:
	\begin{equation}
		\fch{\irr{i,j}^{\pm}}{\yy,\zz}{\qq} = (1 - \zz^{\sgn i} q^{\abs{i}}) \fch{\ver{i,j}^{\pm}}{\yy,\zz}{\qq}, \qquad
		\fch{\quo{i}{h}^{\pm}}{\yy,\zz}{\qq} = (1 - q^{\abs{i}}) \fch{\rel{i}{h}^{\pm}}{\yy,\zz}{\qq}.
	\end{equation}
\end{corollary}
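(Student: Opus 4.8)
The plan is to prove the two identities separately: the formula for $\irr{i,j}^{\pm}$ from the submodule structure recorded in \cref{thm:bjp}, and the formula for $\quo{i}{h}^{\pm}$ from the short exact sequences \eqref{es:Rpm} and \eqref{es:Epm} combined with the first formula. Fix $i\in\ZZ\setminus\set{0}$ and work first with $\ver{i,j}^{+}$. By \cref{thm:bjp}\ref{it:vermasv} and \ref{it:vermamax}, its maximal submodule is generated by a single \sv{} $\chi$ of $I_0$-eigenvalue $i$ (recall $I_0$ is central), $J_0$-eigenvalue $j+\sgn i$ and $L_0$-eigenvalue $\hwconfwt{i,j}+\abs{i}=\hwconfwt{i,j+\sgn i}$. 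The crucial first step is to identify this maximal submodule with an honest Verma module $\ver{i,j+\sgn i}^{+}$. Writing $\alg{n}^{-}=\ahfour^{-}\oplus\CC F_0$, the module $\ver{i,j}^{+}$ is free of rank one over $\envalg{\alg{n}^-}$ on the generator $\ket{i,j}$, so $\chi=P\ket{i,j}$ for some nonzero $P\in\envalg{\alg{n}^-}$; since $\envalg{\alg{n}^-}$ is an integral domain, right multiplication by $P$ is injective and hence $\chi$ too generates a free rank-one $\envalg{\alg{n}^-}$-module. As $\chi$ is a \hwv{} of weight $(i,j+\sgn i)$, the universal property then gives $\ideal{\chi}\cong\ver{i,j+\sgn i}^{+}$. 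I would then read off from \cref{prop:vermach} that $\fch{\ver{i,j+\sgn i}^{+}}{\yy,\zz}{\qq}=\zz^{\sgn i}\qq^{\abs{i}}\fch{\ver{i,j}^{+}}{\yy,\zz}{\qq}$, using that the denominator is independent of $j$ and that $\hwconfwt{i,j+\sgn i}-\hwconfwt{i,j}=i\sgn i=\abs{i}$, and subtract to obtain $\fch{\irr{i,j}^{+}}{\yy,\zz}{\qq}=(1-\zz^{\sgn i}\qq^{\abs{i}})\fch{\ver{i,j}^{+}}{\yy,\zz}{\qq}$. The $\irr{i,j}^{-}$ case follows by conjugation: from $\conjmod{\irr{-i,-j}^{+}}\cong\irr{i,j}^{-}$ and $\conjmod{\ver{-i,-j}^{+}}\cong\ver{i,j}^{-}$ in \eqref{eq:twistedhwms} and the conjugation rule of \cref{lem:twistch}, the prefactor $(1-\zz^{-\sgn i}\qq^{\abs{i}})$ for $\irr{-i,-j}^{+}$ becomes $(1-\zz^{\sgn i}\qq^{\abs{i}})$ after $\zz\mapsto\zz^{-1}$.

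For the $\quo{i}{h}^{\pm}$ I would invoke additivity of characters on the nonsplit sequences. From \eqref{es:Epm} and the first formula, $\fch{\quo{i}{h}^{+}}{\yy,\zz}{\qq}=\fch{\irr{i,h/i}^{+}}{\yy,\zz}{\qq}+\fch{\irr{i,h/i+1}^{-}}{\yy,\zz}{\qq}=(1-\zz^{\sgn i}\qq^{\abs{i}})\bigl(\fch{\ver{i,h/i}^{+}}{\yy,\zz}{\qq}+\fch{\ver{i,h/i+1}^{-}}{\yy,\zz}{\qq}\bigr)$, and \eqref{es:Rpm} identifies the bracket as $\fch{\rel{i}{h}^{+}}{\yy,\zz}{\qq}$. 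The final step is to simplify the prefactor: since $\rel{i}{h}^{+}$ is induced from the dense $\hfour$-module $\finrel{i}{h}^{+}$, whose weight support is the full coset $[h/i]$, its character carries a factor $\delta(\zz)$ (as in \cref{prop:vermach}), and $\zz^{\pm1}\delta(\zz)=\delta(\zz)$ yields $\zz^{\sgn i}\qq^{\abs{i}}\fch{\rel{i}{h}^{+}}{\yy,\zz}{\qq}=\qq^{\abs{i}}\fch{\rel{i}{h}^{+}}{\yy,\zz}{\qq}$. Hence $\fch{\quo{i}{h}^{+}}{\yy,\zz}{\qq}=(1-\qq^{\abs{i}})\fch{\rel{i}{h}^{+}}{\yy,\zz}{\qq}$, and the $\quo{i}{h}^{-}$ identity follows from the second sequences in \eqref{es:Rpm} and \eqref{es:Epm} (or again by conjugation).

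I expect the main obstacle to be precisely this last simplification, and more broadly the careful bookkeeping of characters as formal distributions. The prefactor produced naturally by the first formula is $(1-\zz^{\sgn i}\qq^{\abs{i}})$, not $(1-\qq^{\abs{i}})$, and the two agree only because the relaxed character is supported on a full $\ZZ$-coset, so that multiplication by $\zz^{\sgn i}$ acts as the identity on it. This forces one to read every character above as a power series in $\qq$ with distributional $\zz$-dependence, and to verify that the subtraction defining $\irr{i,j}^{\pm}$ and the additivity on \eqref{es:Rpm} and \eqref{es:Epm} are all carried out in a single consistent expansion regime. The secondary technical point is the identification of the maximal submodule of $\ver{i,j}^{+}$ with a genuine Verma module $\ver{i,j+\sgn i}^{+}$ rather than a proper quotient of it; this is exactly what the freeness of $\ver{i,j}^{+}$ over $\envalg{\alg{n}^-}$ and the domain property of $\envalg{\alg{n}^-}$ supply, and it is what makes the clean character subtraction legitimate.
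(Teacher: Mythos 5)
Your proposal is correct and takes essentially the same route as the paper, which obtains these formulae by subtracting Verma characters on the strength of \cref{thm:bjp}\ref{it:vermasv} and \ref{it:vermamax} and by additivity of characters on the exact sequences of \cref{th:IndecompSES} together with \eqref{es:Rpm}. The details you supply --- identifying the maximal submodule of $\ver{i,j}^{+}$ with $\ver{i,j+\sgn i}^{+}$ via freeness over $\envalg{\ahfour^-\oplus\CC F_0}$ and the absence of zero-divisors, and using $\zz^{\pm1}\delta(\zz)=\delta(\zz)$ to turn the prefactor $(1-\zz^{\sgn i}\qq^{\abs{i}})$ into $(1-\qq^{\abs{i}})$ --- are precisely the steps the paper leaves implicit.
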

\noindent We recall that for $i\ne0$, $\rel{i}{h}^{\pm}$ shares the same character as $\rel{i,h/i}{h}^{\pm}$, given in \cref{prop:vermach}.  These formulae therefore exhaust the characters of all the \rhw\ $\Hfour$-modules introduced in \cref{sec:rhwH4} except for the irreducibles $\quo{i,[j]}{h}$ with $i\ne0$ and $h/i \notin [j]$.  Our last task is thus to compute these remaining characters.

\section{Characters II} \label{sec:Stringiness}

Given a pair $(i,j) \in \CC^2$, we define the string function of an $\ahfour$-module $\amod{W}$ to be the coefficient of $\yy^i \zz^j$ in the character \eqref{char0}:
\begin{equation}
	\fstrfun{i,j}{\amod{W}}{\qq} = \sum_{\Delta \in \CC} \dim \amod{W}(i,j;\Delta) \, \qq^{\Delta-1/6}.
\end{equation}
This is evidently a power series in $\qq$ that converges in the punctured disc $0 < \abs{\qq} < 1$.  Following \cite{KawRel18}, we shall call an $\ahfour$-module \emph{stringy} if its nonzero string functions all coincide.  Examples of stringy modules include the relaxed Verma modules $\rel{i,j}{h}$ and the \rhwms\ $\quo{i}{h}^{\pm}$, $i \in \ZZ \setminus \set{0}$, whose nonzero string functions are determined by \cref{prop:vermach,cor:hwch} to be
\begin{equation} \label{eq:easystrfuns}
	\fstrfun{i,j'}{\rel{i,j}{h}}{\qq} = \frac{\qq^{\rhwconfwt{i}{h}}}{\eta(\qq)^4} \quad \text{and} \quad
	\fstrfun{i,j'}{\quo{i}{h}^{\pm}}{\qq} = \frac{\qq^{\rhwconfwt{i}{h}}(1 - q^{\abs{i}})}{\eta(\qq)^4},
\end{equation}
for $j' \in [j]$ and $j' \in [h/i]$, respectively.

The aim of this \lcnamecref{sec:Stringiness} is to prove that the irreducible \rhw\ $\ahfour$-modules $\quo{i,[j]}{h}$, with $i\ne0$ and $h/i \notin [j]$, are stringy and to determine their string functions. In this way, we shall prove a character formula for these modules.  To establish stringiness, we use the method of \cite{KawRel18}, the key tools being coherent families of \rhw\ $\ahfour$-modules and a generalisation of the Shapovalov form on these families.  The main results of this \lcnamecref{sec:Stringiness} are summarised in the following \lcnamecref{th:MainStringTheorem}.
\begin{theorem} \label{th:MainStringTheorem}
	Take $i\ne0$ and $h/i \notin [j]$.  Then:
	\begin{enumerate}
		\item \label{thp:MST1} $\quo{i,[j]}{h}$ is stringy.
		\item \label{thp:MST2} The nonzero string function of $\quo{i,[j]}{h}$ is given by the limiting string function
		\begin{equation} \label{eq:SimpleStringFunctions}
			\lim_{n \to \infty} \fstrfun{i,h/i+n}{\irr{i,h/i+1}^-}{\qq},
		\end{equation}
		where the limit is understood in the sense of power series in $\qq$ (so taken coefficient-wise).
		\item \label{thp:MST3} For $i \notin \ZZ$, $\quo{i,[j]}{h}$ coincides with $\rel{i,[j]}{h}$ so its character was given in \cref{prop:vermach}.  For $i \in \ZZ \setminus \set{0}$, we instead have
		\begin{equation} \label{eq:irrquoch}
			\fch{\quo{i,[j]}{h}}{\yy,\zz}{\qq} = \frac{\yy^i \zz^j \qq^{\rhwconfwt{i}{h}} (1 - \qq^{\abs{i}})}{\eta(\qq)^4} \delta(\zz).
		\end{equation}
		\item \label{thp:MST4} For $i \in \ZZ \setminus \set{0}$, we have the following nonsplit short exact sequence:
		\begin{equation} \label{es:RRE}
			\dses{\rel{i,[j]}{h+\abs{i}}}{}{\rel{i,[j]}{h}}{}{\quo{i,[j]}{h}}.
		\end{equation}
	\end{enumerate}
\end{theorem}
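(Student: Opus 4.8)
The plan is to adopt the coherent-family and generalised-Shapovalov-form method of \cite{KawRel18}, establishing stringiness~\ref{thp:MST1} and the limit formula~\ref{thp:MST2} first, and then reading off \ref{thp:MST3} and \ref{thp:MST4}. Since conjugation sends $\quo{i,[j]}{h}$ to $\quo{-i,[-j]}{h+i}$ and interchanges the two sign cases of \cref{thm:bjp}, I may assume $i>0$ throughout. When $i\notin\ZZ$, assertion~\ref{thp:MST3} is the relaxed analogue of \cref{thm:bjp}\ref{it:vermairr}: any reducibility of $\rel{i,[j]}{h}$ would require a \sv\ at conformal level $\abs{i}$ above the ground states, which is impossible because $\abs{i}\notin\ZZ$ is not an admissible (integer) level; hence $\sub{i,[j]}{h}=0$, so $\quo{i,[j]}{h}=\rel{i,[j]}{h}$ with character as in \cref{prop:vermach}. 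The substance is therefore the case $i\in\ZZ\setminus\set{0}$, which I assume from now on.

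To prove stringiness~\ref{thp:MST1}, I would equip $\rel{i,[j]}{h}$ with the contravariant (generalised Shapovalov) pairing of \cite{KawRel18}, built from an anti-involution of $\ahfour$ that exchanges $E_n\leftrightarrow F_{-n}$. Since $\sub{i,[j]}{h}=\maxsub{i,[j]}{h}$ is exactly the radical of this pairing, one has $\dim\quo{i,[j]}{h}(i,j';\Delta)=\operatorname{rank}B_{j',\Delta}$, where $B_{j',\Delta}$ is the Gram matrix on the finite-dimensional weight space $\rel{i,[j]}{h}(i,j';\Delta)$. The decisive point is that, for $h/i\notin[j]$, the zero modes $E_0$ and $F_0$ restrict to isomorphisms $\rel{i,[j]}{h}(i,j';\Delta)\to\rel{i,[j]}{h}(i,j'\pm1;\Delta)$ (one checks this on the \pbw\ basis, the leading term being the bijective action of $E_0$, $F_0$ on the dense module $\finrel{i,[j]}{h}$ of ground states). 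As $E_0$ preserves $L_0$ and is adjoint to $F_0$ for the pairing, it identifies the radical at $(i,j';\Delta)$ with that at $(i,j'+1;\Delta)$. Hence $\dim\maxsub{i,[j]}{h}(i,j';\Delta)$, and therefore $\dim\quo{i,[j]}{h}(i,j';\Delta)$, is independent of $j'\in[j]$, which is precisely stringiness.

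For~\ref{thp:MST2} I would identify this common rank with a weight space of the irreducible \lwm\ $\irr{i,h/i+1}^-$. Fixing a conformal level $N$, the only \svs\ of the Verma modules governing $\rel{i,[j]}{h}$ sit at level $\abs{i}$ above their generators (\cref{thm:bjp}\ref{it:vermasv}), so for $n$ large relative to $N$ these corrections no longer affect level $N$ and the Shapovalov rank on $\rel{i,[j]}{h}(i,h/i+n;\rhwconfwt{i}{h}+N)$ coincides with the one computed inside $\irr{i,h/i+1}^-$. This gives, coefficient-wise in $\qq$,
\begin{equation}
	\fstrfun{i,j'}{\quo{i,[j]}{h}}{\qq}=\lim_{n\to\infty}\fstrfun{i,h/i+n}{\irr{i,h/i+1}^-}{\qq}.
\end{equation}
Evaluating the limit with \cref{prop:vermach,cor:hwch}, the extra factor $1-\zz\,\qq^{\abs{i}}$ contributes $1-\qq^{\abs{i}}$ in the limit, while $\lwconfwt{i,h/i+1}=\rhwconfwt{i}{h}$ (a direct check from \eqref{eq:confwt'}), so the common string function is $\qq^{\rhwconfwt{i}{h}}(1-\qq^{\abs{i}})/\eta(\qq)^4$. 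Multiplying by $\yy^i\zz^j\delta(\zz)$ yields \eqref{eq:irrquoch} and completes~\ref{thp:MST3}.

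Finally,~\ref{thp:MST4} follows by identifying the radical $\maxsub{i,[j]}{h}$. It is spanned by \rhwvs\ at level $\abs{i}$, which carry $I_0$-eigenvalue $i$, $J_0$-eigenvalues filling $[j]$ and $Q_0$-eigenvalue $h+\abs{i}$, so by \cref{prop:finclass} they form a copy of the irreducible dense $\hfour$-module $\finrel{i,[j]}{h+\abs{i}}$ (irreducible since $(h+\abs{i})/i\equiv h/i\notin[j]$). They thus generate a quotient of $\rel{i,[j]}{h+\abs{i}}$, and the character~\eqref{eq:irrquoch} forces $\fch{\maxsub{i,[j]}{h}}{\yy,\zz}{\qq}=\fch{\rel{i,[j]}{h+\abs{i}}}{\yy,\zz}{\qq}$, so this quotient is all of $\rel{i,[j]}{h+\abs{i}}$, giving~\eqref{es:RRE}. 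The sequence is nonsplit: $\rel{i,[j]}{h}$ is generated by any single ground state, which maps into the quotient $\quo{i,[j]}{h}$, so a splitting would make that summand the whole module and force $\rel{i,[j]}{h+\abs{i}}=0$. The main obstacle is the constant-rank step underlying~\ref{thp:MST1}--\ref{thp:MST2}: controlling the Shapovalov pairing uniformly across the coset $[j]$ and matching it, in the $n\to\infty$ limit, to the lowest-weight module is exactly the delicate coherent-family analysis of \cite{KawRel18}, and is considerably harder than the homological bookkeeping needed for~\ref{thp:MST3}--\ref{thp:MST4}.
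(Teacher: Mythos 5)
Your strategy is the paper's own: coherent families of relaxed modules, a Shapovalov form whose radical is the maximal submodule, rank comparison to relate the unknown cosets $[j]$ to the known one $[h/i]$, identification of the limit with $\irr{i,h/i+1}^-$ via \cref{th:IndecompSES}, and character bookkeeping for parts (iii) and (iv) (your explicit nonsplitness argument for (iv), via cyclicity of the ground states, is a nice addition that the paper leaves implicit). However, your disposal of the case $i \notin \ZZ$ is a genuine gap. You assert that reducibility of $\rel{i,[j]}{h}$ would require a \sv\ at conformal level $\abs{i}$ above the ground states, and rule this out because $\abs{i} \notin \ZZ$. The level--eigenvalue constraint you are invoking (\cref{lem:svs}) is special to \hw\ modules: there, a singular vector of $J_0$-eigenvalue $j+m$ must have conformal weight $\hwconfwt{i,j}+im$, which ties the level to $i$ times an integer. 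In a relaxed module with $h/i \notin [j]$ there is no such tie: a \rhwv\ at level $N$ simply has $Q_0$-eigenvalue $h+N$ (by \eqref{eq:confwt'} with $k=1$), and every $N \in \ZZ_{>0}$ is a priori admissible, whatever $i$ is. That the maximal submodule sits at level $\abs{i}$ when $i \in \ZZ$ is the \emph{conclusion} of part (iv), extracted from the string-function analysis, not an available input; likewise, the vanishing of $\maxsub{i,[j]}{h}$ for $i \notin \ZZ$ is a nontrivial output. The paper handles this by running the machinery for all $i \ne 0$: the string functions of $\quo{i,[j]}{h}$ coincide with those of $\quo{i}{h}^+$, and for $i \notin \ZZ$ the latter is $\rel{i}{h}^+$ by \cref{cor:E=R}, so $\quo{i,[j]}{h}$ has the full relaxed Verma character and $\maxsub{i,[j]}{h} = 0$. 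The repair is simply not to restrict your Shapovalov analysis to $i \in \ZZ$.

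The other place where your proposal is a plan rather than a proof is the step you yourself flag as the main obstacle: comparing ranks at weights $j'$ with $h/i \notin [j']$ to those at $h/i+n$ for $n$ large. Appealing to ``the delicate coherent-family analysis'' of \cite{KawRel18} is not yet an argument; the paper's \cref{le:ShapovalovRank} carries it out concretely by observing that the Gram matrix entries are polynomials in $j$, so that row reduction over the rational function field $\CC(j)$ shows the rank is the same for all but finitely many $j \in \CC$, hence is constant once $\Re j$ is sufficiently large, uniformly over all cosets (\cref{th:StringLimitTheorem}). With that lemma supplied, the remaining steps of your outline --- stringiness from the injectivity of $E_0$ and $F_0$ on the maximal submodule, the vanishing of the limiting string functions of the \hw\ factor $\irr{i,h/i}^+$, and the identification of $\maxsub{i,[j]}{h}$ with $\rel{i,[j]}{h+\abs{i}}$ by character subtraction --- match the paper's proof.
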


We will prove this \lcnamecref{th:MainStringTheorem} in a series of steps. We begin with the first assertion, understanding that $i\ne0$ will be assumed for the rest of this \lcnamecref{sec:Stringiness}.
\begin{lemma} \label{le:SimpleStringy}
	The irreducible $\ahfour$-module $\quo{i,[j]}{h}$ is stringy.
\end{lemma}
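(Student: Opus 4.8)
The plan is to realise the graded dimensions $\dim \quo{i,[j]}{h}(i,j';\rhwconfwt{i}{h}+n)$ as ranks of a contravariant form on the relaxed Verma module $\rel{i,[j]}{h}$, and to prove that for each level $n$ this rank is the same for every $j'\in[j]$. Stringiness is the statement that the nonzero string functions of $\quo{i,[j]}{h}$ coincide, and this follows once we know that $\dim \quo{i,[j]}{h}(i,j';\Delta)$ is independent of $j'\in[j]$ for every $\Delta$ (the string functions being nonzero for all $j'\in[j]$ because the ground states exhaust the coset).

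First I would equip $\rel{i,[j]}{h}$ with the contravariant form attached to the triangular decomposition \eqref{triang}: using the anti-involution $\theta$ of $\ahfour$ given by $\theta(A_n)=\theta_0(A)_{-n}$ and $\theta(K)=K$, where $\theta_0$ is the anti-involution of $\hfour$ interchanging $E$ and $F$ while fixing $I$ and $J$, one obtains a bilinear form $\inner{\cdot}{\cdot}$ with $\inner{Xu}{v}=\inner{u}{\theta(X)v}$ that restricts to the nondegenerate $\theta_0$-contravariant form on the irreducible ground-state module $\finrel{i,[j]}{h}$. A standard argument, using that $\rel{i,[j]}{h}$ is induced with \pbw\ basis $\envalg{\ahfour^-}\otimes\finrel{i,[j]}{h}$, identifies the radical of $\inner{\cdot}{\cdot}$ with the maximal submodule $\sub{i,[j]}{h}$. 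Hence $\dim \quo{i,[j]}{h}(i,j';\rhwconfwt{i}{h}+n)$ equals the rank of the Gram matrix $G_n(j')$ of $\inner{\cdot}{\cdot}$ on the finite-dimensional weight space $\rel{i,[j]}{h}(i,j';\rhwconfwt{i}{h}+n)$.

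Next I would exploit the coherent-family structure by treating $j'$ as a variable. Fixing compatible \pbw\ bases of these weight spaces across the coset (they all have dimension $d_n$, namely the dimension of the $L_0$-degree-$n$ component of $\envalg{\ahfour^-}$, which is manifestly independent of $j'$), each entry of $G_n(j')$ is a polynomial in $j'$ with $i$ and $h$ held fixed, since evaluating the form amounts to normal-ordering monomials from $\ahfour^{\pm}$ down onto the ground states, where $I_0$, $J_0$ and $Q_0$ contribute the scalars $i$, $j'$ and $h$. Consequently the rank $\rho_n(j')$ of $G_n(j')$ equals its generic value $\rho_n$ for all $j'$ outside the finite locus cut out by the vanishing of the top-order nonzero minors.

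The crux is to show that this bad locus misses the coset $[j]$ whenever $h/i\notin[j]$. Here I would analyse the factorisation of the relevant minors (equivalently, of the generalised Shapovalov determinant) into contributions indexed by the level at which a \sv\ can arise. By \cref{lem:svs} and the finite analysis of \cref{sec:hfour}, two types occur: affine degeneracies, whose presence is controlled purely by the integrality of $i$ (they appear at level $\abs{i}$ exactly when $i\in\ZZ$) and are therefore independent of $j'$; and degeneracies inherited from reducibility of the ground-state module, which are governed by factors proportional to products of the form $h-i(j'+m)$, $m\in\ZZ$. Under the hypothesis $h/i\notin[j]$ every such factor is nonzero for every $j'\in[j]$, so no sub-generic rank drop occurs along the coset and $\rho_n(j')=\rho_n$ for all $j'\in[j]$. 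This forces $\dim \quo{i,[j]}{h}(i,j';\Delta)$ to be independent of $j'\in[j]$, proving that $\quo{i,[j]}{h}$ is stringy. I expect the main obstacle to be exactly this last step: establishing the precise factorisation of the form and isolating the entire $j'$-dependence into the excluded ground-state factors, which is where the coherent-family machinery of \cite{KawRel18} does the real work.
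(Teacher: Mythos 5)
Your reduction to a Shapovalov-type form is sound as far as it goes: the radical of the contravariant form on $\rel{i,[j]}{h}$ is indeed $\maxsub{i,[j]}{h} = \sub{i,[j]}{h}$ (as $h/i \notin [j]$), and the Gram entries are polynomials in $j'$, so the rank of $G_n(j')$ equals its generic value off a \emph{finite} set of $j'$. But stringiness demands equality of the string functions at \emph{every} $j'$ in the infinite coset $[j]$, and the step you use to exclude bad points from the coset --- a factorisation of the relevant minors in which all $j'$-dependence sits in factors $h - i(j'+m)$, with the ``affine'' degeneracies at level $\abs{i}$ independent of $j'$ --- is a genuine gap, not a technicality. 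No such factorisation is proven in the paper or available to cite; it is essentially a relaxed Shapovalov determinant formula for $\ahfour$, and its truth is a \emph{consequence} of \cref{th:MainStringTheorem} (the maximal submodule of $\rel{i,[j]}{h}$ turns out to be $\rel{i,[j]}{h+\abs{i}}$, which does have $j'$-independent weight multiplicities), so assuming it here is close to circular. Nor can \cref{lem:svs} supply it: that lemma constrains \svs\ of \hwms, whereas degeneracies of relaxed Verma modules are generated by \rhwvs, which obey the different and weaker constraint \eqref{eq:confwt'}. Finally, the coherent-family machinery of \cite{KawRel18} is not what rescues this step: in both that paper and this one it is used for the \emph{cross-coset} comparison (\cref{le:ShapovalovRank,th:StringLimitTheorem}), not for constancy within a single coset. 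Indeed, the parts of your argument that are correct --- polynomiality of the Gram entries and generic rank constancy --- are precisely the paper's \cref{le:ShapovalovRank}, which serves that other purpose.

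The paper's proof of \cref{le:SimpleStringy} bypasses determinants entirely. Since $h/i \notin [j]$, the dense module $\finrel{i,[j]}{h}$ has no \hwvs\ or \lwvs, so $E_0$ and $F_0$ act injectively on it, hence (by a \pbw\ filtration argument) on the induced module $\rel{i,[j]}{h}$, and hence on its submodule $\maxsub{i,[j]}{h}$ --- note that injectivity passes trivially to submodules, which is why the argument is run on the maximal submodule rather than on the quotient. Injectivity of $E_0$ and $F_0$ forces the coefficients of the string functions of $\maxsub{i,[j]}{h}$ to increase weakly under both $j' \mapsto j'+1$ and $j' \mapsto j'-1$, so they are constant in $j'$; since $\rel{i,[j]}{h}$ is manifestly stringy, so is the quotient $\quo{i,[j]}{h}$. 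If you wish to keep your Shapovalov framework, this same injectivity observation is exactly what you are missing: it shows that the corank of $G_n(j')$, namely $\dim \maxsub{i,[j]}{h}(i,j';\rhwconfwt{i}{h}+n)$, cannot jump anywhere on the coset, which is the statement your unproven factorisation was meant to deliver.
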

\begin{proof}
	This follows as in \cite[Thm.~4.7]{KawRel18}, so we only sketch the argument.  From the fact that $E_0$ and $F_0$ act injectively on $\finrel{i,[j]}{h}$, we see that they also act injectively on $\rel{i,[j]}{h}$ and so also on its maximal submodule $\maxsub{i,[j]}{h}$.  This implies that the coefficients of the string functions $\tstrfun{i,j'}{\maxsub{i,[j]}{h}}$ must increase (weakly) under both $j' \mapsto j'+1$ and $j' \mapsto j'-1$.  These string functions are therefore independent of $j' \in [j]$, hence $\maxsub{i,[j]}{h}$ is stringy.  As $\rel{i,[j]}{h}$ is also stringy, the \lcnamecref{le:SimpleStringy} follows.
\end{proof}

To calculate the unique nonzero string function of $\quo{i,[j]}{h}$, we shall show that it coincides with the unique nonzero string function of the reducible $\ahfour$-module $\quo{i}{h}^+$.  Recalling that the former has $h/i \notin [j]$, while the latter has $J_0$-eigenvalues in $[h/i]$, the idea is this can be shown by bringing these modules together in a family so that the $\quo{i,[j]}{h}$ may be directly compared with $\quo{i}{h}^+$.  To facilitate this comparison, we define
\begin{equation} \label{eq:convenience}
	\quo{i,[h/i]}{h} = \quo{i}{h}^+ \quad \text{and} \quad \rel{i,[h/i]}{h} = \rel{i}{h}^+.
\end{equation}
With this convenient notation, we introduce two classes of \emph{coherent families} of \rhw\ $\ahfour$-modules labelled by $i \in \CC \setminus \set{0}$ and $h \in \CC$:
\begin{equation}
	\qcoh{i}{h} = \bigoplus_{[j] \in \CC/\ZZ} \quo{i,[j]}{h} \quad \text{and} \quad
	\rcoh{i}{h} = \bigoplus_{[j] \in \CC/\ZZ} \rel{i,[j]}{h}.
\end{equation}
Coherent families were originally introduced for semisimple Lie algebras in \cite{MatCla00}, where they were used to complete the classification of weight modules with finite-dimensional weight spaces.  However, the idea is quite general.

We shall define an analogue of the Shapovalov form on $\rcoh{i}{h}$, noting that this descends in an obvious fashion to $\qcoh{i}{h}$ by \eqref{eq:defE} and \eqref{eq:defEpm}.  We do this by defining such a form on each $\rel{i,[j]}{h}$ and this requires choosing a cyclic generator.  We shall do this by assuming that $j$ is chosen so that $\ket{i,j;\rhwconfwt{i}{h}}$ is cyclic --- this requires taking $j>h/i$, if $h/i \in [j]$, and otherwise places no constraint on $j$.  A Shapovalov form $\inner{\cdot}{\cdot}_j$ is therefore defined on $\rel{i,[j]}{h}$ by
\begin{equation}
	\begin{gathered}
		\inner[\big]{\ket{i,j;\rhwconfwt{i}{h}}}{\ket{i,j;\rhwconfwt{i}{h}}}_{j} = 1 \\ \text{and} \quad
		\inner[\big]{U\ket{i,j;\rhwconfwt{i}{h}}}{V\ket{i,j;\rhwconfwt{i}{h}}}_{j} = \inner[\big]{\ket{i,j;\rhwconfwt{i}{h}}}{U^\dagger V\ket{i,j;\rhwconfwt{i}{h}}}_{j}, \quad \text{for all}\ U,V \in \envalg{\ahfour}.
	\end{gathered}
\end{equation}
Here, the adjoint of $\envalg{\ahfour}$ is given by the negative of the conjugation automorphism:
\begin{align}
	E_n^\dagger = F_{-n}, \quad F_n^\dagger = E_{-n}, \quad I_n^\dagger = I_{-n}, \quad J_n^\dagger = J_{-n}, \quad K^\dagger = K.
\end{align}
We note that the kernel of the Shapovalov form $\inner{\cdot}{\cdot}_j$ is independent of the choice of $j$ and is exactly the maximal submodule $\maxsub{i,[j]}{h}$.

Extend this to a Shapovalov form $\inner{\cdot}{\cdot}_{i;h}$ on the coherent family $\rcoh{i}{h}$ by taking the direct sum
\begin{align}
	\inner{\cdot}{\cdot}_{i;h} = \bigoplus_{[j] \in \CC/\ZZ} \inner{\cdot}{\cdot}_{j}.
\end{align}
This is equivalent to demanding that distinct eigenspaces of $J_0$ be orthogonal (which is of course consistent with $J_0$ being self-adjoint).

Choose a \pbw\ ordering of $U(\ahfour^- \oplus \CC F_0)$, recalling the triangular decomposition \eqref{triang}, so that mode indices increase to the right.  For each $n \in \NN$, define $P_n$ to be the set of ordered \pbw\ monomials of $U(\ahfour^- \oplus \CC F_0)$ for which the $\ad(J_0)$- and $\ad(L_0)$-eigenvalues are both $-n$.  Since $F_0$ acts injectively on $\rel{i,[j]}{h}$, a basis of the weight space $\rel{i,[j]}{h}(i,j';\rhwconfwt{i}{h}+n)$ is given by the $U\ket{i,j'+n;\rhwconfwt{i}{h}}$, with $U \in P_n$.  Note that $\abs{P_n}$ is clearly finite.

For each $j \in \CC$, we now define the \emph{Shapovalov matrix} for the weight spaces of $\rcoh{i}{h}$.  For $\rcoh{i}{h}(i,j;\rhwconfwt{i}{h}+n)$, it is the $\abs{P_n} \times \abs{P_n}$ matrix
\begin{equation}
	A_{j;n} = \brac[\Big]{\inner[\big]{U\ket{i,j+n;\rhwconfwt{i}{h}}}{V\ket{i,j+n;\rhwconfwt{i}{h}}}_{i;h}}_{U,V \in P_n} .
\end{equation}
The kernel of this matrix is then the weight space $\maxsub{i,[j]}{h}(i,j,\rhwconfwt{i}{h}+n)$.

\begin{lemma} \label{le:ShapovalovRank}
	For each $n\in \NN$, the rank of the Shapovalov matrix $A_{j;n}$ is independent of $j \in \CC$ when the real part of $j$ is sufficiently large.
\end{lemma}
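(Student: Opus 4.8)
The plan is to prove the slightly stronger statement that $\operatorname{rank} A_{j;n}$ is constant for all but finitely many $j\in\CC$, from which the assertion for $\Re(j)$ sufficiently large follows immediately. The whole argument rests on the observation that, with $i$, $h$ and $n$ held fixed, every entry of $A_{j;n}$ is a \emph{polynomial} in $j$. Granting this, one invokes the standard fact that a matrix with polynomial entries has locally constant, indeed generically maximal, rank away from a proper algebraic subset, which in the one variable $j$ is a finite set of points.

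First I would establish the polynomiality. Fix $U,V\in P_n$ and write the corresponding entry as
\begin{equation*}
	\inner[\big]{U\ket{i,j+n;\rhwconfwt{i}{h}}}{V\ket{i,j+n;\rhwconfwt{i}{h}}}_{i;h} = \inner[\big]{\ket{i,j+n;\rhwconfwt{i}{h}}}{U^\dagger V\ket{i,j+n;\rhwconfwt{i}{h}}}_{i;h}.
\end{equation*}
Here $U^\dagger \in \envalg{\ahfour}$ involves only the positive modes and $E_0$, while $V$ involves only the negative modes and $F_0$. Using the relations \eqref{cr:ah4} to normal order $U^\dagger V$, the positive modes annihilate the ground state and the negative modes would carry the result off the diagonal; hence only the purely zero-mode part of $U^\dagger V$ contributes to the coefficient of $\ket{i,j+n;\rhwconfwt{i}{h}}$. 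Crucially, each commutation in this reduction produces only the central term $m\kil{A}{B}K$ (with $K=1$) or a mode of strictly smaller total index, so the reduction terminates and all the scalars it generates are \emph{independent} of $j$. It therefore remains only to evaluate a $j$-independent linear combination of zero-mode monomials of $\ad(J_0)$-weight $0$ on the ground-state line. Since $\rel{i,[j]}{h}$ is induced from $\finrel{i,[j]}{h}$, these act on the relevant one-dimensional weight spaces through
\begin{equation*}
	I_0 = i, \qquad J_0 = j + (\text{integer}), \qquad F_0 E_0 = Q_0 - I_0 J_0 = h - i J_0,
\end{equation*}
each of which is affine-linear in $j$. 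Thus the coefficient of the ground state is a finite sum of finite products of such affine-linear expressions, i.e.\ a polynomial in $j$, as claimed.

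With polynomiality in hand the conclusion is immediate. Let $r$ be the maximal value of $\operatorname{rank} A_{j;n}$ over $j\in\CC$; equivalently, $r$ is the largest integer for which some $r\times r$ minor of $A_{j;n}$ is a nonzero polynomial $\mu(j)$. Then $\operatorname{rank} A_{j;n}=r$ whenever $\mu(j)\ne0$, and $\mu$ has only finitely many zeros, so the rank equals $r$ off a finite subset of $\CC$. As this subset is bounded, $\operatorname{rank} A_{j;n}=r$ for all $j$ with $\Re(j)$ large enough. The only genuine work lies in the normal-ordering bookkeeping of the first step; I expect this to be the main obstacle, but it is entirely routine given the explicit relations \eqref{cr:ah4} and the fact that the $J_0$-string structure of $\finrel{i,[j]}{h}$ makes the surviving zero-mode action manifestly polynomial in $j$.
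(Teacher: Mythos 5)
Your proof is correct and takes essentially the same route as the paper: both rest on the observation that the entries of $A_{j;n}$ are polynomials in $j$, from which generic (hence eventual) constancy of the rank follows by standard linear algebra. The only differences are cosmetic --- you actually verify polynomiality by normal-ordering $U^\dagger V$ and evaluating the surviving zero-mode part on the ground states (the paper merely asserts this), and you extract generic rank constancy from a maximal nonvanishing minor, whereas the paper row-reduces over the rational function field $\CC(j)$ and compares with the row reduction of each evaluation.
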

\begin{proof}
	Let $n\in \NN$, $i \in \CC\backslash\{0\}$ and $h \in \CC$.  Then, for all $j \in \CC$, $A_{j;n}$ is a complex matrix and we let $B_{j,n}$ denote its reduced row echelon form over $\CC$.  Now treat $j$ as a formal indeterminate and write $A_n(j)$ for the Shapovalov matrix in this case.  The entries of $A_n(j)$ are thus complex polynomials in $j$ so we may row-reduce over the field $\CC(j)$ of rational functions in $j$ to obtain the reduced row echelon form $B_n(j)$.

	Now, evaluating $B_n(j)$ at a given $j \in \CC$ will result in $B_{j,n}$ for all but finitely many values of $j$, because row reduction gives only finitely many opportunities to divide by zero.  Similarly, each nonzero entry of $B_n(j)$ will evaluate to a nonzero number for all but finitely many values of $j \in \CC$.  As these matrices have finite size, it follows that the number of nonzero rows of $B_{j,n}$ must be the same as the number of nonzero rows of $B_n(j)$ for all but finitely many values of $j \in \CC$.  The rank of $B_{j,n}$ is therefore equal to the rank of $B_n(j)$ for all but finitely many $j \in \CC$ and so the former is independent of $j$ when the real part of $j$ is sufficiently large.
\end{proof}

\begin{proposition} \label{th:StringLimitTheorem}
	For all $i \in \CC \backslash \{0\}$, $[j] \in \CC/\ZZ$ and $h \in \CC$, the limiting string function $\lim_{n \to \infty} \fstrfun{i,j+n}{\quo{i,[j]}{h}}{\qq}$ exists and is independent of $[j]$.
\end{proposition}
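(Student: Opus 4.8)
The plan is to identify each coefficient of the string function $\fstrfun{i,j+n}{\quo{i,[j]}{h}}{\qq}$ with the rank of a Shapovalov matrix and then read off its large-$n$ behaviour from \cref{le:ShapovalovRank}. Recall that $\quo{i,[j]}{h} = \rel{i,[j]}{h} / \maxsub{i,[j]}{h}$, so that for each $j' \in [j]$ and each $m \in \NN$,
\[
	\dim \quo{i,[j]}{h}(i,j';\rhwconfwt{i}{h}+m) = \dim \rel{i,[j]}{h}(i,j';\rhwconfwt{i}{h}+m) - \dim \maxsub{i,[j]}{h}(i,j';\rhwconfwt{i}{h}+m).
\]
Since $F_0$ acts injectively, the first summand on the \rhs\ equals $\abs{P_m}$, while the second is the dimension of the kernel of the Shapovalov matrix $A_{j';m}$, namely $\abs{P_m} - \operatorname{rank} A_{j';m}$. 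Hence the coefficient of $\qq^{\rhwconfwt{i}{h}+m-1/6}$ in $\fstrfun{i,j'}{\quo{i,[j]}{h}}{\qq}$ is precisely $\operatorname{rank} A_{j';m}$.

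With this identity in hand, I would set $j' = j+n$ and let $n \to \infty$. For each fixed $m$, the ground-state $J_0$-eigenvalue $j+n$ has real part tending to $+\infty$, so \cref{le:ShapovalovRank} shows that $\operatorname{rank} A_{j+n;m}$ is eventually constant in $n$, equal to the generic rank $r_m := \operatorname{rank}_{\CC(j)} A_m(j)$ of the Shapovalov matrix over the field of rational functions. Thus every coefficient of the power series $\fstrfun{i,j+n}{\quo{i,[j]}{h}}{\qq}$ stabilises as $n$ grows, which is exactly the coefficient-wise existence of $\lim_{n \to \infty} \fstrfun{i,j+n}{\quo{i,[j]}{h}}{\qq}$.

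For independence of $[j]$, the key observation is that the limiting $m$th coefficient $r_m$ is the generic rank of the polynomial matrix $A_m(j)$ in the single indeterminate $j$, a quantity that refers to no particular coset. Along any coset the eigenvalues $j+n$ have real part tending to $+\infty$ and therefore eventually avoid the finite exceptional set from the proof of \cref{le:ShapovalovRank}, so the stabilised value is always this same $r_m$. The limit is accordingly the single power series $\sum_{m \ge 0} r_m \, \qq^{\rhwconfwt{i}{h}+m-1/6}$, independent of $[j]$.

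I expect no serious obstacle, since the substantive input is already \cref{le:ShapovalovRank}; the only point demanding care is the bookkeeping that separates the two roles of the indices --- the $J_0$-eigenvalue shift $n$, which is sent to infinity, from the $L_0$-grading $m$, which is held fixed while extracting each coefficient --- together with the observation that stabilisation of $\operatorname{rank} A_{j+n;m}$ need only hold for each $m$ separately, not uniformly in $m$.
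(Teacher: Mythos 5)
Your proposal follows the paper's strategy in its core — identify the coefficients of the string function with ranks of Shapovalov matrices and stabilise them via \cref{le:ShapovalovRank} — but it has a genuine gap: your opening identification $\quo{i,[j]}{h} = \rel{i,[j]}{h}\bigm/\maxsub{i,[j]}{h}$ is only valid when $h/i \notin [j]$. The proposition quantifies over \emph{all} $[j] \in \CC/\ZZ$, and by the convention \eqref{eq:convenience} the coset $[j] = [h/i]$ labels $\quo{i,[h/i]}{h} = \quo{i}{h}^+ = \rel{i}{h}^+\bigm/\sub{i}{h}^+$, the quotient by the sum of submodules meeting the ground states trivially. That submodule is \emph{strictly} smaller than $\maxsub{i}{h}^+$ — indeed $\quo{i}{h}^+$ is reducible (\cref{th:IndecompSES}) precisely because $\maxsub{i}{h}^+$ meets the ground states — whereas the kernel of the Shapovalov matrix $A_{j';m}$ is the weight space of the \emph{maximal} submodule. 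So for this coset your identity ``$m$th coefficient $= \operatorname{rank} A_{j';m}$'' is false as stated, and your argument only establishes existence and $[j]$-independence of the limit over the cosets with $h/i \notin [j]$.

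The excluded coset is not a removable edge case: it is the one the proposition exists to reach. In the proof of \cref{th:MainStringTheorem}\ref{thp:MST2}, the unknown limiting string function of the irreducible $\quo{i,[j]}{h}$ is evaluated by equating it, via this proposition, with that of the reducible $\quo{i}{h}^+$, which is computable from \cref{th:IndecompSES}; without $[h/i]$ in the statement, that comparison collapses. The paper closes the gap with \cref{le:IndDim}: $\sub{i}{h}^+(i,\tfrac{h}{i}+m;\rhwconfwt{i}{h}+n) = \maxsub{i}{h}^+(i,\tfrac{h}{i}+m;\rhwconfwt{i}{h}+n)$ for all $m > n \ge 0$. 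Thus, for each fixed conformal depth $n$, once the $J_0$-eigenvalue lies far enough to the right of $h/i$ the two submodules have identical weight spaces, so your coefficient-equals-rank identity holds \emph{eventually} along the coset $[h/i]$ as well — and eventual validity is all a coefficient-wise limit requires. Inserting this one step (and weakening your identity to ``for all sufficiently large $J_0$-shift'') repairs the proof; the rest of your argument, including the careful separation of the fixed $L_0$-depth $m$ from the growing $J_0$-shift $n$, agrees with the paper's reasoning.
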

\begin{proof}
	For fixed $n \in \NN$, we know from \cref{le:ShapovalovRank} that $\dim \maxsub{i,[j]}{h}(i,j,\rhwconfwt{i}{h}+n)$ is independent of $[j] \in \CC/\ZZ$ when the real part of $j$ is sufficiently large.  On the other hand, $h/i \notin [j]$ gives $\maxsub{i,[j]}{h} = \sub{i,[j]}{h}$ whilst $h/i \in [j]$ gives $\maxsub{i,[j]}{h}(i,j,\rhwconfwt{i}{h}+n) = \sub{i,[j]}{h}(i,j,\rhwconfwt{i}{h}+n)$ when the real part of $j$ is sufficiently large (\cref{le:IndDim}).  It follows that $\dim \sub{i,[j]}{h}(i,j,\rhwconfwt{i}{h} + n)$ is also independent of $[j] \in \CC/\ZZ$ when the real part of $j$ is sufficiently large. Since this is true for all $n$, the limit $\lim_{n \to \infty} \fstrfun{i,j+n}{\quo{i,[j]}{h}}{\qq}$ exists and is independent of $[j] \in \CC/\ZZ$.
\end{proof}

We are now ready to prove the rest of \cref{th:MainStringTheorem}.
\begin{proof}[Proof of \textup{\cref{th:MainStringTheorem}}]
	As part \ref{thp:MST1} was proven in \cref{le:SimpleStringy}, we start with part \ref{thp:MST2}.  This follows by noting that for $[j]=[h/i]$, \cref{th:IndecompSES} gives
	\begin{align}
		\lim_{n \to \infty} \fstrfun{i,h/i+n}{\quo{i,[h/i]}{h}}{\qq}
		&= \lim_{n \to \infty} \fstrfun{i,h/i+n}{\quo{i}{h}^+}{\qq}
		= \lim_{n \to \infty} \left[ \fstrfun{i,h/i+n}{\irr{i,h/i+1}^-}{\qq} + \fstrfun{i,h/i+n}{\irr{i,h/i}^+}{\qq} \right] \\
		&= \lim_{n \to \infty} \fstrfun{i,h/i+n}{\irr{i,h/i+1}^-}{\qq}, \notag
	\end{align}
	because the limiting string function of the \hwm\ $\irr{i,h/i}^+$ is clearly zero.  \cref{th:StringLimitTheorem} then gives
	\begin{align}
		\lim_{n \to \infty} \fstrfun{i,j+n}{\quo{i,[j]}{h}}{\qq} = \lim_{n \to \infty} \fstrfun{i,h/i+n}{\irr{i,h/i+1}^-}{\qq} \quad \text{for all}\ [j] \in \CC/\ZZ,
	\end{align}
	as desired.

	Part \ref{thp:MST3} follows from part \ref{thp:MST1} and the coincidence of the string functions of $\quo{i,[j]}{h}$ and $\quo{i}{h}^+$, the latter being given explicitly in \cref{cor:hwch}.  It follows immediately that
	\begin{equation}
		\ch{\maxsub{i,[j]}{h}} = \ch{\rel{i,[j]}{h}} - \ch{\quo{i,[j]}{h}} = \ch{\rel{i,[j]}{h+\abs{i}}}.
	\end{equation}
	The weight vectors of $\maxsub{i,[j]}{h}$ of conformal weight $\rhwconfwt{i}{h} + \abs{i}$ are therefore \rhwvs.  Since $E_0$ and $F_0$ act injectively on $\maxsub{i,[j]}{h}$ while the \uea\ $\envalg{\ahfour}$ has no zero-divisors, it follows that $\maxsub{i,[j]}{h} \cong \rel{i,[j]}{h+\abs{i}}$, proving part \ref{thp:MST4}.
\end{proof}

We remark that the \rhwvs\ discussed in the proof of part \ref{thp:MST4} of \cref{th:MainStringTheorem} were found in \cite[Thm.~3.4]{BaoRep11}.  Here, we have proven that these vectors generate the maximal submodule $\maxsub{i,[j]}{h}$ of $\rel{i,[j]}{h}$, for $i \in \ZZ \setminus \set{0}$.

\section{Discussion}\label{sec:discus}

We have completed the first step towards a complete analysis of the \nw\ \cft, namely the classification of the irreducible \rhwms\ over the \voa\ $\Hfour$.  We have also computed the characters of all these modules and determined the structure of many reducible but indecomposable $\Hfour$-modules.  A next step would be to compute the modular transformation properties of these characters, following the guidelines set out by the standard module formalism \cite{CreLog13,RidVer14}.  In this framework, we expect the standard $\Hfour$-modules to be the spectral flows of the relaxed Verma modules $\rel{i,j}{h}$, with $i,j,h \in \RR$.

One technical issue is immediately apparent: the characters of the standard modules, given in \cref{prop:vermach}, involve a continuous range of conformal dimensions parametrised linearly by $h \in \RR$.  This is somewhat unusual as one is accustomed to establishing modular S-transforms using quadratic parametrisations.  A similar issue however arises in studies of universal Virasoro and $N=1$ superconformal \vo\ (super)algebra representations, where it was solved by using an alternative parametrisation suggested by the standard free field realisation \cite{MorKac15,CanFusI15}.  Unfortunately, the usual ``Wakimoto-like'' free field realisation of $\Hfour$ \cite{KirStr94} develops a singularity when realising modules with $i=0$, for example the vacuum module.  In particular, the irreducibles $\rel{0,j}{h}$, $h\ne0$, do not seem to be realisable at all.  Nevertheless, one can calculate S-transformation formulae for the standard module characters.  Unfortunately, the $i=0$ singularities result in inconsistent Verlinde computations for the (Grothendieck) fusion coefficients.  We hope to return to this tantalising puzzle in the future.

Another direction to pursue is the construction of logarithmic $\Hfour$-modules, these being reducible but indecomposable representations on which $L_0$ acts nondiagonalisably.  We expect that the simplest examples may be constructed by gluing together the reducible but indecomposable $\Hfour$-modules that we have already analysed.  In particular, we expect that this may be accomplished with appropriate spectral flows of the \rhwms\ $\quo{i}{h}^{\pm}$ of \cref{th:IndecompSES}, in analogy with the known relaxed gluings for affine \voas\ \cite{AdaLat09,RidFus10,CreMod12}.  This might also be possible by instead gluing appropriate quotients of the reducible Verma modules $\ver{i,j}^{\pm}$.  Either way, it would be extremely interesting to investigate whether such logarithmic modules arise naturally as fusion products of the irreducibles classified here.  We also hope to discuss this in the future.

\appendix
\section{A proof of \cref{thm:bjp}} \label{app:proof}

In this \lcnamecref{app:proof}, we prove \cref{thm:bjp} and illustrate it with a simple example.  For convenience, we break the rather lengthy proof into a series of steps.  Let $\ket{i,j}$ denote the (generating) \hwv\ of $\ver{i,j}^+$ and let $\chi$ be a \sv\ of $\ver{i,j}^+$, so $\chi = U \ket{i,j}$ for some $U \in \envalg{\ahfour}$.  From \cref{lem:svs}, there exists $m \in \ZZ$ such that $J_0 \chi = (j+m) \chi$ and $L_0 \chi = (\hwconfwt{i,j} + im) \chi$.  We may (and will) assume that $i \ge \frac{1}{2}$, by \eqref{eq:sflwm}, hence that $m \in \ZZ_{>0}$ ($m=0$ just returns $\chi = \ket{i,j}$).

\subsection*{Step 1.} $\ver{i,j}^+$ is irreducible if $i \notin \ZZ$.

$U$ is thus a linear combination of \pbw-ordered monomials in the $E_{-n}$, $I_{-n}$, $J_{-n}$ and $F_{-n+1}$ with $n\ge1$.  Consider first the condition $0 = I_n \chi = \comm{I_n}{U} \ket{i,j}$, which holds for all $n\ge1$.  As $I_n$ commutes with every mode except $J_{-n}$, the effect of acting with $I_n$ on $\chi$ is to replace $J_{-n}^k$ by $nk J_{-n}^{k-1}$ in each monomial of $U$.  For $k=0$, the monomial is replaced by $0$.  However, making these replacements on monomials with $k>0$ results in a linear combination of linearly independent monomials.  Setting $I_n \chi = 0$ therefore requires that the coefficient of every $k>0$ monomial is zero.  In other words, we may assume that no $J$-modes appear in $U$.

Next, assume that some monomial of $U$ contains a mode $F_{-n}$ and take $n\ge0$ maximal such that $F_{-n}$ does appear (the corresponding monomial comes with nonzero coefficient).  Consider $E_n \chi = 0$, $n\ge0$.  Then, $E_n$ commutes with every mode of $U$ except the $F_{-n+m}$, $0 \le m \le n$, because there are no $J$-modes.  However, $\comm{E_n}{F_{-n+m}} = I_m + n \delta_{m,0}$ commutes with every mode of $U$, again because there are no $J$-modes, and will annihilate $\ket{i,j}$ if $m>0$.  $E_n$ will thus annihilate any monomial of $U \ket{i,j}$ that does not have an $F_{-n}$.  Acting with $E_n$ therefore amounts to replacing $F_{-n}^k$ by $(n+i)k F_{-n}^{k-1}$.  As before, this only gives $0$ if $k=0$ (because $n+i \ge \frac{1}{2}$) and for $k>0$, linear independence of the resulting monomials implies that the coefficient of every monomial with an $F_{-n}$ is $0$.  This contradicts the maximality of $n$ and hence no $F$-modes may appear in $U$.

We now try to repeat the previous argument for $F_n \chi = 0$, $n\ge1$.  So assume that some monomial of $U$ contains an $E_{-n}$ and let $n\ge1$ be maximal such that $E_{-n}$ does appear.  The result of the above argument this time is that $F_n$ replaces $E_{-n}^k$ by $(n-i)k E_{-n}^{k-1}$ throughout.  This is nonzero for all $k>0$, hence no $E$-modes appear in $U$, unless $n=i$.  If $i \in \ZZ_{>0}$, then this fixes the maximal index $n$ such that $E_{-n}$ appears.  However, if $i \notin \ZZ_{>0}$, then $n=i$ is impossible and so only $I$-modes may appear.  But, considering $J_n \chi = 0$, $n\ge1$, rules these out as well.  (Alternatively, a \sv\ with only $I$-modes cannot have the eigenvalues required by \cref{lem:svs}.)  It follows that there is no such singular vector $\chi$, hence $\ver{i,j}^+$ is irreducible, when $i \notin \ZZ_{>0}$.  As \eqref{eq:sflwm} extends this conclusion to all $i \notin \ZZ$, this proves the first assertion of \cref{thm:bjp}.

\subsection*{Step 2.} If $\ver{i,j}^+$ has a \sv\ corresponding to a given $m \in \ZZ_{>0}$, then it is unique up to scalar multiples.

For this, we may assume that $i \in \ZZ_{>0}$.  The arguments above prove that no monomial of $U$ contains a $J_{-n}$ with $n\ge1$, an $F_{-n}$ with $n\ge0$, or an $E_{-n}$ with $n>i$.  We are therefore left with $I$-modes and the $E_{-n}$ for $1 \le n \le i$.  As the $J_0$-eigenvalue of $\chi$ is $j+m$, there must be precisely $m$ $E$-modes in each monomial of $U$.  Moreover, $\comm{L_0}{U} = im$ fixes the sum of the indices of the modes in each monomial.

Let $\partitions{n}$ denote the set of partitions of $n$.  We write $\lambda = [\lambda_1, \lambda_2, \ldots]$ with $\lambda_1 \ge \lambda_2 \ge \cdots$ and shall refer to the $\lambda_k$ as the parts of $\lambda$.  We shall also employ the alternative notation $\lambda = [\lambda_1^{k_1}, \lambda_2^{k_2}, \ldots]$ to indicate that the part $\lambda_1$ has multiplicity $k_1$, and so on.  A subpartition of $\lambda \in \partitions{n}$ is then a partition $\mu$ obtained from $\lambda$ by removing some of its parts.  We denote by $\lambda \setminus \mu$ the partition obtained by removing the parts $\mu_k$ of $\mu$ from those of $\lambda$ (respecting multiplicities).  We shall moreover employ the simplified notation $\lambda \setminus \lambda_k \equiv \lambda \setminus [\lambda_k]$ when removing a single part.

Let $\specparts{i}{m}{\lambda}$ be the set of subpartitions $\mu$ of $\lambda$ having precisely $m$ parts, none of which exceeds $i$.  Then, we may write $\chi$ in the form
\begin{equation} \label{eq:sv}
	\chi = \sum_{\lambda \in \partitions{im}} \sum_{\mu \in \specparts{i}{m}{\lambda}} c(\lambda,\mu) I_{-(\lambda \setminus \mu)} E_{-\mu} \ket{i,j},
\end{equation}
where the $c(\lambda,\mu)$ are unknown coefficients and $A_{-\lambda}$ is shorthand for $A_{-\lambda_1} \cdots A_{-\lambda_{\ell}}$, with $A=E,I,J,F$ and $\lambda$ a partition of precisely $\ell$ parts.  It will be convenient for what follows to set $c(\lambda,\mu) = 0$ if $\mu \notin \specparts{i}{m}{\lambda}$.

Consider now $J_n \chi = 0$, for $n\ge1$.  The result of applying $J_n$ to \eqref{eq:sv} is a sum over $\lambda$ and $\mu$ of terms of the form
\begin{equation}
	\begin{gathered}
		n \mult{n}{\lambda \setminus \mu} c(\lambda,\mu) I_{-(\lambda \setminus \mu \setminus n)} E_{-\mu} \ket{i,j} \\
		\text{and} \quad c(\lambda,\mu) I_{-(\lambda \setminus \mu)} E_{-(\mu \mathrel{-_k} n)} \ket{i,j} \quad \text{($k=1,\dots,m$, $\mu_k > n$)},
	\end{gathered}
\end{equation}
corresponding to commuting $J_n$ with an $I_{-n}$ and an $E_{-\mu_i}$, respectively.  Here, $\mult{n}{\nu}$ denotes the number of parts of $\nu$ equal to $n$ and $\mu \mathrel{-_k} n$ denotes the partition obtained from $\mu$ by subtracting $n$ from $\mu_k$ and reordering parts if necessary.  Linear independence of monomials then gives a constraint for each $\lambda \in \partitions{im}$ and $\mu \in \specparts{i}{m}{\lambda}$:
\begin{equation} \label{eq:mess}
	n \mult{n}{\lambda \setminus \mu} c(\lambda,\mu) + \sum_{k=1}^m c\brac[\big]{\lambda \cup (\mu_k + n) \setminus \mu_k \setminus n, \mu \mathrel{+_k} n} = 0.
\end{equation}
Here, we denote by $\lambda \cup n$ the partition obtained from $\lambda$ by including $n$ as an additional part and $\mu \mathrel{+_k} n$ is the partition obtained from $\lambda$ to adding $n$ to $\lambda_k$ and reordering.  We also understand that a constant of the of the form $c(\lambda' \setminus n,\mu')$ is understood to be $0$ if $n$ is not a part of $\lambda'$.

If $\lambda \ne \mu$, then there exists $n\ge1$ such that $\mult{n}{\lambda \setminus \mu} \ne 0$.  We may therefore solve \eqref{eq:mess} for $c(\lambda,\mu)$ in terms of constants $c(\lambda',\mu')$, where the number of parts of $\lambda'$ is one less than that of $\lambda$.  It follows that the $c(\lambda,\mu)$ are completely determined by the $c(\lambda',\mu')$ in which $\lambda'$ has the minimal possible number of parts.  This obviously occurs when $\lambda' = \mu'$.  However, as $\mu'$ must have $m$ parts and none of its parts may exceed $i$, but their total must be $im$, this forces every part to be $i$.  In other words, every $c(\lambda,\mu)$ is determined by the value of $c([i^m],[i^m])$.  This proves that for each $m \in \ZZ_{>0}$, there is at most one \sv\ $\chi$, up to scalar multiples.

\subsection*{Step 3.} $\ver{i,j}^+$ has a \sv\ for each $m \in \ZZ_{>0}$.

To show that these \svs\ actually exist, it suffices to show existence for $m=1$.  Indeed, if existence holds for $m=1$ then the corresponding \sv\ would generate a submodule of $\ver{i,j}^+$ isomorphic to $\ver{i,j+1}^+$, since $\envalg{\ahfour}$ has no zero-divisors.  As the structure of Verma modules is independent of $j$, $\ver{i,j+1}^+$ also has an $m=1$ \sv\ that generates a submodule isomorphic to $\ver{i,j+2}^+$.  In $\ver{i,j}^+$, this \sv\ corresponds to $m=2$.  This obviously generalises to all $m \in \ZZ_{>0}$.

For $m=1$, the constraint equations \eqref{eq:mess} simplify a little.  Writing $C(\lambda \setminus \lambda_k, \lambda_k) = c(\lambda,\lambda_k)$, $\mu=[\lambda_k]$ and $n=\lambda_{\ell}$, they become
\begin{equation} \label{eq:tidy}
	\lambda_{\ell} \mult{\lambda_{\ell}}{\lambda \setminus \lambda_k} C(\lambda \setminus \lambda_k, \lambda_k) + C(\lambda \setminus \lambda_k \setminus \lambda_{\ell}, \lambda_k + \lambda_{\ell}) = 0.
\end{equation}
As we have seen, these relations show, by recursively removing parts, that $C(\lambda \setminus \lambda_k, \lambda_k)$ is proportional to $C(\varnothing,i)$, where $\varnothing$ denotes the unique partition of $0$, with nonzero proportionality constant.

We first demonstrate that the proportionality constants do not depend on the order in which one removes parts.  If they did, then this would force $C(\varnothing,i) = 0$ and the \sv\ $\chi$ would not exist.  Removing $\lambda_{\ell}$ and then $\lambda_{\ell'}$ from $\lambda$, we have
\begin{equation}
	C(\lambda \setminus \lambda_k, \lambda_k) = \frac{C(\lambda \setminus \lambda_k \setminus \lambda_{\ell} \setminus \lambda_{\ell'}, \lambda_k + \lambda_{\ell} + \lambda_{\ell'})}{\lambda_{\ell} \lambda_{\ell'} \mult{\lambda_{\ell}}{\lambda \setminus \lambda_k} \mult{\lambda_{\ell'}}{\lambda \setminus \lambda_k \setminus \lambda_{\ell}}},
\end{equation}
which is symmetric under $\ell \leftrightarrow \ell'$ if
\begin{equation} \label{eq:mult=mult}
	\mult{\lambda_{\ell}}{\lambda \setminus \lambda_k} \mult{\lambda_{\ell'}}{\lambda \setminus \lambda_k \setminus \lambda_{\ell}} = \mult{\lambda_{\ell'}}{\lambda \setminus \lambda_k} \mult{\lambda_{\ell}}{\lambda \setminus \lambda_k \setminus \lambda_{\ell'}}.
\end{equation}
If $\lambda_{\ell} = \lambda_{\ell'}$, then  \eqref{eq:mult=mult} obviously holds.  But, $\lambda_{\ell} \ne \lambda_{\ell'}$ implies that $\mult{\lambda_{\ell}}{\lambda \setminus \lambda_k} = \mult{\lambda_{\ell}}{\lambda \setminus \lambda_k \setminus \lambda_{\ell'}}$ and $\mult{\lambda_{\ell'}}{\lambda \setminus \lambda_k \setminus \lambda_{\ell}} = \mult{\lambda_{\ell'}}{\lambda \setminus \lambda_k}$, hence \eqref{eq:mult=mult} also holds in this case.  We conclude that $C(\varnothing,i)$ is a free parameter in the solutions of \eqref{eq:tidy}.

To finish the proof of existence of an $m=1$ \sv, we show that a (nonzero) solution of \eqref{eq:tidy} corresponds to a $\chi$ that is annihilated by $\ahfour^+$.  As \eqref{eq:tidy} was deduced from $J_n \chi = 0$, $n\ge1$, and $\ahfour^+$ is generated by $E_0$, $F_1$ and these $J_n$, we only have to check that the solution of \eqref{eq:tidy} we have obtained gives a $\chi = U \ket{i,j}$ satisfying $E_0 \chi = F_1 \chi = 0$.  Since $U$ contains only $I$- and $E$-modes, the former is trivially satisfied.

We therefore consider $-F_1 \chi = 0$ (adding the minus sign for convenience).  Writing the $m=1$ version of \eqref{eq:sv} in the form
\begin{equation} \label{eq:sv'}
	\chi = \sum_{\lambda \in \partitions{i}} \sum_{\lambda_k = 1}^i C(\lambda \setminus \lambda_k,\lambda_k) I_{-(\lambda \setminus \lambda_k)} E_{-\lambda_k} \ket{i,j},
\end{equation}
we see that acting with $-F_1$ amounts to replacing each $E_{-\lambda_k}$ by $I_{-(\lambda_k-1)}$, if $\lambda_k>1$, and by $(i-1)$, if $\lambda_k=1$.  The constraint equations derived from linear independence are therefore
\begin{equation} \label{eq:-f1}
	(i-1) C(\lambda \setminus 1,1) + \sum_{\lambda_{\ell}=1}^{i-1} C(\lambda \setminus \lambda_{\ell} \setminus 1, \lambda_{\ell}+1) = 0.
\end{equation}
Now, substitute $\lambda_k=1$ into \eqref{eq:tidy} to obtain
\begin{equation}
	\lambda_{\ell} \mult{\lambda_{\ell}}{\lambda \setminus 1} C(\lambda \setminus 1, 1) + C(\lambda \setminus \lambda_{\ell} \setminus 1, \lambda_{\ell} + 1) = 0.
\end{equation}
Summing over $\lambda_{\ell}$ from $1$ to $i-1$ then gives \eqref{eq:-f1}, demonstrating that the constraints derived from $-F_1 \chi = 0$ already follow from those derived from $J_n \chi = 0$.  This proves that there is indeed a unique \sv\ $\chi$ corresponding to $m=1$.  Again, \eqref{eq:sflwm} extends this conclusion from $i \in \ZZ_{>0}$ to all $i \in \ZZ$ and so we have established the second assertion of \cref{thm:bjp}.

\medskip

Before tackling the third and last assertion, we detail the existence of \svs\ in the case $i=4$ to illustrate the arguments used above.  In this case, the \sv\ generating the maximal submodule of $\ver{4,j}^+$ has the form $\chi = U \ket{i,j}$ with
\begin{align}
	U = C(\varnothing,4) E_{-4} &+ C([1],3) I_{-1} E_{-3} + C([2],2) I_{-2} E_{-2} + C([1,1],2) I_{-1}^2 E_{-2} \\
	&+ C([3],1) I_{-3} E_{-1} + C([2,1],1) I_{-2} I_{-1} E_{-1} + C([1,1,1],1) I_{-1}^3 E_{-1}. \notag
\end{align}
It is clear that $E_0 \chi = 0$.  From $J_3 \chi = 0$, we obtain
\begin{subequations}
	\begin{equation} \label{eq:j3}
		C(\varnothing,4) + 3 C([3],1) = 0,
	\end{equation}
	whilst $J_2 \chi = 0$ gives
	\begin{equation} \label{eq:j2}
		C(\varnothing,4) + 2 C([2],2) = 0 \quad \text{and} \quad C([1],3) + 2 C([2,1],1) = 0
	\end{equation}
	and $J_1 \chi = 0$ yields instead
	\begin{equation} \label{eq:j1}
		\begin{aligned}
			C(\varnothing,4) + C([1],3) &= 0, & C([1],3) + 2 C([1,1],2) &= 0, \\
			C([2],2) + C([2,1],1) &= 0 & \text{and} \quad C([1,1],2) + 3 C([1,1,1],1) &= 0.
		\end{aligned}
	\end{equation}
\end{subequations}
On the other hand, $F_1 \chi = 0$ instead results in
\begin{equation} \label{eq:f1}
	C(\varnothing,4) + 3 C([3],1) = 0, \quad
	C([1],3) + C([2],2) + 3 C([2,1],1) = 0 \quad \text{and} \quad
	C([1,1],2) + 3 C([1,1,1],1) = 0.
\end{equation}
Note that there is only one part to remove from $[3]$, hence the first equation of \eqref{eq:f1} matches \eqref{eq:j3}.  Similarly, there is only one way to remove a part from $[1,1,1]$, hence the third equation of \eqref{eq:f1} matches the fourth equation of \eqref{eq:j1}.  Finally, there are two ways to remove a part from $[2,1]$, hence the second equation of \eqref{eq:f1} is the sum of the second equation of \eqref{eq:j2} and the third equation of \eqref{eq:j1}.  For completeness, the \sv\ $\chi = U \ket{i,j}$ is explicitly determined (when $C(\varnothing,4) = 1$) by taking
\begin{equation} \label{eq:svi=4}
	U = E_{-4} - I_{-1} E_{-3} - \frac{1}{2} I_{-2} E_{-2} + \frac{1}{2} I_{-1}^2 E_{-2} - \frac{1}{3} I_{-3} E_{-1} + \frac{1}{2} I_{-2} I_{-1} E_{-1} - \frac{1}{6} I_{-1}^3 E_{-1}.
\end{equation}
Every singular vector of $\ver{4,j}^+$ therefore has the form $U^m \ket{i,j}$, for $m \in \NN$.

As an aside, it is actually quite easy to solve the constraints \eqref{eq:tidy} in general.  Writing $\lambda \setminus \lambda_k = [\mu_1^{k_1}, \mu_2^{k_2}, \ldots]$, we find that
\begin{equation} \label{eq:closedformsoln}
	C(\lambda \setminus \lambda_k, \lambda_k) = \frac{(-1)^{k_1+k_2+\cdots}}{k_1! k_2! \cdots \: \mu_1^{k_1} \mu_2^{k_2} \cdots} C(\varnothing,i).
\end{equation}
Substituting into \eqref{eq:sv'} then gives a closed-form formula for $\chi$ when $i \in \ZZ_{>0}$.  \eqref{eq:sflwm} may then be used to obtain a similar formula for $i \in \ZZ_{\le0}$.  It is easy to verify that \eqref{eq:closedformsoln} reproduces \eqref{eq:svi=4} when $i=4$ (and $C(\varnothing,4) = 1$).

\medskip
It remains to prove the final assertion of \cref{thm:bjp}, namely that the maximal submodule of $\ver{i,j}^+$ is generated by the \sv\ $\chi$ corresponding to $m=1$.  Let $M$ denote the submodule of $\ver{i,j}^+$ generated by $\chi$.

\subsection*{Step 4.} The \hwm\ $\ver{i,j}^+ / M$ has no \svs, except multiples of the image $\overline{\ket{i,j}}$ of the \hwv\ $\ket{i,j}$ of $\ver{i,j}^+$, hence it is irreducible.

So, suppose that $\overline{\psi}$ is a \sv\ of $\ver{i,j}^+ / M$.  Without loss of generality, we may choose a representative $\psi \in \ver{i,j}^+$ of $\overline{\psi}$ that is a weight vector.  $\psi$ is then a subsingular vector of $\ver{i,j}^+$ satisfying $\psi \notin M$ and we have $\psi = U \ket{i,j}$, for some $U \in \envalg{\ahfour}$.  Since $\overline{\psi}$ has $J_0$-eigenvalue $j+m$ and $L_0$-eigenvalue $\hwconfwt{i,j} + im$ for some $m \in \NN$, by \cref{lem:svs}, the same is true for $\psi$.

The argument now generalises that used above to analyse the existence of $\chi$.  We start by assuming that $U$ has a $J$-mode and let $n\ge1$ be maximal such that $J_{-n}$ appears.  Choosing an appropriate \pbw-ordering, there exists $k>0$ such that we may write $U = J_{-n}^k V + W$, where $V$ is a linear combination of monomials with no $J_{-n}$-modes and $W$ is a linear combination of monomials with fewer than $k$ $J_{-n}$-modes.  Applying $I_n^k$, we have $I_n^k \overline{\psi} = 0$ and $I_n^k \psi = I_n^k U \ket{i,j} = n^k k! V \ket{i,j}$, from which we conclude that $V \ket{i,j} \in M$ and so $J_{-n}^k V \ket{i,j} \in M$.  In other words, $W \ket{i,j}$ is another representative of $\overline{\psi}$ in which all monomials have fewer than $k$ $J_{-n}$-modes.  Iterating this argument shows that one can choose the representative $\psi$ so that it contains no $J$-modes.

Repeating this argument with $J_{-n}$ replaced by $F_{-n}$, $n\ge0$, and $I_n$ replaced by $E_n$, we see as before (because $n+i\ne0$) that $\psi$ may be chosen so that it also contains no $F$-modes.  With $J_{-n}$ replaced by $E_{-n}$, $n>i$, and $I_n$ replaced by $F_i$, we similarly learn (from $n-i\ne0$) that $\psi$ may be refined to eliminate any $E_{-n}$ modes with $n>i$.  If we can likewise eliminate any $E_{-i}$-modes, then continuing this argument will rule out all $E$-modes.  But, the $J_0$-eigenvalue of $\psi$ is then only consistent with $m=0$.  Thus, $U$ is constant and $\psi$ is proportional to $\ket{i,j}$, as desired.

To eliminate $E_{-i}$-modes, suppose that there is one and let $k\ge1$ be the maximal power with which $E_{-i}$ appears.  Then, we may write $U = V E_{-i}^k + W$, where $V$ is a linear combination of monomials with no $E_{-i}$ and $W$ is a linear combination of monomials with fewer than $k$ $E_{-i}$-modes.  We next recall that the \sv\ corresponding to $m=k$ has the form $\chi^k = U^k \ket{i,j}$ with $U^k = E_{-i}^k + W'$, where $W'$ is a linear combination of monomials in the $E_{-n}$-modes, with $1 \le n \le i$ and with fewer than $k$ $E_{-i}$-modes, and the $I$-modes.  Since $\chi^k \in M$, $V \chi^k \in M$ and so
\begin{equation}
	\overline{\psi} = \overline{\psi - V \chi^k} = \overline{(W - VW') \ket{i,j}},
\end{equation}
that is we can replace the representative $\psi = U \ket{i,j}$ by $U' \ket{i,j}$, noting that $U' = W-VW'$ is a linear combination of monomials each of which has fewer than $k$ $E_{-i}$-modes.  Iterating this construction therefore allows us to find a representative with no $E_{-i}$-modes at all.  As noted above, this proves that $\ver{i,j}^+ / M$ is irreducible for $i\in\ZZ_{>0}$.  Because \eqref{eq:sflwm} easily extends this conclusion to all $i\in\ZZ$, the proof of \cref{thm:bjp} is complete.

\section{An irreducibility proof} \label{app:i=0proof}

This \lcnamecref{app:i=0proof} is devoted to proving that the \rhw\ $\Hfour$-modules $\rel{0,[j]}{h}$ are irreducible, for all $[j] \in \CC / \ZZ$ and $h \ne 0$.  The proof is similar in spirit to that of the irreducibility of the $\irr{0,j}$, itself a corollary of \cref{lem:svs}, but is slightly more involved.  We note that $\rhwconfwt{0}{h} = h$ is the conformal weight of the generating \rhwvs\ $\ket{0,j';h} \in \rel{0,[j]}{h}$, $j' \in [j]$.

Suppose that $\rel{0,[j]}{h}$ is reducible.  Then, there is a \rhwv\ $v \in \rel{0,[j]}{h}$ of conformal weight strictly greater than $h$.  Since $L_0$ acts on \rhwvs\ as $Q_0 = F_0 E_0 + I_0 J_0$ and $I_0$ acts as $0$, it follows that $v$ is an eigenvector of $F_0 E_0$ with eigenvalue greater than $h$.  However, we shall show that the only eigenvalue of $F_0 E_0$ on $\rel{0,[j]}{h}$ is $h$, a contradiction.

Write $v$ as a linear combination of \pbw-ordered monomials of the form
\begin{equation} \label{eq:rpbw}
	F_{-\lambda} E_{-\mu} J_{-\nu} I_{-\rho} \ket{0,j';h},
\end{equation}
where $j' \in [j]$ and $\lambda$, $\mu$, $\nu$ and $\rho$ are partitions (we use the same notational conventions here as in \cref{app:proof}, see \eqref{eq:sv} and the surrounding text).  Acting with $F_0 E_0$ on such a monomial returns $h$ times the monomial plus a number of commutator terms.  These fall into two classes for which we observe the following simple facts:
\begin{itemize}
	\item Commuting either $F_0$ with an $E$-mode or $E_0$ with an $F$-mode increases the number of $I$-modes by $1$.  These $I$-modes commute with every negative mode, so the number of $I$-modes strictly increases when the result is written as a linear combination of monomials \eqref{eq:rpbw}.
	\item Commuting either $F_0$ or $E_0$ with a $J$-mode decreases the number of $J$-modes by $1$.  The result may require further commutation to represent it as a linear combination of monomials \eqref{eq:rpbw}.  However, this will never increase the number of $J$-modes because $J$ is not in $\comm{\hfour}{\hfour}$.  The number of $J$-modes thus strictly decreases in the each summand of the result, when written as a linear combination of monomials \eqref{eq:rpbw}.
\end{itemize}
Noting that any leftover $F_0$ or $E_0$ modes may be commuted to the right and thus change $j'$, this completely accounts for the action of $F_0 E_0$ on the monomials \eqref{eq:rpbw}.

Order these monomials so that the number of $J$-modes weakly increases and, when the number of $J$-modes is the same, so that the number of $I$-modes weakly decreases.  Then, the matrix representing $F_0 E_0$ in the weight space of $\rel{0,[j]}{h}$ containing $v$ is upper-triangular, with $h$ as every diagonal entry.  This is the desired contradiction, hence the proof is complete.

\flushleft
\providecommand{\opp}[2]{\textsf{arXiv:\mbox{#2}/#1}}
\providecommand{\pp}[2]{\textsf{arXiv:#1 [\mbox{#2}]}}

\end{document}